\documentclass{tlp}

\usepackage{amsmath}
\usepackage{amssymb}
\usepackage{latexsym}
\usepackage{stmaryrd}
\usepackage{hyperref}

\hypersetup{breaklinks}

\newcommand{\mysim}{\sim\!}

\newtheorem{lemma}{Lemma}
\newtheorem{proposition}[lemma]{Proposition}
\newtheorem{theorem}{Theorem}
\newtheorem{definition}{Definition}
\newtheorem{example}{Example}
\newtheorem{remark}{Remark}
\newtheorem{corollary}{Corollary}

\newcommand{\lsem}{\mbox{$\lbrack\!\lbrack$}}
\newcommand{\rsem}{\mbox{$\rbrack\!\rbrack$}}
\newcommand{\ctrue}{\mathsf{true}}
\newcommand{\cfalse}{\mathsf{false}}

\newcommand{\mnot}{\sim\!\!}

\newcommand{\per}{\mbox{{\tt .}}}
\newcommand{\lpa}{\mbox{{\tt (}}}
\newcommand{\rpa}{\mbox{{\tt )}}}

\newcommand{\C}{\mathcal{C}}
\newcommand{\eval}{\mathsf{eval}}

\title[Minimum Model Semantics for Extensional Higher-order LP with Negation]{Minimum Model Semantics for Extensional Higher-order Logic Programming with Negation\thanks{This research is being supported by the Greek General
Secretariat for Research and Technology, the National Development
Agency of Hungary, and the European Commission (European Regional
Development Fund) under a Greek-Hungarian intergovernmental programme of
Scientific and Technological collaboration. Project
title: ``Extensions and Applications of Fixed Point Theory for
Non-Monotonic Formalisms''. It is also supported
by grant no. ANN 110883 from the National Foundation of Hungary for
Scientific Research.}}
\author[A. Charalambidis, Z. \'{E}sik and P. Rondogiannis]{%
        Angelos Charalambidis \\
        Department of Informatics \& Telecommunications,
        University of Athens, Greece\\
        \email{a.charalambidis@di.uoa.gr}\and
        Zolt\'{a}n \'{E}sik \\
        Department of Computer Science,
        University of Szeged, Hungary \\
        \email{ze@inf.u-szeged.hu} \and
        Panos Rondogiannis \\
        Department of Informatics \& Telecommunications,
        University of Athens, Greece\\
        \email{prondo@di.uoa.gr}}

\usepackage{comment}

\begin{document}

\maketitle

\begin{abstract}
Extensional higher-order logic programming has been introduced as a generalization
of classical logic programming. An important characteristic of this paradigm is that it preserves
all the well-known properties of traditional logic programming. In this paper we consider the semantics
of negation in the context of the new paradigm. Using some recent results from non-monotonic fixed-point
theory, we demonstrate that every higher-order logic program with negation has a unique {\em minimum}
infinite-valued model. In this way we obtain the first purely model-theoretic semantics for negation
in extensional higher-order logic programming. Using our approach, we resolve an old paradox that was
introduced by W. W. Wadge in order to demonstrate the semantic difficulties of higher-order logic programming.
\end{abstract}

\section{Introduction}
Extensional higher-order logic programming has been proposed~\cite{Wadge91,CHRW10,CHRW13}
as a generalization of classical logic programming. The key idea behind this paradigm is
that all predicates defined in a program denote sets and therefore one can use standard extensional
set theory in order to understand their meaning and to reason about them. For example, consider the
following simple extensional higher-order program~\cite{CHRW13} stating that a band (musical
ensemble) is a group that has at least a singer and a guitarist:
\[
\begin{array}{l}
\mbox{\tt
band(B):-singer(S),B(S),guitarist(G),B(G).}
\end{array}
\]
Suppose that we also have a database of musicians:
\[
\begin{array}{l}
\mbox{\tt singer(sally).}\\
\mbox{\tt singer(steve).}\\
\mbox{\tt guitarist(george).}\\
\mbox{\tt guitarist(grace).}
\end{array}
\]
We can then ask the query
\(
\mbox{\tt ?-band(B)}
\).
Since predicates denote sets, an extensional higher-order language will return answers such
as ${\tt B} =\{{\tt sally,george}\} \cup {\tt L}$, having the meaning that every set that
contains at least {\tt sally} and {\tt george} is a potential band.

A consequence of the set-theoretic nature of extensional higher-order logic programming is the fact
that its semantics and its proof theory smoothly extend the corresponding ones for traditional
(ie., first-order) logic programming. In particular, every program has a unique minimum Herbrand
model which is the greatest lower bound of all Herbrand models of the program and the least
fixed-point of an immediate consequence operator associated with the program; moreover, there exists an SLD
resolution proof-procedure which is sound and complete with respect to the minimum model semantics.

One basic property of all the higher-order predicates that can be defined in the language of~\cite{CHRW13}
is that they are {\em monotonic}. Intuitively, the monotonicity property states that if a
predicate is true of a relation {\tt R} then it is also true of every superset of
{\tt R}. In the above example, it is clear that if {\tt band} is true of a relation {\tt B} then
it is also true of any band that is a superset of {\tt B}. However, there are many natural higher-order
predicates that are {\em non-monotonic}. Consider for example a predicate {\tt single\_singer\_band}
which (apparently) defines a band that has a unique singer:
\[
\begin{array}{l}
\mbox{\tt single\_singer\_band(B):-band(B),not two\_singers(B).}\\
\mbox{\tt two\_singers(B):-B(S1),B(S2),singer(S1),singer(S2),not(S1=S2).}
\end{array}
\]

The predicate {\tt single\_singer\_band} is obviously non-monotonic since it is satisfied
by the set $\{{\tt sally,george}\}$ but not by the set $\{{\tt sally,steve,george}\}$.
In other words, the semantics of~\cite{CHRW13} is not applicable to this extended
higher-order language. We are therefore facing the same problem that researchers faced
more than twenty years ago when they attempted to provide a sensible semantics to classical
logic programs with negation; the only difference is that the problem now reappears in
a much more general context, namely in the context of higher-order logic programming.

The solution we adopt is relatively simple to state (but non-trivial to materialize): it suffices
to generalize the well-founded construction~\cite{GelderRS91,Przymusinski89} to higher-order programs.
For this purpose, we have found convenient to use a relatively recent logical characterization
of the well-founded semantics through an infinite-valued logic~\cite{tocl05} and also the
recent abstract fixed-point theory for non-monotonic functions developed in~\cite{pls2013,tocl14}. This brings
us to the two main contributions of the present paper:
\begin{itemize}
\item We provide the first model-theoretic semantics for extensional higher-order
      logic programming with negation. In this way we initiate the study of a non-monotonic
      formalism that is much broader than classical logic programming with negation.

\item We provide further evidence that extensional higher-order logic programming is a
      natural generalization of classical logic programming, by showing that all the well-known
      properties of the latter also hold for the new paradigm.
\end{itemize}

In the next section we provide an introduction to the proposed semantics for
higher-order logic programming and the remaining sections provide the formal development
of this semantics. The proofs of all the results have been moved to corresponding appendices.

\section{An Intuitive Overview of the Proposed Semantics}\label{section2}
The starting point for the semantics proposed in this paper is the
{\em infinite-valued semantics} for ordinary logic programs with negation,
as introduced in~\cite{tocl05}. In this section we give an intuitive introduction
to the infinite-valued approach and discuss how it can be extended to the
higher-order case.

The infinite-valued approach was introduced in order to provide a {\em minimum model}
semantics to logic programs with negation. As we are going to see shortly,
it is compatible with the well-founded semantics but it is purely model-theoretic\footnote{In
the same way that the equilibrium logic approach of~\cite{Pearce96} gives a purely logical reconstruction
of the stable model semantics.}. The main idea of this approach can be explained with a simple example. 
Consider the program:
\[
\begin{array}{lll}
 {\tt p} & \leftarrow & \\
 {\tt r} & \leftarrow & \mysim {\tt p}\\
 {\tt s} & \leftarrow & \mysim {\tt q}
\end{array}
\]
Under the well-founded semantics both {\tt p} and {\tt s} receive the value {\em True}.
However, {\tt p} is in some sense ``truer'' than {\tt s}. Namely, {\tt p}
is true because there is a rule which says so, whereas {\tt s} is true only because we
are never obliged to make {\tt q} true. In a sense, {\tt s} is true only by default. This
gave the idea of adding a ``default'' truth value $T_1$ just below the ``real'' truth $T_0$,
and (by symmetry) a weaker false value $F_1$ just above (``not as false as'') the real false
$F_0$. We can then understand negation-as-failure as combining ordinary negation with a weakening.
Thus $\mysim F_0 = T_1$ and $\mysim T_0 = F_1$. Since negations can effectively be iterated, the
infinite-valued approach requires a whole  sequence $\ldots,T_3, T_2, T_1$ of weaker and weaker truth values
below $T_0$ but above the neutral value $0$; and a mirror image sequence $F_1, F_2,F_3,\ldots$
above $F_0$ and below $0$. In fact, to capture the well-founded model
in full generality, we need a $T_\alpha$ and a $F_\alpha$ for every countable ordinal $\alpha$.
In other words, the underlying truth domain of the infinite-valued approach is:
\[
F_0 < F_1 <\!\cdots\!< F_\omega <\!\cdots\!< F_\alpha < \!\cdots\! < 0 <\!\cdots\!< T_\alpha <\!\cdots\!< T_\omega <\!\cdots\!< T_1 < T_0
\]
As shown in~\cite{tocl05}, every logic program $\mathsf{P}$ with negation
has a unique {\em minimum} infinite-valued model $M_{\mathsf{P}}$. Notice that
$M_\mathsf{P}$ is minimum with respect to a relation $\sqsubseteq$ which compares
interpretations in a stage-by-stage manner (see~\cite{tocl05} for details).
As it is proven in~\cite{tocl05}, if we collapse all the
$T_\alpha$ and $F_\alpha$ to {\em True} and {\em False} respectively,
we get the well-founded model. For the example program above, the minimum model is
$\{({\tt p},T_0),({\tt q},F_0),({\tt r},F_1),({\tt s},T_1)\}$. This collapses to
$\{({\tt p},\mbox{\em True}),({\tt q},\mbox{\em False}),({\tt r},\mbox{\em False}),({\tt s},\mbox{\em True})\}$,
which is the well-founded model of the program.

As shown in~\cite{tocl05}, one can compute the minimum infinite-valued
model as the least fixed point of an operator $T_\mathsf{P}$. It can easily be seen
that $T_\mathsf{P}$ is {\em not} monotonic with respect to the ordering relation $\sqsubseteq$
and therefore one can not obtain the least fixed point using the classical Knaster-Tarski
theorem. However, $T_\mathsf{P}$ possesses some form of {\em partial monotonicity}. More specifically,
as it is shown in~\cite{tocl05,tocl14},
$T_\mathsf{P}$ is $\alpha$-monotonic for all countable ordinals $\alpha$,
a property that guarantees the existence of the least fixed point. Loosely speaking,
the property of $T_\mathsf{P}$ being $\alpha$-monotonic means that the operator
is monotonic when we restrict attention to interpretations
that are equal for all levels of truth values that are less than $\alpha$. In other words,
$T_\mathsf{P}$ is monotonic in stages (but not overall monotonic).

The $T_\mathsf{P}$ operator is a higher-order function since it takes as argument an
interpretation and returns an interpretation as
the result. This observation leads us to the main concept that helps us extend the infinite-valued
semantics to the higher-order case. The key idea is to demonstrate that the denotation of
every expression of predicate type in our higher-order language, is $\alpha$-monotonic
for all ordinals $\alpha$ (see Lemma~\ref{monotonicity-of-semantics}). This
property ensures that the immediate consequence operator of every program is also $\alpha$-monotonic
for all $\alpha$ (see Lemma~\ref{tp-a-monotonic}), and therefore it has a least
fixed-point which is a model of the program. Actually, this same model can also be
obtained as the greatest lower bound of all the Herbrand models of the program
(see Theorem~\ref{model-intersection}, the {\em model intersection theorem}).
In other words, the semantics of extensional higher-order logic programming
with negation preserves all the familiar properties of classical logic programming
and can therefore be considered as a natural generalization of the latter.

\section{Non-Monotonic Fixed Point Theory}\label{section3}
The main results of the paper will be obtained using some recent results
from non-monotonic fixed point theory~\cite{pls2013,tocl14}.
The key objective of this area of research is to obtain novel fixed point
results regarding functions that are not necessarily monotonic. In particular,
the results obtained in~\cite{pls2013,tocl14} generalize the classical
results of monotonic fixed-point theory (namely Kleene's theorem and also
the Knaster-Tarski theorem). In this section we provide the necessary material
from~\cite{pls2013,tocl14} that will be needed in the next sections.

Suppose that $(L,\leq)$ is a complete lattice in which the least upper bound
operation is denoted by $\bigvee$ and the least element is denoted by $\perp$.
Let $\kappa >0$ be a fixed ordinal. We assume that for each ordinal $\alpha<\kappa$,
there exists a preordering $\sqsubseteq_\alpha$ on $L$. We write $x =_\alpha y$ iff
$x \sqsubseteq_\alpha y$ and $y \sqsubseteq_\alpha x$. We define $x \sqsubset_\alpha y$
iff $x \sqsubseteq_\alpha y$ but $x =_\alpha y$ does not hold. Moreover, we write
$x \sqsubset y$ iff $x \sqsubset_\alpha y$ for some $\alpha < \kappa$.
Finally, we define $x \sqsubseteq y$ iff $x \sqsubset y$ or $x = y$.

Let $x \in L$ and $\alpha < \kappa$. We define
$(x]_\alpha = \{ y : \forall \beta < \alpha\ x =_\beta y \}$.

A key property that will be used throughout the paper is that if the above preordering relations
satisfy certain simple axioms, then the structure $(L,\sqsubseteq)$ is a
complete lattice; moreover, every function $f:L\rightarrow L$ that satisfies some
restricted form of monotonicity, has a least fixed point. These ideas
are formalized by the following definitions and results.
\begin{definition}\label{def-basic model}
Let $(L,\leq)$ be a complete lattice equipped with preorderings $\sqsubseteq_\alpha$
for all $\alpha < \kappa$. Then, $L$ will be called a {\em basic model} if and only
if it satisfies the following axioms:
\begin{enumerate}
\item \label{axiom1}
For all $x,y\in L$ and all $\alpha < \beta < \kappa$, if $x \sqsubseteq_\beta y$ then $x =_\alpha y$.

\item \label{axiom2}
For all $x,y\in L$, if $x =_\alpha y$ for all $\alpha < \kappa$ then $x = y$.

\item \label{axiom3}
Let $x \in L$ and $\alpha < \kappa$. Let $X \subseteq (x]_\alpha$. Then, there exists
$y$ (denoted by $\bigsqcup_\alpha X$) such that $X \sqsubseteq_\alpha y$\footnote{We write
$X \sqsubseteq_\alpha y$ iff forall $x \in X$ it holds $x \sqsubseteq_\alpha y$.}
and for all
$z \in (x]_\alpha$ such that $X \sqsubseteq_\alpha z$, it holds $y \sqsubseteq_\alpha z$
and $y \leq z$.

\item  \label{axiom6}
If $x_j, y_j \in L$ and $x_j \sqsubseteq_\alpha y_j$ for all $j \in J$ then
$\bigvee\{x_j : j \in J\} \sqsubseteq_\alpha \bigvee\{ y_j: j \in J \}$.


\end{enumerate}
\end{definition}

\begin{lemma}\label{basic-model-lattice}
Let $L$ be a basic model. Then, $(L, \sqsubseteq)$ is a complete lattice.
\end{lemma}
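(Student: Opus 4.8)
The plan is to show directly that every subset $X \subseteq L$ has a least upper bound with respect to $\sqsubseteq$, which (together with the dual, or a standard lattice-theoretic argument) gives that $(L,\sqsubseteq)$ is a complete lattice. First I would unwind the definitions: $x \sqsubseteq y$ means $x = y$ or there is some $\alpha < \kappa$ with $x \sqsubset_\alpha y$, i.e.\ $x \sqsubseteq_\alpha y$ while $x =_\alpha y$ fails. By Axiom~\ref{axiom1}, $x \sqsubset_\alpha y$ forces $x =_\beta y$ for every $\beta < \alpha$, so intuitively $\sqsubseteq$ is a ``lexicographic'' combination of the stagewise preorders: comparisons are decided at the first level $\alpha$ at which the two elements differ. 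I would first record that $\sqsubseteq$ is a partial order; antisymmetry uses Axiom~\ref{axiom2} (if $x \sqsubseteq y$ and $y \sqsubseteq x$ with $x \ne y$, one gets $x \sqsubset_\alpha y$ and $y \sqsubset_\beta x$, and comparing at $\min(\alpha,\beta)$ via Axiom~\ref{axiom1} yields a contradiction), and transitivity is a similar first-difference-level case analysis.

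Next I would construct $\bigsqcup X$ by transfinite recursion on the level $\alpha < \kappa$. Suppose inductively that we have pinned down the behaviour of the candidate supremum $s$ at all levels $\beta < \alpha$, in the sense that we have an element $x_\alpha \in L$ (the ``approximation up to level $\alpha$'') such that every sufficiently-good upper bound of $X$ agrees with $x_\alpha$ below $\alpha$; concretely $x_\alpha$ should lie in the set $(x_\alpha]_\alpha$ reached so far. Restricting attention to those $x \in X$ that are $=_\beta$-equal to $x_\alpha$ for all $\beta < \alpha$ — this is exactly the slice $X \cap (x_\alpha]_\alpha$ — I would apply Axiom~\ref{axiom3} to obtain $y = \bigsqcup_\alpha (X \cap (x_\alpha]_\alpha)$, the $\sqsubseteq_\alpha$-least element of $(x_\alpha]_\alpha$ above that slice, and use it to refine the approximation at level $\alpha$. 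At limit stages one has to check the refinements cohere; here Axiom~\ref{axiom1} guarantees that later refinements do not disturb earlier levels, and Axiom~\ref{axiom2} guarantees that after running through all $\alpha < \kappa$ the resulting element is uniquely determined — call it $s$.

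Then I would verify that $s$ is indeed the $\sqsubseteq$-least upper bound of $X$. Upper bound: given $x \in X$, either $x$ agrees with $s$ at every level (so $x = s$ by Axiom~\ref{axiom2}) or there is a first level $\alpha$ of disagreement; at that level the construction forced $x \sqsubseteq_\alpha s$ and $x =_\beta s$ for $\beta<\alpha$, hence $x \sqsubset_\alpha s$ and so $x \sqsubset s \sqsubseteq s$. Leastness: if $z$ is any $\sqsubseteq$-upper bound of $X$, an induction on levels shows $s$ and $z$ agree below the first level where they differ and that $s \sqsubseteq_\alpha z$ there, which is where the minimality clause of Axiom~\ref{axiom3} (the ``for all $z \in (x]_\alpha$ with $X \sqsubseteq_\alpha z$ it holds $y \sqsubseteq_\alpha z$'') does the work; a subtlety is to argue that such a $z$ really does land in the relevant slice $(x_\alpha]_\alpha$ so that Axiom~\ref{axiom3} applies to it — this follows because $z$ being an upper bound of all of $X$ in particular dominates the already-fixed approximation. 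Finally, the existence of arbitrary joins in a poset with a top element (which $L$ has: $\bigsqcup L$) yields arbitrary meets and hence a complete lattice; alternatively one runs the dual construction for $\bigsqcap X$.

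The main obstacle I expect is the transfinite recursion in the second paragraph: keeping the bookkeeping straight so that the level-$\alpha$ refinement genuinely stays inside the slice $(x_\alpha]_\alpha$ demanded by Axiom~\ref{axiom3}, that limit stages cohere, and — most delicately — that the resulting $s$ is not merely \emph{an} upper bound but the \emph{least} one, which forces one to track, uniformly in the comparison element $z$, the hypothesis ``$X \sqsubseteq_\alpha z$'' needed to invoke the minimality half of Axiom~\ref{axiom3}. Axiom~\ref{axiom6} is not needed for this lemma (it is there for the fixed-point theorem that follows) but I would double-check it is not secretly required to show the slices are nonempty or closed under the operations used.
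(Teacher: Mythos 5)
Your overall plan --- build $\bigsqcup X$ by a transfinite, level-by-level refinement using Axiom~\ref{axiom3}, then get meets from joins --- is the same stagewise construction that underlies this lemma (the paper itself imports it from the cited fixed-point-theory work and only sketches the construction, for the dual $\bigsqcap$, in the appendix to Theorem~\ref{model-intersection}). But there is a genuine gap at exactly the point you wave at with ``Axiom~\ref{axiom2} guarantees the resulting element is uniquely determined.'' Axiom~\ref{axiom2} gives only \emph{uniqueness}: nothing in your recursion produces an element of $L$ that agrees with the stage approximations $x_\alpha$ at every level, neither at limit stages of the recursion nor at the end. This is where the order $\leq$ of the underlying complete lattice must enter: the construction takes $x_\alpha = \bigsqcup_\alpha Y_\alpha$ when the slice $Y_\alpha$ is nonempty (the ``$y \leq z$'' clause of Axiom~\ref{axiom3} is what makes the sequence $\leq$-increasing), sets $x_\alpha = \bigvee_{\beta<\alpha} x_\beta$ otherwise, and finally defines $\bigsqcup X = \bigvee_{\alpha<\kappa} x_\alpha$. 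One must then prove that this $\leq$-join still satisfies $x_\beta =_\beta \bigsqcup X$ for every $\beta$; that coherence statement is precisely Lemma~\ref{lub-is-equal-alpha} (Lemma~3.18 of the cited work), and its proof uses Axiom~\ref{axiom6} (from $x_\beta \sqsubseteq_\beta x_\gamma$ for all $\gamma$ in the tail one gets $x_\beta \sqsubseteq_\beta \bigvee_\gamma x_\gamma$, and conversely). So your closing remark that Axiom~\ref{axiom6} is not needed for this lemma is exactly backwards: it is needed for the limit/closure step you left implicit, and without it your recursion never yields an element of $L$ at all.

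A second, related omission: you never treat the case where the slice $X \cap (x_\alpha]_\alpha$ is empty, and this genuinely happens (e.g.\ in $V$ with $X=\{F_1,F_2,\ldots\}$ the slice empties at level $\omega$, and the correct supremum $F_\omega$ is obtained only as $\bigvee_{\beta<\omega} x_\beta$, not from Axiom~\ref{axiom3} applied to the empty slice). Once the construction is repaired along these lines, your verification that $s$ is an upper bound and your first-difference-level argument for leastness (including the check that an arbitrary upper bound $z$ lands in the relevant cone so that the minimality clause of Axiom~\ref{axiom3} applies) are sound, as is the reduction of arbitrary meets to arbitrary joins.
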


\begin{definition}
Let $A,B$ be basic models and let $\alpha < \kappa$.
A function $f : A \rightarrow B$ is called $\alpha$-monotonic if for all
$x, y \in A$ if $x \sqsubseteq_\alpha y$ then $f(x) \sqsubseteq_\alpha f(y)$.
\end{definition}

It should be noted that even if a function $f$ is $\alpha$-monotonic for all $\alpha < \kappa$, then
it need not be necessarily monotonic with respect to the relation $\sqsubseteq$ (for a
counterexample, see~\cite[Example 5.7, pages 453--454]{tocl05}). Therefore, the standard
tools of classical fixed point theory (such as the Knaster-Tarski theorem), do not suffice
in order to find the least fixed point of $f$ with respect to the relation $\sqsubseteq$.

Let us denote by $[A  \stackrel{m}{\rightarrow} B]$
the set of functions from $A$ to $B$ that are $\alpha$-monotonic
for all $\alpha < \kappa$.

\begin{theorem}\label{a-monotonic-has-least-fixpoint}
Let $L$ be a basic model and assume that $f\in [L  \stackrel{m}{\rightarrow} L]$.
Then, $f$ has a $\sqsubseteq$-least pre-fixed point,
which is also the $\sqsubseteq$-least fixed point of $f$.
\end{theorem}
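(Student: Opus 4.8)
The plan is to construct the least pre-fixed point of $f$ by a transfinite recursion that pins the candidate down \emph{one truth-level at a time}: at each level $\alpha<\kappa$ one runs the classical Knaster--Tarski argument inside a complete lattice built from $(x]_\alpha$ and $\sqsubseteq_\alpha$, and the stages are glued together with $\bigvee$ at limit ordinals and at the end.

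First, from Axioms~\ref{axiom1} and~\ref{axiom3} one records that, for every $x\in L$ and $\alpha<\kappa$, the set $(x]_\alpha$ is closed under $\bigsqcup_\alpha$ (if $X\subseteq(x]_\alpha$ then $X\sqsubseteq_\alpha\bigsqcup_\alpha X$, and Axiom~\ref{axiom1} forces $\bigsqcup_\alpha X=_\beta x$ for all $\beta<\alpha$) and dually has $\sqsubseteq_\alpha$-greatest lower bounds; so $\big((x]_\alpha,\sqsubseteq_\alpha\big)$ is, modulo $=_\alpha$, a complete lattice in which Axiom~\ref{axiom3} even lets us pick $\leq$-least representatives of the bounds. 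Call $u\in L$ a \emph{partial fixed point up to $\alpha$} if $f(u)=_\beta u$ for all $\beta<\alpha$. The key local fact is a \emph{restriction lemma}: if $u$ is such, then for $\beta<\alpha$ and $z=_\beta u$ the $\beta$-monotonicity of $f$ gives $f(z)=_\beta f(u)=_\beta u$, so $f$ maps $(u]_\alpha$ into itself modulo $=_\alpha$ and is $\alpha$-monotonic there; hence by Knaster--Tarski it has a $\sqsubseteq_\alpha$-least pre-fixed point in $(u]_\alpha$, which is in fact a fixed point at level $\alpha$ and therefore a partial fixed point up to $\alpha+1$.

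Next, define $(m_\alpha)_{\alpha<\kappa}$ by $m_{<\alpha}=\bigvee_{\beta<\alpha}m_\beta$ (so $m_{<0}=\perp$) and, given that $m_{<\alpha}$ is a partial fixed point up to $\alpha$, letting $m_\alpha$ be the level-$\alpha$ least pre-fixed point in $(m_{<\alpha}]_\alpha$ supplied by the restriction lemma (with the $\leq$-least representative from Axiom~\ref{axiom3}). The invariant carried along is that the $m_\beta$ form a ``tower'' ($m_\beta=_\delta m_\gamma$ for $\delta\le\beta\le\gamma<\kappa$) that is $\leq$-comparable in the manner needed to control the sups. At a limit $\lambda$, Axiom~\ref{axiom6} is precisely what makes $m_{<\lambda}=_\beta m_\beta$ for all $\beta<\lambda$, so that $f(m_{<\lambda})=_\beta f(m_\beta)=_\beta m_\beta=_\beta m_{<\lambda}$ by $\beta$-monotonicity and the recursion continues; Axiom~\ref{axiom1} is invoked throughout to propagate level agreement downward. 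Finally put $m=\bigvee_{\alpha<\kappa}m_\alpha$; Axiom~\ref{axiom6} and the tower invariant give $m=_\alpha m_\alpha$ for every $\alpha$, hence $f(m)=_\alpha f(m_\alpha)=_\alpha m_\alpha=_\alpha m$ for all $\alpha$, and Axiom~\ref{axiom2} upgrades this to $f(m)=m$. For minimality, take any $x$ with $f(x)\sqsubseteq x$; if $x\ne m$ let $\alpha$ be least with $m\ne_\alpha x$, so $x\in(m_{<\alpha}]_\alpha$ by the tower. Then $f(x)\sqsubseteq_\alpha x$: the cases $f(x)=x$ and $f(x)\sqsubset_\gamma x$ with $\gamma\ge\alpha$ give it directly, while $\gamma<\alpha$ is impossible since $x=_\gamma m_{<\alpha}$ and $m_{<\alpha}$ being a partial fixed point up to $\alpha$ forces $f(x)=_\gamma x$. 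So $\sqsubseteq_\alpha$-minimality of $m_\alpha$ gives $m_\alpha\sqsubseteq_\alpha x$, and with $m=_\alpha m_\alpha\ne_\alpha x$ this gives $m\sqsubset_\alpha x$, i.e. $m\sqsubseteq x$. As every fixed point is a pre-fixed point, $m$ is the $\sqsubseteq$-least fixed point as well.

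The main obstacle is the transfinite bookkeeping rather than any isolated clever step: one has to choose the tower invariant for $(m_\alpha)_{\alpha<\kappa}$ so that it survives limit stages, which amounts to reconciling the lattice order $\leq$ --- the one that governs $\bigvee$, hence the limit and final stages --- with the family of level preorders $\sqsubseteq_\alpha$. The four axioms each play a pinpointed role here (Axiom~\ref{axiom3} for the per-level complete lattice, Axiom~\ref{axiom1} for downward propagation of $=_\beta$, Axiom~\ref{axiom6} for pushing $\sqsubseteq_\alpha$ through $\bigvee$, Axiom~\ref{axiom2} for turning ``agrees at every level'' into genuine equality), and the real content of the proof is verifying that together they are exactly enough.
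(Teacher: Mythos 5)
Your overall strategy is the right one, but note first that the paper does not actually prove Theorem~\ref{a-monotonic-has-least-fixpoint}: it imports it from \cite{pls2013,tocl14} and only sketches the underlying stage-by-stage construction after Theorem~\ref{least-model}. Measured against that construction, your restriction lemma, the per-level Knaster--Tarski argument inside $(m_{<\alpha}]_\alpha$ modulo $=_\alpha$, the use of Axiom~\ref{axiom2} at the end, and the final minimality case analysis are all sound in outline.

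The genuine gap is in the gluing steps. You assert that at a limit $\lambda$ ``Axiom~\ref{axiom6} is precisely what makes $m_{<\lambda}=_\beta m_\beta$'', and likewise that $m=_\alpha m_\alpha$ at the end. Axiom~\ref{axiom6} together with the $=_\delta$-tower is not enough: already in $V$ the sequence $m_n=T_{n+1}$ satisfies $m_n=_\delta m_{n'}$ for all $\delta\le n\le n'$, yet $\bigvee_n m_n=T_1$ and $T_1=_1 m_1=T_2$ fails. What you need is exactly the statement recalled in the paper as Lemma~\ref{lub-is-equal-alpha} (\cite[Lemma 3.18]{tocl14}), whose hypotheses include $m_\beta\le m_\gamma$ for $\beta<\gamma$ in the lattice order. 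You gesture at this (``$\leq$-comparable in the manner needed'') but never state or prove it, and your choice of stage elements does not obviously provide it: Axiom~\ref{axiom3} makes your Knaster--Tarski representative (a $\bigsqcup_\alpha$ of lower bounds) $\le$-minimal among the relevant $\sqsubseteq_\alpha$-upper bounds --- in particular $m_\alpha\le q$ for every pre-fixed point $q$ in the cone --- which points in the opposite direction from the needed $m_{<\alpha}\le m_\alpha$ (hence $m_\beta\le m_\alpha$). The construction in \cite{tocl14} secures this invariant by building the stage-$\alpha$ approximant through an iteration that only modifies information of order $\ge\alpha$ (compare the truncations $x|_\alpha$ with $x|_\alpha\le x|_\beta$ recalled at the start of Appendix~A, and the ``reset the unstabilized values to $F_\alpha$'' description after Theorem~\ref{least-model}), carrying $\le$-monotonicity of the stages explicitly through the transfinite induction. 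Until you formulate and prove that invariant for your stages, the limit step, the final identity $m=_\alpha m_\alpha$, the claim that $m_{<\alpha}$ is a partial fixed point up to $\alpha$ (already at successors, since you define $m_{<\alpha}$ by $\bigvee$), and the step $x\in(m_{<\alpha}]_\alpha$ in the minimality argument are all unsupported.
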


The above theorem will be our main tool for establishing the fact that the immediate
consequence operator of any extensional higher order logic program, always has a least
fixed point, which is a model of the program.

\section{The Syntax of the Higher-Order Language ${\cal H}$}\label{section4}
In this section we introduce the higher-order language ${\cal H}$,
which extends classical first-order logic programming to a
higher-order setting. The language ${\cal H}$ is based on a simple
type system that supports two base types: $o$, the boolean domain,
and $\iota$, the domain of individuals (data objects). The composite
types are partitioned into three classes: functional (assigned to
individual constants, individual variables and function symbols),
predicate (assigned to predicate constants and variables) and argument
(assigned to parameters of predicates).
\begin{definition}
A type can either be functional, predicate, argument, denoted by
$\sigma$, $\pi$ and $\rho$ respectively and defined as:
\begin{align*}
\sigma & :=  \iota \mid \iota \rightarrow \sigma  \\
\pi   & := o \mid \rho \rightarrow \pi  \\
\rho & :=  \iota \mid \pi
\end{align*}
We will use $\tau$ to denote an arbitrary type (either functional, predicate or argument
one).
\end{definition}

As usual, the binary operator $\rightarrow$ is right-associative. A
functional type that is different than $\iota$ will often be written
in the form $\iota^n \rightarrow \iota$, $n\geq 1$ (which stands for
$\iota \rightarrow \iota \rightarrow \cdots \rightarrow \iota$ $(n+1)$-times).
Moreover, it can be easily seen that every predicate type $\pi$ can be written uniquely
in the form $\rho_1 \rightarrow \cdots \rightarrow \rho_n \rightarrow o$,
$n\geq 0$ (for $n=0$ we assume that $\pi=o$).
We can now proceed to the definition of ${\cal H}$, starting from
its alphabet and continuing with expressions and program clauses:
\begin{definition}
The \emph{alphabet} of the higher-order language ${\cal H}$ consists
of the following:
\begin{enumerate}
\item {\em Predicate variables} of every predicate type $\pi$
      (denoted by capital letters such as
      $\mathsf{P,Q,R,\ldots}$).

\item {\em Predicate constants} of every predicate type $\pi$
      (denoted by lowercase letters such as
      $\mathsf{p,q,r,\ldots}$).

\item {\em Individual variables} of type $\iota$
      (denoted by capital letters such as
      $\mathsf{X,Y,Z,\ldots}$).

\item {\em Individual constants} of type $\iota$ (denoted by lowercase
      letters such as $\mathsf{a,b,c,\ldots}$).

\item {\em Function symbols} of every functional type $\sigma \neq \iota$
      (denoted by lowercase letters such as $\mathsf{f,g,h,\ldots}$).

\item The following {\em logical constant symbols}: the constants
      $\cfalse$ and $\ctrue$ of type $o$; the equality  constant $\approx$
      of type $\iota \rightarrow \iota \rightarrow o$; the generalized disjunction
      and conjunction constants $\bigvee_{\pi}$ and $\bigwedge_{\pi}$ of type
      $\pi \rightarrow \pi \rightarrow \pi$, for every predicate type $\pi$;
      the generalized inverse implication constants $\leftarrow_{\pi}$, of type
      $\pi \rightarrow \pi \rightarrow o$, for every predicate type
      $\pi$; the existential quantifier $\exists_{\rho}$, of type
      $(\rho \rightarrow o)\rightarrow o$, for every argument type
      $\rho$; the negation constant $\mnot\,$ of type $o \rightarrow o$.

\item The {\em abstractor} $\lambda$ and the parentheses ``$\mathsf{(}$'' and ``$\mathsf{)}$''.
\end{enumerate}
The set consisting of the predicate variables and the individual
variables of ${\cal H}$ will be called the set of {\em argument
variables} of ${\cal H}$. Argument variables will be usually denoted
by $\mathsf{V}$ and its subscripted versions.
\end{definition}
\begin{definition}

The set of {\em expressions} of the higher-order language ${\cal H}$
is defined as follows:
\begin{enumerate}
\item Every predicate variable (respectively, predicate constant)
      of type $\pi$ is an expression of type $\pi$; every
      individual variable (respectively, individual constant) of
      type $\iota$ is an expression of type $\iota$; the
      propositional constants $\cfalse$ and $\ctrue$ are
      expressions of type $o$.

\item If $\mathsf{f}$ is an $n$-ary function symbol and $\mathsf{E}_1,
      \ldots, \mathsf{E}_n$ are expressions of type $\iota$,
      then $(\mathsf{f}\,\,\mathsf{E}_1 \cdots \mathsf{E}_n)$ is an
      expression of type $\iota$.

\item If $\mathsf{E}_1$ is an expression of type $\rho \rightarrow
      \pi$ and $\mathsf{E}_2$ is an expression of type $\rho$, then
      $(\mathsf{E}_1\ \mathsf{E}_2)$ is an expression of type $\pi$.

\item If $\mathsf{V}$ is an argument variable of type $\rho$ and
      $\mathsf{E}$ is an expression of type $\pi$, then
      $(\lambda\mathsf{V}.\mathsf{E})$ is an expression of type
      $\rho \rightarrow \pi$.

\item If $\mathsf{E}_1,\mathsf{E}_2$ are expressions of type $\pi$,
      then  $(\mathsf{E}_1 \bigwedge_{\pi} \mathsf{E}_2)$ and
      $(\mathsf{E}_1 \bigvee_{\pi} \mathsf{E}_2)$ are expressions
      of type $\pi$.

\item If $\mathsf{E}$ is an expression of type $o$, then
      $(\mnot \mathsf{E})$ is an expression of type $o$.

\item If $\mathsf{E}_1,\mathsf{E}_2$ are expressions of type $\iota$,
      then $(\mathsf{E}_1 \approx \mathsf{E}_2)$ is an expression
      of type $o$.

\item If $\mathsf{E}$ is an expression of type $o$ and $\mathsf{V}$ is
      a variable of type $\rho$ then $(\exists_{\rho}
      \mathsf{V}\,\mathsf{E})$ is an expression of type $o$.
\end{enumerate}
\end{definition}

To denote that an expression $\mathsf{E}$ has type $\tau$ we will write
$\mathsf{E} : \tau$.
The notions of \emph{free} and \emph{bound} variables of an
expression are defined as usual. An expression is called
\emph{closed} if it does not contain any free variables.

\begin{definition}
A {\em program clause} is a clause $\mathsf{p} \leftarrow_\pi \mathsf{E}$  where
$\mathsf{p}$ is a predicate constant of type $\pi$ and $\mathsf{E}$
is a closed expression of type $\pi$.
A {\em program} is a finite set of program clauses.
\end{definition}
\begin{example}\label{subset-predicate-example}
The {\tt subset} predicate can be defined in ${\cal H}$ as follows:
%
%
%
%
\begin{eqnarray*}
\mathtt{subset} \leftarrow_{\pi\rightarrow\pi\rightarrow o}
    \lambda\mathtt{P}\per \lambda\mathtt{Q}.
        \mnot\exists\mathtt{X}\lpa\lpa\mathtt{P}\ \mathtt{X}\rpa \wedge
                               \mnot\lpa\mathtt{Q}\ \mathtt{X}\rpa\rpa
\end{eqnarray*}
The $\mathtt{subset}$ predicate is defined by a $\lambda$-expression (which obviates
the need to have the formal parameters of the predicate in the left-hand side
of the definition). Moreover, in the right-hand side we have an explicit existential
quantifier for the variable $\mathtt{X}$ (in Prolog, if a variable appears in the
body of a clause but not in the head, then it is implicitly existentially quantified).
\end{example}

\section{The Semantics of the Higher-Order Language ${\cal H}$}\label{section5}
In this section we specify the semantics of ${\cal H}$. We start with the semantics
of types and proceed to the semantics of expressions.

The meaning of the boolean type $o$ is equal to a partially ordered set
$(V,\leq)$ of truth values. The number of truth values of $V$ will be specified
with respect to an ordinal $\kappa>0$. All the results of the paper hold for {\em every}
initial selection of $\kappa$. The set $(V,\leq)$ is therefore
\[
F_0 < F_1 <\!\cdots\!< F_\alpha < \!\cdots\! < 0 <\!\cdots\!< T_\alpha <\!\cdots\!< T_1 < T_0
\]
where $\alpha < \kappa$.

\begin{definition}
The {\em order} of a truth value is defined as follows:
$order(T_\alpha) = \alpha$, $order(F_\alpha)  = \alpha$ and $order(0) = +\infty$.
\end{definition}

We can now define the meaning of all the types of our language as well as the corresponding
relations $\leq$ and $\sqsubseteq_\alpha$. This is performed in the following definitions:
\begin{definition}
We define the relation $\sqsubseteq_\alpha$ on the set $V$ for each $\alpha<\kappa$ as follows:
\begin{enumerate}
\item $x \sqsubseteq_\alpha x$ if $order(x) < \alpha$;
\item $F_\alpha \sqsubseteq_\alpha x$ and $x \sqsubseteq_\alpha T_\alpha$ if $order(x) \geq \alpha$;
\item $x \sqsubseteq_\alpha y$ if $order(x), order(y) > \alpha$.
\end{enumerate}
\end{definition}
Notice that $x =_\alpha y$ iff either $x = y$ or $order(x)>\alpha$
and $order(y) > \alpha$.


\begin{definition}\label{semantics-of-types}
Let $D$ be a nonempty set. Then:
\begin{itemize}
\item $\lsem \iota \rsem_D = D$, and $\leq_\iota$ is the trivial partial order
      such that $d \leq_\iota d$, for all $d \in D$;
\item $\lsem \iota^n \rightarrow \iota \rsem_D = D^n \rightarrow D$.
      A partial order in this case will not be needed;
\item $\lsem o \rsem_D = V$, and $\leq_o$ is the partial order of $V$;
\item $\lsem \iota \rightarrow \pi \rsem_D = D \rightarrow \lsem \pi \rsem_D$, and
      $\leq_{\iota \rightarrow \pi}$ is the partial order defined as follows:
      for all $f,g\in \lsem \iota \rightarrow \pi \rsem_D$, $f \leq_{\iota\rightarrow \pi} g$
      iff $f(d) \leq_\pi g(d)$ for all $d \in D$;
\item $\lsem \pi_1 \rightarrow \pi_2 \rsem_D = [\lsem \pi_1 \rsem_D \stackrel{m}{\rightarrow}  \lsem \pi_2 \rsem_D]$,
      and $\leq_{\pi_1 \rightarrow \pi_2}$ is the partial order
      defined as follows:
      for all $f,g\in \lsem \pi_1 \rightarrow \pi_2 \rsem_D$, $f \leq_{\pi_1 \rightarrow \pi_2} g$
      iff $f(d) \leq_{\pi_2} g(d)$ for all $d \in \lsem \pi_1 \rsem_D$.
\end{itemize}
\end{definition}

The subscripts in the above partial orders will often be omitted when they are obvious from context.

\begin{definition}
Let $D$ be a nonempty set and $\alpha < \kappa$. Then:
\begin{itemize}
\item The relation $\sqsubseteq_\alpha$ on $\lsem o \rsem_D$ is the relation $\sqsubseteq_\alpha$ on $V$.
\item The relation $\sqsubseteq_\alpha$ on $\lsem \rho \rightarrow \pi \rsem_D$ is defined as follows:
      $f \sqsubseteq_\alpha g$ iff
      $f(d) \sqsubseteq_\alpha g(d)$ for all $d \in \lsem \rho \rsem_D$. Moreover,
      $f \sqsubset_\alpha g$ iff $f \sqsubseteq_\alpha g$
      and $f(d) \sqsubset_\alpha g(d)$ for some $d \in \lsem \rho \rsem_D$.
\end{itemize}
\end{definition}


The following lemma expresses the fact that all the predicate types correspond to
semantic domains that are both complete lattices and basic models:
\begin{lemma}\label{pi-is-model}
Let $D$ be a nonempty set and $\pi$ be a predicate type. Then,
$(\lsem \pi \rsem_D, \leq_\pi)$ is a complete lattice and a basic model.
\end{lemma}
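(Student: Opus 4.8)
The plan is to proceed by induction on the structure of the predicate type $\pi$, establishing simultaneously that $(\lsem \pi \rsem_D, \leq_\pi)$ is a complete lattice and that it satisfies the four axioms of Definition~\ref{def-basic model}. The base case $\pi = o$ is essentially already done for us: $(\lsem o \rsem_D, \leq_o) = (V,\leq)$ is a finite-height (well-ordered-plus-mirror) chain, hence trivially a complete lattice, and the verification that $V$ with the relations $\sqsubseteq_\alpha$ satisfies Axioms~\ref{axiom1}--\ref{axiom6} is a direct unfolding of the definitions using the characterization ``$x =_\alpha y$ iff $x=y$ or both orders exceed $\alpha$''; in particular Axiom~\ref{axiom3} amounts to observing that within a set $X\subseteq (x]_\alpha$ the $\sqsubseteq_\alpha$-least upper bound is $T_\alpha$ unless every element of $X$ already equals the common value $x$, and Axiom~\ref{axiom6} holds because $\bigvee$ on a chain is just $\max$, which is monotone.

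For the inductive step there are two forms of predicate type to handle, $\iota \rightarrow \pi$ and $\pi_1 \rightarrow \pi_2$, and both reduce to a single ``pointwise lifting'' principle: if $Z$ is any index set (either the raw set $D$, or the carrier $\lsem \pi_1 \rsem_D$) and $M$ is a basic model that is also a complete lattice, then the set of functions $Z \rightarrow M$ — in the second case restricted to those that are $\alpha$-monotonic for all $\alpha$, i.e.\ $[\lsem\pi_1\rsem_D \stackrel{m}{\rightarrow} \lsem\pi_2\rsem_D]$ — with the pointwise orders $\leq$ and $\sqsubseteq_\alpha$ is again a basic model and a complete lattice. That $Z\rightarrow M$ is a complete lattice under the pointwise order is classical. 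The four axioms lift pointwise almost verbatim: Axioms~\ref{axiom1} and~\ref{axiom2} are purely ``for all $d$'' statements and transfer immediately; for Axiom~\ref{axiom3} one defines $(\bigsqcup_\alpha X)(d) = \bigsqcup_\alpha \{f(d) : f \in X\}$ using the axiom in $M$ at each argument $d$, checks that this function lies in $(g]_\alpha$ when $X\subseteq(g]_\alpha$, and verifies the required least-upper-bound property argumentwise; Axiom~\ref{axiom6} is immediate since $\bigvee$ is computed pointwise.

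The step I expect to be the main obstacle is the restriction to the monotone function space in the case $\pi_1 \rightarrow \pi_2$: I must check that the pointwise suprema and the $\sqsubseteq_\alpha$-suprema constructed above actually stay inside $[\lsem\pi_1\rsem_D \stackrel{m}{\rightarrow} \lsem\pi_2\rsem_D]$, i.e.\ that a pointwise $\bigvee$ (resp.\ $\bigsqcup_\alpha$) of a family of $\beta$-monotonic functions is again $\beta$-monotonic for every $\beta<\kappa$. For the ordinary $\bigvee$ this is exactly the content of the (commented-out) Proposition~\ref{lub-of-a-monotonic}, and the argument is the same: given $x \sqsubseteq_\beta y$ in $\lsem\pi_1\rsem_D$, each $f_j$ gives $f_j(x)\sqsubseteq_\beta f_j(y)$, and then Axiom~\ref{axiom6} in $\lsem\pi_2\rsem_D$ yields $\bigvee_j f_j(x) \sqsubseteq_\beta \bigvee_j f_j(y)$. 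For the $\sqsubseteq_\alpha$-supremum the reasoning is analogous, combining $\beta$-monotonicity of each member with the uniqueness/compatibility clauses of Axiom~\ref{axiom3} in the codomain; care is needed here to ensure the argument works uniformly for $\beta<\alpha$, $\beta=\alpha$, and $\beta>\alpha$. Once this closure-under-suprema fact is in hand, the fact that the restricted function space is itself a complete lattice (it is closed under arbitrary pointwise meets and joins, hence a complete sublattice) and a basic model follows, completing the induction.
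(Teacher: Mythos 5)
Your route is the paper's route: structural induction on $\pi$, with $V$ as the base case, the pointwise lifting of a basic model along an arbitrary set (the paper's Lemma~\ref{function-model}, imported from the underlying fixed-point theory) for $\iota\rightarrow\pi'$, and, for $\pi_1\rightarrow\pi_2$, the verification that $[\lsem \pi_1 \rsem_D \stackrel{m}{\rightarrow} \lsem \pi_2 \rsem_D]$ inherits the pointwise structure of the full function space because it is closed under pointwise $\bigvee$ and pointwise $\bigsqcup_\beta$. Closure under $\bigvee$ is indeed Proposition~\ref{lub-of-a-monotonic}. But closure under $\bigsqcup_\beta$ --- the paper's Lemma~\ref{lub-beta-lemma} --- is exactly where the real work sits, and your claim that the reasoning is ``analogous'' to the $\bigvee$ case is the gap: Axiom~\ref{axiom6} speaks only about $\bigvee$, and no axiom says that $\bigsqcup_\beta$ preserves $\sqsubseteq_\alpha$ when $\alpha\neq\beta$. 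Only the case $\alpha=\beta$ comes straight from Axiom~\ref{axiom3}. For $\beta<\alpha$ the paper shows the two $\beta$-suprema are literally \emph{equal}, because $x\sqsubseteq_\alpha y$ forces $x=_\beta y$ (Axiom~\ref{axiom1}) and hence $f(x)=_\beta f(y)$ for every member $f$; for $\beta>\alpha$ it reduces the comparison of $\beta$-suprema to that of $\alpha$-suprema via the restriction operators $u|_\gamma$ and Corollary~\ref{cor-appA}. (For nonempty $F$ there is a shortcut for $\beta>\alpha$: fix $f_0\in F$; from $\{f(x):f\in F\}\sqsubseteq_\beta\bigsqcup_\beta\{f(x):f\in F\}$ and Axiom~\ref{axiom1} one gets $\bigsqcup_\beta\{f(x):f\in F\}=_\alpha f_0(x)\sqsubseteq_\alpha f_0(y)=_\alpha\bigsqcup_\beta\{f(y):f\in F\}$.) Either way, these two cases need an actual argument, not an appeal to analogy; they are the substance of the paper's Appendix~A.

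Two smaller corrections. In the base case your description of Axiom~\ref{axiom3} in $V$ is wrong: for $X\subseteq(x]_\alpha$ with $order(x)\geq\alpha$, the required $\bigsqcup_\alpha X$ is $\bigvee X$ when $order(\bigvee X)\leq\alpha$, and $F_{\alpha+1}$ otherwise; it is not $T_\alpha$, since Axiom~\ref{axiom3} also demands $y\leq z$ for every $z\in(x]_\alpha$ with $X\sqsubseteq_\alpha z$, and for $X=\{F_{\alpha+1}\}$ the bound $z=F_{\alpha+1}$ itself rules $T_\alpha$ out. (The paper sidesteps this by citing the earlier work for $V$ being a basic model, so the slip does not affect the architecture, but the verification as you describe it would not go through.) Also, $V$ does not have finite height when $\kappa$ is infinite --- it is complete because it is a complete chain --- and for completeness of $[\lsem\pi_1\rsem_D\stackrel{m}{\rightarrow}\lsem\pi_2\rsem_D]$ you only need closure under pointwise joins, meets being obtainable as joins of lower bounds; the claim that pointwise meets of monotonic functions are again monotonic is not one of the axioms and would need its own proof if you rely on it.
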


We now proceed to formally define the semantics of ${\cal H}$:
\begin{definition}
An intepretation $I$ of ${\cal H}$ consists of:
\begin{enumerate}
\item a nonempty set $D$ called the domain of $I$;
\item an assignment to each individual constant symbol $\mathsf{c}$, of an element
      $I(\mathsf{c}) \in D$;
\item an assignment to each predicate constant $\mathsf{p}:\pi$ of an element
      $I(\mathsf{p}) \in \lsem \pi \rsem_D$;
\item an assignment to each function symbol $\mathsf{f} : \iota^n \to \iota$
      of a function $I(\mathsf{f}) \in D^n\!\rightarrow D$.
\end{enumerate}
\end{definition}

\begin{definition}
Let $D$ be a nonempty set. A  state $s$ of ${\cal H}$ over $D$ is a function
that assigns to each argument variable $\mathsf{V}$ of type $\rho$ of ${\cal H}$,
of an element $s(\mathsf{V}) \in \lsem \rho \rsem_D$.
\end{definition}

\begin{definition}\label{definition-semantics-of-expressions}
Let $I$ be an interpretation of ${\cal H}$, let $D$ be the domain of
$I$, and let $s$ be a state over $D$. Then, the semantics of
expressions of ${\cal H}$ with respect to $I$ and $s$, is defined as
follows:
\begin{enumerate}
\item $\lsem \cfalse \rsem_s (I) = F_0$

\item $\lsem \ctrue \rsem_s (I) = T_0$

\item $\lsem \mathsf{c} \rsem_s (I) = I(\mathsf{c})$, for every
      individual constant $\mathsf{c}$

\item $\lsem \mathsf{p} \rsem_s (I) = I(\mathsf{p})$, for every
      predicate constant $\mathsf{p}$

\item $\lsem \mathsf{V} \rsem_s (I) = s(\mathsf{V})$, for every
      argument variable $\mathsf{V}$

\item $\lsem (\mathsf{f}\,\,\mathsf{E}_1\cdots \mathsf{E}_n) \rsem_s (I) =
      I(\mathsf{f})\,\,\lsem \mathsf{E}_1\rsem_s (I) \cdots \lsem \mathsf{E}_n\rsem_s (I)$,
      for every $n$-ary function symbol $\mathsf{f}$

\item $\lsem \mathsf{(}\mathsf{E}_1\mathsf{E}_2\mathsf{)} \rsem_s
      (I)= \lsem \mathsf{E}_1 \rsem_s (I)(\lsem \mathsf{E}_2\rsem_s(I))$

\item $\lsem \mathsf{(\lambda V.E)} \rsem_s (I) =\lambda d.\lsem
      \mathsf{E}\rsem_{s[\mathsf{V}/d]}(I)$, where $d$ ranges over
      $\lsem type(\mathsf{V})\rsem_D$

\item $\lsem (\mathsf{E}_1 \bigvee_{\pi} \mathsf{E}_2)\rsem_s (I) =
      \bigvee_{\pi}\{\lsem \mathsf{E}_1\rsem_s(I),\lsem \mathsf{E}_2\rsem_s(I)\}$, where
      $\bigvee_{\pi}$ is the least upper bound function on $\lsem \pi
      \rsem_D$

\item $\lsem (\mathsf{E}_1 \bigwedge_{\pi} \mathsf{E}_2)\rsem_s (I) =
      \bigwedge_{\pi}\{\lsem \mathsf{E}_1\rsem_s(I),\lsem \mathsf{E}_2\rsem_s(I)\}$, where
      $\bigwedge_{\pi}$ is the greatest lower bound function on $\lsem \pi
      \rsem_D$
\item $\lsem (\mnot \mathsf{E}) \rsem_s (I) =
          \begin{cases}
              T_{\alpha+1} & \mbox{if $\lsem \mathsf{E} \rsem_s(I) = F_\alpha$} \\
              F_{\alpha+1} & \mbox{if $\lsem \mathsf{E} \rsem_s(I) = T_\alpha$} \\
              0            & \mbox{if $\lsem \mathsf{E} \rsem_s(I) = 0$}
          \end{cases}$

\item $\lsem (\mathsf{E}_1 \,\mathsf{\approx}\, \mathsf{E}_2)\rsem_s (I) = \left\{\begin{array}{ll}
                                               T_0, & \mbox{if $\lsem \mathsf{E}_1 \rsem_s
                                                    (I) = \lsem \mathsf{E}_2 \rsem_s
                                                    (I)$}\\
                                               F_0, & \mbox{otherwise}
                                                   \end{array} \right. $

\item $\lsem (\exists \mathsf{V}\, \mathsf{E}) \rsem_s (I)= \bigvee_{d \in \lsem type(\mathsf{V}) \rsem_D} \lsem \mathsf{E} \rsem_{s[\mathsf{V}/d]}(I)$
\end{enumerate}
\end{definition}

For closed expressions $\mathsf{E}$ we will often write $\lsem
\mathsf{E} \rsem(I)$ instead of $\lsem \mathsf{E} \rsem_s(I)$
(since, in this case, the meaning of $\mathsf{E}$ is independent of
$s$).



\begin{lemma} \label{expressions-well-defined}
Let $\mathsf{E} : \rho$ be an expression and let $D$ be a nonempty set.
Moreover, let $s$ be a state over $D$ and let $I$ be an interpretation over $D$.
Then, $\lsem \mathsf{E} \rsem_s(I) \in \lsem \rho \rsem_D$.
\end{lemma}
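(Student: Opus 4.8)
The plan is to proceed by structural induction on the expression $\mathsf{E}$, following the clauses of Definition~\ref{definition-semantics-of-expressions}. For each syntactic form, I would use the typing rules of the expression formation (the corresponding clause in the definition of expressions) together with Lemma~\ref{pi-is-model} (which tells us that each $\lsem \pi \rsem_D$ is a complete lattice and a basic model) and Definition~\ref{semantics-of-types}. The base cases are immediate: $\cfalse, \ctrue$ map to elements of $V = \lsem o \rsem_D$; an individual constant $\mathsf{c}$ maps to $I(\mathsf{c}) \in D = \lsem \iota \rsem_D$; a predicate constant $\mathsf{p} : \pi$ maps to $I(\mathsf{p}) \in \lsem \pi \rsem_D$ by the definition of interpretation; and an argument variable $\mathsf{V} : \rho$ maps to $s(\mathsf{V}) \in \lsem \rho \rsem_D$ by the definition of state.

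For the inductive cases, most are routine once one unfolds the definitions. For function application $(\mathsf{f}\,\mathsf{E}_1 \cdots \mathsf{E}_n)$, the induction hypothesis gives $\lsem \mathsf{E}_i \rsem_s(I) \in D$, and $I(\mathsf{f}) \in D^n \rightarrow D$, so the result lies in $D = \lsem \iota \rsem_D$. For predicate application $(\mathsf{E}_1\ \mathsf{E}_2)$ with $\mathsf{E}_1 : \rho \rightarrow \pi$ and $\mathsf{E}_2 : \rho$, the induction hypothesis gives $\lsem \mathsf{E}_1 \rsem_s(I) \in \lsem \rho \rightarrow \pi \rsem_D$ and $\lsem \mathsf{E}_2 \rsem_s(I) \in \lsem \rho \rsem_D$; since $\lsem \rho \rightarrow \pi \rsem_D$ consists of functions from $\lsem \rho \rsem_D$ to $\lsem \pi \rsem_D$ (whether $\rho = \iota$ or $\rho$ is a predicate type), applying the former to the latter yields an element of $\lsem \pi \rsem_D$. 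For $\bigvee_\pi$, $\bigwedge_\pi$, the induction hypothesis places both arguments in $\lsem \pi \rsem_D$, which by Lemma~\ref{pi-is-model} is a complete lattice, so the binary least upper bound and greatest lower bound exist and lie in $\lsem \pi \rsem_D$. For negation $(\mnot \mathsf{E})$ and equality $(\mathsf{E}_1 \approx \mathsf{E}_2)$, a case analysis on the value of the subexpression(s) shows the result is always one of $T_{\alpha+1}, F_{\alpha+1}, 0, T_0, F_0 \in V$. For the existential quantifier $(\exists \mathsf{V}\,\mathsf{E})$ with $\mathsf{V} : \rho$, the induction hypothesis gives $\lsem \mathsf{E} \rsem_{s[\mathsf{V}/d]}(I) \in V$ for every $d \in \lsem \rho \rsem_D$, and since $V$ is a complete lattice the arbitrary least upper bound lies in $V$.

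The main obstacle is the $\lambda$-abstraction case $(\lambda \mathsf{V}.\mathsf{E})$ where $\mathsf{V} : \rho$ and $\mathsf{E} : \pi$: we must show $\lsem (\lambda \mathsf{V}.\mathsf{E}) \rsem_s(I) = \lambda d.\lsem \mathsf{E} \rsem_{s[\mathsf{V}/d]}(I) \in \lsem \rho \rightarrow \pi \rsem_D$. When $\rho = \iota$ this is just asking for a function $D \rightarrow \lsem \pi \rsem_D$, and the induction hypothesis (applied to each $s[\mathsf{V}/d]$) gives that $\lsem \mathsf{E} \rsem_{s[\mathsf{V}/d]}(I) \in \lsem \pi \rsem_D$, so the function is well-defined. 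But when $\rho$ is itself a predicate type $\pi_1$, Definition~\ref{semantics-of-types} requires $\lsem \pi_1 \rightarrow \pi \rsem_D = [\lsem \pi_1 \rsem_D \stackrel{m}{\rightarrow} \lsem \pi \rsem_D]$, i.e.\ the denotation must be $\alpha$-monotonic for every $\alpha < \kappa$. This is exactly the content of Lemma~\ref{monotonicity-of-semantics} referenced in the overview. The clean way to handle this is to state and prove a strengthened induction hypothesis that simultaneously asserts well-definedness (membership in $\lsem \rho \rsem_D$) \emph{and} the $\alpha$-monotonicity of the denotation as a function of the state (and of any free predicate-typed variable); then the $\lambda$-case follows because the slice $d \mapsto \lsem \mathsf{E} \rsem_{s[\mathsf{V}/d]}(I)$ is $\alpha$-monotonic in $d$ by that strengthened hypothesis. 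If the paper proves monotonicity separately in Lemma~\ref{monotonicity-of-semantics}, then one can simply invoke it here; either way, the $\lambda$-abstraction clause is the only place where anything beyond bookkeeping is needed, and it is the step I would write out carefully.
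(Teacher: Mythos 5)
Your proposal matches the paper's proof: the paper argues by structural induction on $\mathsf{E}$ simultaneously with exactly the strengthened auxiliary statement you describe (if $x \sqsubseteq_\alpha y$ then $\lsem \mathsf{E} \rsem_{s[\mathsf{V}/x]}(I) \sqsubseteq_\alpha \lsem \mathsf{E} \rsem_{s[\mathsf{V}/y]}(I)$), which is used precisely in the $\lambda$-abstraction case and also feeds the application case. One small caveat: your fallback of ``simply invoking'' Lemma~\ref{monotonicity-of-semantics} would not work, since that lemma gives $\alpha$-monotonicity in the interpretation $I$ rather than in the value the state assigns to the bound variable (and its own proof relies on the present lemma), so the simultaneous induction you favour is indeed the necessary --- and the paper's --- route.
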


\begin{definition}
Let $\mathsf{P}$ be a program and let $M$ be an interpretation over a nonempty set $D$.
Then $M$ will be called a {\em model} of $\mathsf{P}$ iff for all clauses
$\mathsf{p} \leftarrow_\pi \mathsf{E}$ of $\mathsf{P}$,
it holds $\lsem \mathsf{E} \rsem(M) \leq_\pi M(\mathsf{p})$, where $M(\mathsf{p}) \in \lsem \pi \rsem_D$.
\end{definition}

\section{Minimum Herbrand Model Semantics for ${\cal H}$}\label{section6}
In this section we demonstrate that every program of ${\cal H}$ has a unique
minimum Herbrand model which is the greatest lower bound of all the Herbrand models
of the program, and also the least fixed point of the immediate consequence operator
of the program. We start with the relevant definitions.
\begin{definition}
Let $\mathsf{P}$ be a program. The Herbrand universe $U_{\mathsf{P}}$ of $\mathsf{P}$ is the set of all terms
that can be formed out of the individual constants\footnote{As usual, if $\mathsf{P}$ has no constants, we assume
the existence of an arbitrary one.} and the function symbols of $\mathsf{P}$.
\end{definition}

\begin{definition}
A Herbrand interpretation $I$ of a program $\mathsf{P}$ is an interpretation such that:
\begin{enumerate}
\item the domain of $I$ is the Herbrand universe $U_{\mathsf{P}}$ of $\mathsf{P}$;
\item for every individual constant $\mathsf{c}$ of $\mathsf{P}$, $I(\mathsf{c})=\mathsf{c}$;
\item for every predicate constant $\mathsf{p}:\pi$ of $\mathsf{P}$, $I(\mathsf{p}) \in \lsem \pi \rsem_{U_{\mathsf{P}}}$;
\item for every $n$-ary function symbol $\mathsf{f}$ of $\mathsf{P}$ and for all $\mathsf{t}_1,\ldots,\mathsf{t}_n \in U_{\mathsf{P}}$, $I(\mathsf{f})\, \mathsf{t}_1\cdots \mathsf{t}_n = \mathsf{f}\,\mathsf{t}_1\cdots\mathsf{t}_n$.
\end{enumerate}
\end{definition}

A Herbrand state of a program $\mathsf{P}$ is a state whose underlying domain is $U_{\mathsf{P}}$.
We denote the set of Herbrand interpretations of a program $\mathsf{P}$ by ${\cal I}_\mathsf{P}$.

\begin{definition}
A Herbrand model of a program $\mathsf{P}$ is a Herbrand
interpretation that is a model of $\mathsf{P}$.
\end{definition}

\begin{definition}
Let $\mathsf{P}$ be a program. We define the following partial order on ${\cal I}_\mathsf{P}$:
for all $I, J \in  {\cal I}_\mathsf{P}$, $I \leq_{{\cal I}_\mathsf{P}} J$ iff for every
$\pi$ and for every predicate constant $\mathsf{p} : \pi$ of $\mathsf{P}$,
$I(\mathsf{p}) \leq_\pi J(\mathsf{p})$.
\end{definition}

\begin{definition}
Let $\mathsf{P}$ be a program. We define the following preorder on ${\cal I}_\mathsf{P}$
for all $\alpha < \kappa$:
for all $I, J \in  {\cal I}_\mathsf{P}$, $I \sqsubseteq_\alpha J$ iff for every
$\pi$ and for every predicate constant $\mathsf{p} : \pi$ of $\mathsf{P}$,
$I(\mathsf{p}) \sqsubseteq_\alpha J(\mathsf{p})$.
\end{definition}

The following two lemmas play a main role in establishing the two central theorems.
\begin{lemma}\label{interpretations-lattice}
Let $\mathsf{P}$ be a program.
Then, ${\cal I}_\mathsf{P}$ is a complete lattice and a basic model.
\end{lemma}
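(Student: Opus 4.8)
\textbf{Proof proposal for Lemma~\ref{interpretations-lattice}.}
The plan is to reduce the statement to the already-established Lemma~\ref{pi-is-model} by exhibiting ${\cal I}_\mathsf{P}$ as (isomorphic to) a product of the basic models $\lsem \pi \rsem_{U_\mathsf{P}}$, indexed by the predicate constants of $\mathsf{P}$. Concretely, since a Herbrand interpretation is uniquely determined by its action on the predicate constants (its domain is fixed as $U_\mathsf{P}$, and its behaviour on individual constants and function symbols is fixed by the Herbrand conditions), the map $I \mapsto (I(\mathsf{p}))_{\mathsf{p}}$ identifies ${\cal I}_\mathsf{P}$ with $\prod_{\mathsf{p}:\pi} \lsem \pi \rsem_{U_\mathsf{P}}$, and by the definitions of $\leq_{{\cal I}_\mathsf{P}}$ and $\sqsubseteq_\alpha$ on ${\cal I}_\mathsf{P}$ this identification is an order-isomorphism for $\leq$ and for each $\sqsubseteq_\alpha$, since all these relations on ${\cal I}_\mathsf{P}$ were defined componentwise. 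So it suffices to show that a product of complete lattices that are basic models is again a complete lattice and a basic model.

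For the complete-lattice part this is the standard fact that products of complete lattices are complete lattices, with least upper bounds and the least element $\perp$ computed componentwise; I would just note this. For the basic-model part I would verify the four axioms of Definition~\ref{def-basic model} coordinatewise. Axiom~\ref{axiom1} ($x \sqsubseteq_\beta y$ implies $x =_\alpha y$ for $\alpha < \beta$) and Axiom~\ref{axiom2} (agreement at all $\alpha$ implies equality) transfer immediately, since a componentwise relation holds iff it holds in every component, and equality in the product is componentwise equality. Axiom~\ref{axiom6} (compatibility of $\bigvee$ with $\sqsubseteq_\alpha$) also transfers directly, because $\bigvee$ in the product is computed componentwise and $\sqsubseteq_\alpha$ is checked componentwise.

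The one axiom deserving care is Axiom~\ref{axiom3}, the existence of $\bigsqcup_\alpha X$ for $X \subseteq (x]_\alpha$. The natural candidate is the componentwise $\bigsqcup_\alpha$: for $x = (x_{\mathsf{p}})_{\mathsf{p}}$ and $X$ a set of interpretations all lying in $(x]_\alpha$, one checks that each projection $X_{\mathsf{p}} = \{ I(\mathsf{p}) : I \in X\}$ lies in $(x_{\mathsf{p}}]_\alpha$ inside $\lsem \pi \rsem_{U_\mathsf{P}}$, forms $y_{\mathsf{p}} = \bigsqcup_\alpha X_{\mathsf{p}}$ using that $\lsem \pi \rsem_{U_\mathsf{P}}$ is a basic model, and sets $y = (y_{\mathsf{p}})_{\mathsf{p}}$. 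The properties required of $y$ — namely $X \sqsubseteq_\alpha y$, and $y \sqsubseteq_\alpha z$ together with $y \leq z$ for every $z \in (x]_\alpha$ with $X \sqsubseteq_\alpha z$ — then follow componentwise from the corresponding properties of each $y_{\mathsf{p}}$, once one observes that membership in $(x]_\alpha$ is itself a componentwise condition, so that $z \in (x]_\alpha$ delivers $z_{\mathsf{p}} \in (x_{\mathsf{p}}]_\alpha$ for each $\mathsf{p}$. I expect this to be the main (though still routine) obstacle: one must be slightly careful that the two-part minimality condition of Axiom~\ref{axiom3} (minimality with respect to $\sqsubseteq_\alpha$ \emph{and} with respect to $\leq$, simultaneously) is preserved, which it is precisely because it is demanded, and granted, in each coordinate for the same witness $y$. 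A minor point to mention is that the empty program, or a program with no predicate constants, yields a trivial one-point lattice, which is vacuously a basic model.
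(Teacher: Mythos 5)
Your proposal is correct and follows essentially the same route as the paper: the paper likewise reduces the claim to Lemma~\ref{pi-is-model} by viewing ${\cal I}_\mathsf{P}$ as a product (there written as $\prod_{\pi}\,{\cal P}_\pi \to \lsem \pi \rsem_{U_\mathsf{P}}$ over the sets ${\cal P}_\pi$ of predicate constants of each type), citing Lemma~\ref{function-model} and \cite{tocl14} for closure of basic models under such pointwise function spaces and products. Your componentwise verification of the axioms of Definition~\ref{def-basic model}, including the careful treatment of Axiom~\ref{axiom3}, simply inlines what the paper delegates to the cited results.
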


\begin{lemma}[$\alpha$-Monotonicity of Semantics]\label{monotonicity-of-semantics}
Let $\mathsf{P}$ be a program and let $\mathsf{E} : \pi$ be an expression.
Let $I, J$ be Herbrand interpretations and $s$ be a Herbrand
state of $\mathsf{P}$. For all $\alpha < \kappa$, if $I \sqsubseteq_\alpha J$ then
$\lsem \mathsf{E} \rsem_s(I) \sqsubseteq_\alpha \lsem \mathsf{E} \rsem_s(J)$.
\end{lemma}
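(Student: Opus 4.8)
The natural approach is structural induction on the expression $\mathsf{E}:\pi$, following the clauses of Definition~\ref{definition-semantics-of-expressions} that can yield a predicate type, and using the fact (Lemma~\ref{pi-is-model}) that each $\lsem\pi\rsem_D$ is a basic model, together with the pointwise definition of $\sqsubseteq_\alpha$ on the function domains. Fix $\alpha<\kappa$ and assume $I\sqsubseteq_\alpha J$; I want $\lsem\mathsf{E}\rsem_s(I)\sqsubseteq_\alpha\lsem\mathsf{E}\rsem_s(J)$ for every Herbrand state $s$. Since the $\sqsubseteq_\alpha$ on a predicate domain $\rho\to\pi$ is the pointwise lifting, it suffices in each inductive case to reduce to type-$o$ comparisons, where $\sqsubseteq_\alpha$ on $V$ is given explicitly.

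\textbf{Base cases and easy steps.} For $\cfalse,\ctrue$ the value is $F_0$ or $T_0$ independently of $I$, and $x\sqsubseteq_\alpha x$ holds since $order(x)=0<\alpha$ (recall $\alpha\ge 1$ is impossible to assume; but for $\alpha=0$ the relation $\sqsubseteq_0$ still has $F_0\sqsubseteq_0 F_0$ by clause~2 of the definition of $\sqsubseteq_\alpha$ on $V$, so this is fine). For an individual constant or variable the value lies in $D$ and $\sqsubseteq_\alpha$ is not even relevant (only predicate types carry content here); for a predicate constant $\mathsf{p}$ the claim is exactly the hypothesis $I(\mathsf p)\sqsubseteq_\alpha J(\mathsf p)$, and for a predicate variable it is $s(\mathsf V)\sqsubseteq_\alpha s(\mathsf V)$, which holds by reflexivity of the preorder. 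The application case $(\mathsf E_1\,\mathsf E_2)$: if $\mathsf E_2:\iota$ then $\lsem\mathsf E_2\rsem_s(I)=\lsem\mathsf E_2\rsem_s(J)=:d\in D$ (the meaning of an individual expression does not depend on the interpretation of predicates), and by induction $\lsem\mathsf E_1\rsem_s(I)\sqsubseteq_\alpha\lsem\mathsf E_1\rsem_s(J)$ pointwise, so in particular at $d$; if $\mathsf E_2:\pi'$ one needs that $\lsem\mathsf E_1\rsem_s(I)$ is $\alpha$-monotonic (it is, being an element of $[\lsem\pi'\rsem\stackrel m\to\ldots]$) together with induction on both $\mathsf E_1$ and $\mathsf E_2$, via the standard two-step estimate $\lsem\mathsf E_1\rsem_s(I)(d_I)\sqsubseteq_\alpha\lsem\mathsf E_1\rsem_s(I)(d_J)\sqsubseteq_\alpha\lsem\mathsf E_1\rsem_s(J)(d_J)$ where $d_I=\lsem\mathsf E_2\rsem_s(I)$, $d_J=\lsem\mathsf E_2\rsem_s(J)$. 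Abstraction $(\lambda\mathsf V.\mathsf E)$ is immediate from the induction hypothesis applied to every extended state $s[\mathsf V/d]$, since pointwise $\sqsubseteq_\alpha$ is exactly what has to be checked. The cases $\bigvee_\pi,\bigwedge_\pi$ follow from Axiom~\ref{axiom6} of basic models (and its dual for greatest lower bounds, which is part of the package once $(\lsem\pi\rsem,\sqsubseteq)$ is known to be a complete lattice via Lemma~\ref{basic-model-lattice}); the existential $(\exists\mathsf V\,\mathsf E)$ is a least upper bound over $d$, again handled by Axiom~\ref{axiom6} applied to the family indexed by $d\in\lsem type(\mathsf V)\rsem_D$. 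For $(\mathsf E_1\approx\mathsf E_2)$: both arguments are individual expressions, so their meanings are identical under $I$ and $J$, hence the whole equality expression has the same value ($T_0$ or $F_0$) under $I$ and $J$, and reflexivity finishes it.

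\textbf{The main obstacle.} The one case that genuinely needs the structure of $V$ is negation, $(\mnot\mathsf E)$. Here $\mathsf E:o$, and by induction $\lsem\mathsf E\rsem_s(I)\sqsubseteq_\alpha\lsem\mathsf E\rsem_s(J)$; I must show $\lsem(\mnot\mathsf E)\rsem_s(I)\sqsubseteq_\alpha\lsem(\mnot\mathsf E)\rsem_s(J)$. The point is that the negation map on $V$ sends $F_\beta\mapsto T_{\beta+1}$, $T_\beta\mapsto F_{\beta+1}$, $0\mapsto 0$, i.e.\ it increments $order$ by one (and fixes $0$). So one proceeds by a short case analysis on how $x:=\lsem\mathsf E\rsem_s(I)$ and $y:=\lsem\mathsf E\rsem_s(J)$ are related under $\sqsubseteq_\alpha$, using the characterization ``$x=_\alpha y$ iff $x=y$ or $order(x),order(y)>\alpha$'' and the definition of $\sqsubseteq_\alpha$ on $V$. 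If $x=y$ then $\mnot x=\mnot y$ and we are done. If $order(x),order(y)>\alpha$ then $order(\mnot x),order(\mnot y)>\alpha$ as well (incrementing only makes them larger, and $0$ stays), so $\mnot x\sqsubseteq_\alpha\mnot y$ by clause~3. The remaining possibility is $x\sqsubset_\alpha y$ with the two not $=_\alpha$: by the definition this forces, modulo the above, either $x=F_\alpha$ and $order(y)\ge\alpha$, or $order(x)\ge\alpha$ and $y=T_\alpha$; but one checks these cannot occur together with $x\sqsubseteq_\alpha y$ and $x\neq y$ in a way that breaks monotonicity, because applying $\mnot$ yields values of order $\ge\alpha+1>\alpha$ on \emph{both} sides, landing again in clause~3. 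Thus negation is in fact $\alpha$-monotonic on $V$ for every $\alpha$; spelling out this finite case analysis carefully (paying attention to the boundary $order=\alpha$ versus $order>\alpha$, and to the fixed point $0$) is the only real content, and is exactly the infinite-valued analogue of the observation in~\cite{tocl05} that negation-with-weakening is monotone stage by stage. Everything else is bookkeeping driven by the pointwise definitions and the basic-model axioms.
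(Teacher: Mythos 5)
Your overall strategy is exactly the paper's: structural induction on $\mathsf{E}$, with the base case for predicate constants given by the hypothesis $I\sqsubseteq_\alpha J$, the application case handled by the two-step estimate combining the induction hypothesis with $\alpha$-monotonicity of the denotation of $\mathsf{E}_1$ (and the observation that expressions of type $\iota$ receive the same value under any two Herbrand interpretations), abstraction and the pointwise definition of $\sqsubseteq_\alpha$, Axiom~\ref{axiom6} for $\bigvee_\pi$ and for the existential quantifier, and an order-raising case analysis for $\mnot$ (your negation argument is in fact spelled out slightly more completely than the paper's, but it is the same idea). All of that is correct.

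The one genuine gap is the conjunction case $(\mathsf{E}_1 \bigwedge_\pi \mathsf{E}_2)$. You dispose of it by appealing to ``the dual of Axiom~\ref{axiom6} for greatest lower bounds, which is part of the package once $(\lsem \pi \rsem_D,\sqsubseteq)$ is known to be a complete lattice via Lemma~\ref{basic-model-lattice}.'' No such dual is available: Definition~\ref{def-basic model} postulates compatibility of the preorders $\sqsubseteq_\alpha$ only with the $\leq$-least upper bound $\bigvee$, and Lemma~\ref{basic-model-lattice} concerns $\sqsubseteq$-suprema and $\sqsubseteq$-infima ($\bigsqcup$, $\bigsqcap$), which are different operations from the $\leq$-greatest lower bound $\bigwedge_\pi$ that interprets conjunction (the paper's treatment of the model intersection theorem shows just how different $\bigsqcap$ and $\bigwedge$ are). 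So nothing you cite yields ``if $x_i \sqsubseteq_\alpha y_i$ then $\bigwedge\{x_1,x_2\} \sqsubseteq_\alpha \bigwedge\{y_1,y_2\}$''; this property of the meet is true but must be verified directly. That is precisely what the paper does: it reduces pointwise to argument tuples $d_1,\ldots,d_n$, sets $x_i=\lsem\mathsf{E}_i\rsem_s(I)\,d_1\cdots d_n$ and $y_i=\lsem\mathsf{E}_i\rsem_s(J)\,d_1\cdots d_n$, and performs a case analysis on $v=\bigwedge\{x_1,x_2\}\in V$ (the cases $v<F_\alpha$ or $v>T_\alpha$, $v=F_\alpha$, $v=T_\alpha$, and $F_\alpha<v<T_\alpha$) to conclude $\bigwedge\{x_1,x_2\}\sqsubseteq_\alpha\bigwedge\{y_1,y_2\}$. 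Add this short verification that the binary meet on $V$ is $\alpha$-monotonic (and lift it pointwise) and your proof is complete; without it the conjunction case is unjustified.
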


%

Since by Lemma~\ref{interpretations-lattice} the set ${\cal I}_\mathsf{P}$ is a basic model
(and thus by Lemma~\ref{basic-model-lattice} is a complete lattice with respect to $\sqsubseteq$),
every ${\cal M}\subseteq {\cal I}_\mathsf{P}$ has a greatest lower bound $\bigsqcap{\cal M}$
with respect to $\sqsubseteq$. We have the following theorem which generalizes the familiar model
intersection theorem for definite first-order logic programs~\cite{lloyd}, the model intersection
theorem for normal first-order logic programs~\cite[Theorem~8.6]{tocl05} and the model intersection
theorem for definite higher-order logic programs~\cite[Theorem~6.8]{CHRW13}.
\begin{theorem}[Model Intersection Theorem]\label{model-intersection}
Let $\mathsf{P}$ be a program and ${\cal M}$ be a nonempty set of
Herbrand models of $\mathsf{P}$. Then, $\bigsqcap{\cal M}$ is also
a Herbrand model of $\mathsf{P}$.
\end{theorem}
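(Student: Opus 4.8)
The plan is to verify the defining condition of a model directly for $N := \bigsqcap{\cal M}$. This greatest lower bound exists in ${\cal I}_{\mathsf{P}}$: by Lemma~\ref{interpretations-lattice} the set ${\cal I}_{\mathsf{P}}$ is a basic model, so by Lemma~\ref{basic-model-lattice} the poset $({\cal I}_{\mathsf{P}},\sqsubseteq)$ is a complete lattice and $\bigsqcap{\cal M}\in{\cal I}_{\mathsf{P}}$. Fixing a clause $\mathsf{p}\leftarrow_\pi\mathsf{E}$ of $\mathsf{P}$, the goal is $\lsem\mathsf{E}\rsem(N)\leq_\pi N(\mathsf{p})$. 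It is worth noting at the outset where the difficulty sits: because $\lsem\mathsf{E}\rsem(\cdot)$ is \emph{not} $\leq_\pi$-monotonic — this is precisely the effect of negation — the classical route ``$N\leq_\pi M$, hence $\lsem\mathsf{E}\rsem(N)\leq_\pi\lsem\mathsf{E}\rsem(M)\leq_\pi M(\mathsf{p})$, now take the meet'' is not available, and circumventing this is exactly what the infinite-valued machinery is for.

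The approach I would take is to exploit the explicit, stage-by-stage construction of the $\sqsubseteq$-greatest lower bound coming from the non-monotonic fixed point theory of~\cite{pls2013,tocl14}: $N$ is the levelwise limit of a $\leq$-decreasing transfinite sequence of approximants $N_\alpha$ ($\alpha<\kappa$), where $N_\alpha$ fixes the content of $N$ at level $\alpha$ and, whenever the relevant sub-family is nonempty, this content is a $\sqsubseteq_\alpha$-infimum — which Axiom~\ref{axiom3} makes simultaneously compatible with $\leq$ — of those members of ${\cal M}$ that still agree with the construction on all levels below $\alpha$. On top of this I would run a transfinite induction on $\alpha$ showing $\lsem\mathsf{E}\rsem(N)\sqsubseteq_\alpha N(\mathsf{p})$ for every $\alpha<\kappa$; since in the semantic domains $z_1\leq z_2$ holds whenever $z_1\sqsubseteq_\alpha z_2$ for all $\alpha<\kappa$ (a fact one reads off the concrete definitions of $\leq$ and of the $\sqsubseteq_\alpha$ on $V$), this yields $\lsem\mathsf{E}\rsem(N)\leq_\pi N(\mathsf{p})$.

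The engine of the inductive step is the $\alpha$-monotonicity of the semantics, Lemma~\ref{monotonicity-of-semantics}. If a model $M\in{\cal M}$ agrees with $N$ on all levels below $\alpha$, then applying that lemma at each $\beta<\alpha$, in both directions since $M=_\beta N$, gives $\lsem\mathsf{E}\rsem(M)=_\beta\lsem\mathsf{E}\rsem(N)$; together with $\lsem\mathsf{E}\rsem(M)\leq_\pi M(\mathsf{p})$ (as $M$ is a model) and the fact that the semantics commutes, up to $=_\alpha$, with the relevant $\sqsubseteq_\alpha$-infimum — again by an $\alpha$-continuity argument resting on Lemma~\ref{monotonicity-of-semantics} — one obtains the required comparison of $\lsem\mathsf{E}\rsem(N)$ and $N(\mathsf{p})$ at level $\alpha$.

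I expect the principal obstacle to be the stages at which \emph{no} member of ${\cal M}$ agrees with $N$ below level $\alpha$ — the ``empty'' case of the greatest-lower-bound construction, which does genuinely occur — since then there is no witnessing model at level $\alpha$ and one must instead argue that $N$ has already stabilized, its level-$\alpha$ content being inherited from earlier approximants that were themselves infima of genuine models, so that the induction hypothesis applies. A closely related, pervasive difficulty is the bookkeeping that ties the level-$\alpha$ behaviour of the (only seemingly componentwise) meet $N(\mathsf{p})$ to the $\sqsubseteq_\alpha$-infima of the corresponding parts of the $M(\mathsf{p})$, and the delicate interplay of $\leq$ and $\sqsubseteq_\alpha$ within a fixed slice $(x]_\alpha$: unlike in the monotone case $\leq$ and $\sqsubseteq$ do not coincide, and the passage between them is sound only between values that agree below level $\alpha$. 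This is where the substantive work lies.
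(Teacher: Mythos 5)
Your overall scaffolding (use the explicit stage-by-stage construction of $\bigsqcap{\cal M}$, drive everything with Lemma~\ref{monotonicity-of-semantics}, treat the ``empty'' stage separately) matches the paper's, but the inductive invariant you propose is not provable: it is strictly stronger than the model condition and is in fact false. You plan to show $\lsem \mathsf{E} \rsem(N) \sqsubseteq_\alpha N(\mathsf{p})$ for \emph{every} $\alpha<\kappa$ and then deduce $\leq_\pi$. The deduction step is sound, but only because ``$\sqsubseteq_\alpha$ for all $\alpha$'' is an extremely strong condition: on $V$, if $x \sqsubseteq_\alpha y$ for all $\alpha<\kappa$ and $x\neq y$, then at $\alpha$ just above $\min(order(x),order(y))$ none of the defining clauses of $\sqsubseteq_\alpha$ applies, so (except in degenerate cases at the top level of a successor $\kappa$) the hypothesis forces $x=y$. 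The model condition, however, is only $\lsem \mathsf{E} \rsem(N) \leq_\pi N(\mathsf{p})$, and $\leq$ does not imply $\sqsubseteq_\alpha$ for all $\alpha$. Concretely, take the program with the two clauses $\mathsf{p}\leftarrow_o \ctrue$ and $\mathsf{p}\leftarrow_o \cfalse$: its unique Herbrand model $M$ has $M(\mathsf{p})=T_0$, so $N=\bigsqcap{\cal M}=M$, and for the second clause $\lsem \cfalse \rsem(N)=F_0$ while $N(\mathsf{p})=T_0$; since $order(F_0)=0<1$ and $F_0\neq T_0$, we have $F_0\not\sqsubseteq_1 T_0$ (for any $\kappa>1$). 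So your transfinite induction cannot even get past the low levels for such clauses; the statement you are inducting on is simply not true of $N$.

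A secondary problem is the appeal to the semantics ``commuting, up to $=_\alpha$, with the relevant $\sqsubseteq_\alpha$-infimum by an $\alpha$-continuity argument resting on Lemma~\ref{monotonicity-of-semantics}'': $\alpha$-monotonicity does not yield such a commutation property, and the paper never establishes (or needs) one. What the paper does instead is keep $\leq$ as the target throughout: it first shows (Lemma~\ref{sqcap-is-model}) that the stage-$\alpha$ infimum $\bigsqcap_\alpha$ of a family of models lying in one slice $(M_\alpha]_\alpha$ is itself a model, arguing by contradiction with a case analysis on the \emph{order} of the offending value $(\bigsqcap_\alpha{\cal M})(\mathsf{p})\,d_1\cdots d_n$ relative to $\alpha$, using only $\bigsqcap_\alpha{\cal M}\sqsubseteq_\alpha N$ for each member $N$ plus Lemma~\ref{monotonicity-of-semantics}; it then shows (Lemma~\ref{bigwedge-is-model}) that the $\leq$-meet of the resulting chain of models, which agree stage by stage, is a model, and finally handles by a separate case analysis the stage $\delta$ at which no member of ${\cal M}$ still agrees with the construction. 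If you want to salvage your plan, replace your invariant by statements of this kind about the approximants $N_\alpha$ being models with respect to $\leq$, rather than a levelwise $\sqsubseteq_\alpha$ comparison between $\lsem\mathsf{E}\rsem(N)$ and $N(\mathsf{p})$.
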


\begin{definition}
Let $\mathsf{P}$ be a program.
The mapping $T_\mathsf{P} : {\cal I}_\mathsf{P} \rightarrow {\cal I}_\mathsf{P}$ is defined
for every $\mathsf{p} : \pi$ and for every $I \in {\cal I}_\mathsf{P}$ as
\(
T_{\mathsf{P}}(I)(\mathsf{p}) = \bigvee\{\lsem \mathsf{E} \rsem(I) : (\mathsf{p} \leftarrow_\pi \mathsf{E}) \in \mathsf{P}\}
\).
The mapping $T_\mathsf{P}$ will be called the {\em immediate consequence operator} for $\mathsf{P}$.
\end{definition}

The following two lemmas
are crucial in establishing the least fixed point theorem.
\begin{lemma}\label{denotation-lemma}
Let $\mathsf{P}$ be a program.
For every predicate constant $\mathsf{p}:\pi$ in $\mathsf{P}$ and $I \in {\cal I}_\mathsf{P}$, $T_P(I)(\mathsf{p}) \in \lsem \pi \rsem_{U_\mathsf{P}}$.
\end{lemma}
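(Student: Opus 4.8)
The plan is to obtain the statement directly from two results already established: the well-definedness of the semantics of expressions (Lemma~\ref{expressions-well-defined}) and the fact that each predicate-type domain is a complete lattice (Lemma~\ref{pi-is-model}). Fix a predicate constant $\mathsf{p} : \pi$ occurring in $\mathsf{P}$ and an interpretation $I \in {\cal I}_\mathsf{P}$, and set
\[
S \;=\; \{\lsem \mathsf{E} \rsem(I) : (\mathsf{p} \leftarrow_\pi \mathsf{E}) \in \mathsf{P}\},
\]
so that, by definition, $T_\mathsf{P}(I)(\mathsf{p}) = \bigvee S$, where $\bigvee$ denotes the least upper bound operation on $\lsem \pi \rsem_{U_\mathsf{P}}$.

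First I would check that $S \subseteq \lsem \pi \rsem_{U_\mathsf{P}}$. Indeed, for every clause $(\mathsf{p} \leftarrow_\pi \mathsf{E}) \in \mathsf{P}$, the body $\mathsf{E}$ is, by the definition of a program clause, a closed expression of type $\pi$. Since $\pi$ is in particular an argument type, Lemma~\ref{expressions-well-defined} applies (with $\rho := \pi$) and gives $\lsem \mathsf{E} \rsem_s(I) \in \lsem \pi \rsem_{U_\mathsf{P}}$ for any Herbrand state $s$; as $\mathsf{E}$ is closed, this value is independent of $s$ and equals $\lsem \mathsf{E} \rsem(I)$. Hence every element of $S$ lies in $\lsem \pi \rsem_{U_\mathsf{P}}$. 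Then I would invoke Lemma~\ref{pi-is-model}: $(\lsem \pi \rsem_{U_\mathsf{P}}, \leq_\pi)$ is a complete lattice, so the least upper bound of the subset $S$ exists within $\lsem \pi \rsem_{U_\mathsf{P}}$ — this covers in particular the degenerate case in which $\mathsf{p}$ has no defining clause in $\mathsf{P}$, where $S = \emptyset$ and $\bigvee S$ is simply the least element of the lattice. Consequently $T_\mathsf{P}(I)(\mathsf{p}) = \bigvee S \in \lsem \pi \rsem_{U_\mathsf{P}}$, as required.

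I do not expect a genuine obstacle here, since the content of the statement is essentially absorbed into Lemma~\ref{pi-is-model}. The only point that deserves care is that, when $\pi = \pi_1 \rightarrow \pi_2$, membership in $\lsem \pi \rsem_{U_\mathsf{P}}$ means being $\alpha$-monotonic for every $\alpha < \kappa$, so one must make sure that the $\bigvee$ appearing in the definition of $T_\mathsf{P}$ really is the lattice supremum of $\lsem \pi \rsem_{U_\mathsf{P}}$ (equivalently, that a pointwise supremum of $\alpha$-monotonic functions is again $\alpha$-monotonic). But this closure property is precisely part of what Lemma~\ref{pi-is-model} asserts, so no separate argument is needed; the proof is then a short unfolding of definitions plus the two cited lemmas.
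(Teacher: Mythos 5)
Your proposal is correct and follows essentially the same route as the paper, which simply invokes the completeness of $(\lsem \pi \rsem_{U_\mathsf{P}}, \leq_\pi)$ from Lemma~\ref{pi-is-model}; your additional appeal to Lemma~\ref{expressions-well-defined} and the remark about suprema of $\alpha$-monotonic functions just make explicit what the paper leaves implicit.
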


\begin{lemma}\label{tp-a-monotonic}
Let $\mathsf{P}$ be a program. Then, $T_\mathsf{P}$ is $\alpha$-monotonic for all $\alpha < \kappa$.
\end{lemma}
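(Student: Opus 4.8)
The plan is to reduce the statement directly to two facts already available: the $\alpha$-monotonicity of the semantics of expressions (Lemma~\ref{monotonicity-of-semantics}) and the fact that least upper bounds preserve each preordering $\sqsubseteq_\alpha$ (Axiom~\ref{axiom6} in the definition of a basic model, which applies at every predicate type by Lemma~\ref{pi-is-model}). Fix an ordinal $\alpha < \kappa$ and take Herbrand interpretations $I, J \in {\cal I}_\mathsf{P}$ with $I \sqsubseteq_\alpha J$; note that $T_\mathsf{P}$ is a well-defined endofunction on ${\cal I}_\mathsf{P}$ by Lemma~\ref{denotation-lemma}, so the claim is meaningful. By the definition of $\sqsubseteq_\alpha$ on ${\cal I}_\mathsf{P}$, it suffices to show $T_\mathsf{P}(I)(\mathsf{p}) \sqsubseteq_\alpha T_\mathsf{P}(J)(\mathsf{p})$ for every predicate constant $\mathsf{p} : \pi$ of $\mathsf{P}$.

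So fix such a $\mathsf{p} : \pi$. For every clause $(\mathsf{p} \leftarrow_\pi \mathsf{E}) \in \mathsf{P}$, the expression $\mathsf{E}$ is closed of type $\pi$, so $\lsem \mathsf{E}\rsem(I), \lsem \mathsf{E}\rsem(J) \in \lsem \pi \rsem_{U_\mathsf{P}}$ by Lemma~\ref{expressions-well-defined}, and instantiating Lemma~\ref{monotonicity-of-semantics} at any fixed Herbrand state (on which the denotation of a closed expression does not depend) the hypothesis $I \sqsubseteq_\alpha J$ yields $\lsem \mathsf{E}\rsem(I) \sqsubseteq_\alpha \lsem \mathsf{E}\rsem(J)$. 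Now $T_\mathsf{P}(I)(\mathsf{p})$ and $T_\mathsf{P}(J)(\mathsf{p})$ are, by definition, the $\leq_\pi$-least upper bounds in $\lsem \pi \rsem_{U_\mathsf{P}}$ of the two families $\{\lsem \mathsf{E}\rsem(I) : (\mathsf{p}\leftarrow_\pi\mathsf{E})\in\mathsf{P}\}$ and $\{\lsem \mathsf{E}\rsem(J) : (\mathsf{p}\leftarrow_\pi\mathsf{E})\in\mathsf{P}\}$, which are indexed by the same finite set of clauses. Since $\lsem\pi\rsem_{U_\mathsf{P}}$ is a basic model (Lemma~\ref{pi-is-model}), Axiom~\ref{axiom6} applied to this common index set gives $T_\mathsf{P}(I)(\mathsf{p}) \sqsubseteq_\alpha T_\mathsf{P}(J)(\mathsf{p})$. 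In the degenerate case where $\mathsf{P}$ has no clause for $\mathsf{p}$, both sides equal $\bigvee\emptyset$, the least element of $\lsem\pi\rsem_{U_\mathsf{P}}$, and the conclusion holds by reflexivity of $\sqsubseteq_\alpha$. This completes the argument.

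I do not expect any genuine obstacle here: all the real content is packaged inside Lemma~\ref{monotonicity-of-semantics}, whose own proof (structural induction on $\mathsf{E}$, with the negation and existential-quantifier clauses handled via the behaviour of $\sqsubseteq_\alpha$ on the truth-value set $V$) is where the subtlety of the infinite-valued setting lives. The only points worth checking carefully are that Axiom~\ref{axiom6} may be invoked at an arbitrary predicate type $\pi$ rather than just at $o$ — which is exactly what Lemma~\ref{pi-is-model} provides — and that the two least-upper-bound families are taken over literally the same index set, which is immediate from the definition of $T_\mathsf{P}$. Everything else is bookkeeping, so the lemma is essentially a corollary of Lemmas~\ref{monotonicity-of-semantics} and~\ref{pi-is-model}.
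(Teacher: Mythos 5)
Your proof is correct and follows essentially the same route as the paper, which disposes of the lemma by combining Lemma~\ref{monotonicity-of-semantics} with Proposition~\ref{lub-of-a-monotonic} (the least upper bound of $\alpha$-monotonic functions is $\alpha$-monotonic). The only difference is cosmetic: you inline that proposition by applying Axiom~\ref{axiom6} of Definition~\ref{def-basic model} directly in $\lsem \pi \rsem_{U_\mathsf{P}}$ (justified by Lemma~\ref{pi-is-model}), which is exactly the pointwise content of the proposition the paper cites.
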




\begin{theorem}[Least Fixed Point Theorem]\label{least-model}
Let $\mathsf{P}$ be a program and let ${\cal M}$ be the set of all its Herbrand models.
Then, $T_\mathsf{P}$ has a least fixed point $M_\mathsf{P}$. Moreover, $M_\mathsf{P} = \bigsqcap{\cal{M}}$.
\end{theorem}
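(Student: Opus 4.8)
The plan is to obtain the fixed point from the abstract machinery of Section~\ref{section3} and then to identify it with $\bigsqcap{\cal M}$ by comparing the $\sqsubseteq$-pre-fixed points of $T_\mathsf{P}$ with the Herbrand models of $\mathsf{P}$. First I would check the hypotheses of Theorem~\ref{a-monotonic-has-least-fixpoint} for $L={\cal I}_\mathsf{P}$ and $f=T_\mathsf{P}$: Lemma~\ref{denotation-lemma} guarantees that $T_\mathsf{P}(I)(\mathsf{p})\in\lsem\pi\rsem_{U_\mathsf{P}}$ for every $\mathsf{p}:\pi$, so $T_\mathsf{P}$ really is a function from ${\cal I}_\mathsf{P}$ to ${\cal I}_\mathsf{P}$; Lemma~\ref{interpretations-lattice} says ${\cal I}_\mathsf{P}$ is a basic model; and Lemma~\ref{tp-a-monotonic} says $T_\mathsf{P}$ is $\alpha$-monotonic for every $\alpha<\kappa$, i.e.\ $T_\mathsf{P}\in[{\cal I}_\mathsf{P}\stackrel{m}{\rightarrow}{\cal I}_\mathsf{P}]$. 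Theorem~\ref{a-monotonic-has-least-fixpoint} then yields a $\sqsubseteq$-least pre-fixed point $M_\mathsf{P}$ of $T_\mathsf{P}$ that is simultaneously its $\sqsubseteq$-least fixed point; this proves the first assertion.

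For $M_\mathsf{P}=\bigsqcap{\cal M}$ I would first unfold definitions to see that an interpretation $M$ is a model of $\mathsf{P}$ exactly when $\lsem\mathsf{E}\rsem(M)\leq_\pi M(\mathsf{p})$ for every clause $\mathsf{p}\leftarrow_\pi\mathsf{E}$, equivalently when $\bigvee\{\lsem\mathsf{E}\rsem(M):(\mathsf{p}\leftarrow_\pi\mathsf{E})\in\mathsf{P}\}\leq_\pi M(\mathsf{p})$ for every $\mathsf{p}$, that is, when $T_\mathsf{P}(M)\leq_{{\cal I}_\mathsf{P}}M$; thus ${\cal M}$ is precisely the set of $\leq$-pre-fixed points of $T_\mathsf{P}$. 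The crucial bridge is then that every $\leq$-pre-fixed point is also a $\sqsubseteq$-pre-fixed point, which will follow from the comparison property $I\leq_{{\cal I}_\mathsf{P}}J\Rightarrow I\sqsubseteq J$. Granting this, each $M\in{\cal M}$ satisfies $T_\mathsf{P}(M)\sqsubseteq M$, so $M_\mathsf{P}\sqsubseteq M$ because $M_\mathsf{P}$ is the $\sqsubseteq$-least pre-fixed point; hence $M_\mathsf{P}$ is a $\sqsubseteq$-lower bound of ${\cal M}$. Since $M_\mathsf{P}$ is a fixed point of $T_\mathsf{P}$ we have $T_\mathsf{P}(M_\mathsf{P})=M_\mathsf{P}\leq_{{\cal I}_\mathsf{P}}M_\mathsf{P}$, so $M_\mathsf{P}\in{\cal M}$; a lower bound of ${\cal M}$ that itself belongs to ${\cal M}$ is necessarily the greatest lower bound, so $M_\mathsf{P}=\bigsqcap{\cal M}$. (Equivalently, one may invoke Theorem~\ref{model-intersection}: $\bigsqcap{\cal M}$ is a model, hence a $\sqsubseteq$-pre-fixed point, so $M_\mathsf{P}\sqsubseteq\bigsqcap{\cal M}$, while $M_\mathsf{P}\in{\cal M}$ forces $\bigsqcap{\cal M}\sqsubseteq M_\mathsf{P}$.)

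The one step that needs real work is the comparison property $I\leq_{{\cal I}_\mathsf{P}}J\Rightarrow I\sqsubseteq J$. Since $\leq$ and all the $\sqsubseteq_\alpha$ on ${\cal I}_\mathsf{P}$ are defined pointwise down to the truth-value set $V$, it suffices to understand $V$ and then lift. On $V$, a short case analysis over the defining clauses of $\sqsubseteq_\alpha$ shows that if $a<c$ then $a\sqsubseteq_\alpha c$ for every $\alpha$ that does not exceed $m:=\min(order(a),order(c))$, and $a\sqsubset_\alpha c$ when $\alpha=m$. Now, given $I\leq_{{\cal I}_\mathsf{P}}J$ with $I\neq J$, let $\alpha^\ast$ be the least value of $\min(order(\cdot),order(\cdot))$ over all ``addresses'' (a predicate constant together with a full tuple of arguments) at which $I$ and $J$ disagree; such an address exists since $I\neq J$. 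One checks $I\sqsubset_{\alpha^\ast}J$: at an address of disagreement the inequality $\alpha^\ast\leq m$ together with the $V$-observation gives $\sqsubseteq_{\alpha^\ast}$, at an address realizing the minimum it gives strict $\sqsubset_{\alpha^\ast}$, and at an address of agreement reflexivity of $\sqsubseteq_{\alpha^\ast}$ applies. Hence $I\sqsubseteq J$. Everything else in the proof is routine bookkeeping with the definitions and the previously established lemmas. (Note that the converse inclusion fails — a $\sqsubseteq$-pre-fixed point need not be a model — but it is not needed here.)
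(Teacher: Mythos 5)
Your proof is correct and its skeleton matches the paper's: invoke Theorem~\ref{a-monotonic-has-least-fixpoint} (via Lemmas~\ref{interpretations-lattice} and~\ref{tp-a-monotonic}) to get the $\sqsubseteq$-least pre-fixed point, which is the least fixed point; identify the Herbrand models with the $\leq_{{\cal I}_\mathsf{P}}$-pre-fixed points of $T_\mathsf{P}$ (the paper's Lemma~\ref{TP-model-leq-model}); and then bridge from $\leq$ to $\sqsubseteq$. You differ in two respects. First, you prove the bridge in the stronger, unconditional form $I\leq_{{\cal I}_\mathsf{P}}J\Rightarrow I\sqsubseteq J$ by a direct truth-value analysis at the least order of disagreement; the paper gets the same effect more modularly via Propositions~\ref{axiom5-for-pi} and~\ref{axiom5-for-interpretations} (``$x\leq y$ and $x=_\beta y$ for all $\beta<\alpha$ imply $x\sqsubseteq_\alpha y$'') applied at the least $\alpha$ where $T_\mathsf{P}(M)=_\alpha M$ fails (Lemma~\ref{if-model-then-prefixpoint}). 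That least $\alpha$ is exactly your $\alpha^\ast$, so the two arguments are the same idea in different packaging; your flattening to full-application ``addresses'' is legitimate because $\leq$ and each $\sqsubseteq_\alpha$ are defined pointwise all the way down to $V$, and the paper's inductive formulation just handles the function types uniformly. Second, your end-game is leaner: from $M_\mathsf{P}\in{\cal M}$ and $M_\mathsf{P}\sqsubseteq M$ for every $M\in{\cal M}$ you conclude directly that $M_\mathsf{P}$ is the $\sqsubseteq$-greatest lower bound, so you never need Theorem~\ref{model-intersection}; the paper instead uses Theorem~\ref{model-intersection} together with Lemma~\ref{if-model-then-prefixpoint} to show that $\bigsqcap{\cal M}$ is a $\sqsubseteq$-pre-fixed point and then squeezes $M_\mathsf{P}=\bigsqcap{\cal M}$ from the two inequalities --- the route you mention parenthetically. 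Both are valid; your primary route buys independence from the Model Intersection Theorem for this particular identity, while the paper's phrasing simply reuses results already established.
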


The construction of the least fixed point in the above theorem is similar to the one
given for (potentially infinite) propositional programs in~\cite[Section~6]{tocl05}.
Due to space limitations, we provide a short outline of this procedure. In order to
calculate the least fixed point, we start with an interpretation, say $I_0$, which for every
predicate constant $\mathsf{p}$ of type $\rho_1\rightarrow \cdots \rho_n \rightarrow o$,
and for all $d_1\in\lsem \rho_1\rsem_{U_{\mathsf{P}}},\ldots,d_n\in\lsem \rho_n\rsem_{U_{\mathsf{P}}}$,
$I_0(\mathsf{p})\,d_1\cdots d_n = F_0$. We start iterating $T_{\mathsf{P}}$ on this interpretation
until we get to a point where the additional iterations do not affect the $F_0$ and $T_0$ values.
At this point, we reset all the remaining values (regarding predicate constants and arguments
that have not stabilized) to $F_1$, getting an interpretation $I_1$. We start iterating $T_{\mathsf{P}}$ on $I_1$,
until we get to a point where the additional iterations do not affect the $F_1$ and $T_1$ values.
We repeat this process for higher ordinals. In particular, when we get to a limit ordinal, say $\alpha$,
we reset all the values that have not stabilized to a truth value of order less than $\alpha$,
to $F_\alpha$. The whole process is repeated for $\kappa$ times. If the value of certain predicate
constants applied to certain arguments has not stabilized after the $\kappa$ iterations, we assign
to them the intermediate value 0. The resulting interpretation is the least fixed point $M_{\mathsf{P}}$.

\section{Resolving a Semantic Paradox of Higher-Order Logic Programming}
One deficiency of extensional higher-order logic programming is the
inability to define rules (or facts) that have predicate constants in their heads.
The reason of this restriction is a semantic one and will be explained shortly. However,
not all programs that use predicate constants in the heads of clauses are problematic.
For example, the program
\[
\begin{array}{l}
\mbox{\tt computer\_scientist(john).}\\
\mbox{\tt good\_profession(computer\_scientist).}
\end{array}
\]
has a clear declarative reading: the denotation of the {\tt computer\_scientist} predicate is
the relation $\{{\tt john}\}$, while the denotation of {\tt good\_profession} is
the relation $\{\{{\tt john}\}\}$.

In~\cite{Wadge91}, W. W. Wadge argued that allowing rules to have predicate constants
in their heads, creates tricky semantic problems to.
Wadge gave a simple example (duplicated below) that revealed these problems; the example
has since been used in other studies of higher-order logic programming (such as for example
in~\cite{Bezem2}). We present the example in almost identical phrasing as it initially appeared.
\begin{example}\label{wadge-example}
Consider the program:
\[
\begin{array}{l}
\mbox{\tt p(a).}\\
\mbox{\tt q(a).} \\
\mbox{\tt phi(p).}\\
\mbox{\tt q(b):-phi(q).}
\end{array}
\]
One candidate for minimum Herbrand model is the one in which {\tt p} and {\tt q} are true only
of {\tt a}, and {\tt phi} is true only of {\tt p}. However, this means that {\tt p} and {\tt q}
have the same extension, and so themselves are equal. But since {\tt p} and {\tt q} are equal,
and {\tt phi} holds for {\tt p}, it must also hold for {\tt q}. The fourth rule forces us to
add {\tt q(b)}, so that the model becomes $\{\mbox{\tt p(a)},\mbox{\tt phi(p)},\mbox{\tt q(a)},\mbox{\tt q(b)}\}$
(in ad hoc notation). But this is problematic because {\tt p} and {\tt q} are no longer equal
and {\tt q(b)} has lost its justification.
\end{example}

Problems such as the above led Wadge to disallow such clauses from the syntax of the
language proposed in~\cite{Wadge91}. Similarly, the higher-order language introduced
in~\cite{CHRW13} also disallows this kind of clauses.

However, under the semantics presented in this paper, we can now assign a proper meaning to programs
such as the above. Actually, higher order facts such as {\tt phi(p).} above, can be seen
as syntactic sugar in our fragment. A fact of this form simply states that {\tt phi} is
true of a relation if this relation {\em is equal} to {\tt p}. This can simply be written
as:
\[
\begin{array}{l}
\mbox{\tt phi(P):-equal(P,p).}
\end{array}
\]
where {\tt equal} is a higher-order equality relation that can easily be axiomatized in ${\cal H}$ using
the {\tt subset} predicate (see Example~\ref{subset-predicate-example}):
\[
\mathtt{equal} \leftarrow
    \lambda\mathtt{P}\per \lambda\mathtt{Q}.\lpa\mathtt{subset}\,\,\mathtt{P}\,\,\mathtt{Q}\rpa \wedge
                                            \lpa\mathtt{subset}\,\,\mathtt{Q}\,\,\mathtt{P}\rpa.
\]
%

%

One can compute the minimum model of the resulting program using the techniques presented
in this paper. The paradox of Example~\ref{wadge-example} is no longer valid since in the minimum
infinite-valued model the atom {\tt q(b)} has value 0. Intuitively, this means that it is not
possible to decide whether {\tt q(b)} should be true or false.

The above discussion leads to an easy way of handling rules with predicate constants
in their heads. The predicate constants are replaced with predicate variables and higher-order equality atoms
are added in the bodies of clauses. Then, appropriate clauses defining the {\tt equal} predicates
for all necessary types, are added to the program. The infinite valued semantics of the resulting program
is taken as the meaning of the initial program.

\section{Future Work}
We have presented the first, to our knowledge, formal semantics for negation in extensional
higher-order logic programming. The results we have obtained generalize the semantics of classical
logic programming to the higher order setting. We believe that the most interesting direction for
future work is the investigation of implementation techniques for (fragments of) ${\cal H}$, based
on the semantics introduced in this paper. One possible option would be to examine the implementation
of a higher order extension of Datalog with negation. We are currently examining these possibilities.

\bibliographystyle{acmtrans}
\bibliography{iclp2014}

\appendix

\def\pimodel{\ref{pi-is-model}}
\section{Proof of Lemma~\pimodel{}}\label{appendix-A}

We will make use of certain facts established in \cite{tocl14}.

Suppose that $L$ is a basic model. For each $x \in L$ and $\alpha < \kappa$,
we define $x|_\alpha = \bigsqcup_\alpha \{x\}$. It was shown in \cite{tocl14}
that $x =_\alpha x|_\alpha$ and $x|_\alpha =_\alpha x|_\beta$,
$x|_\alpha \leq x|_\beta$ for all
$\alpha < \beta < \kappa$.
Moreover, $x = \bigvee_{\alpha < \kappa} x|_\alpha$.
Also, for all $x,y \in L$ and $\alpha < \kappa$, it holds
 $x =_\alpha y$ iff $x|_\alpha =_\alpha y|_\alpha$ iff
$x|_\alpha = y|_\alpha$,  and $x \sqsubseteq_\alpha y$ iff
$x|_\alpha \sqsubseteq_\alpha y|_\alpha$. And if
$x \sqsubseteq_\alpha y$, then $x|_\alpha \leq y|_\alpha$.
It is also not difficult to prove that
for all  $x\in L$ and $\alpha,\beta<\kappa$,
$(x|_\alpha)|_\beta = x|_{\min\{\alpha,\beta\}}$.
More generally, whenever $X \subseteq (z]_\alpha$ and $\beta \leq \alpha <\kappa$,
it holds $(\bigsqcup_\alpha X)|_\beta = \bigsqcup_\beta X$.
And if $\alpha <\beta$, then $(\bigsqcup_\alpha X)|_\beta = \bigsqcup_\alpha X$.
Finally, we will make use of the following two results from~\cite{tocl14}:
\begin{proposition}\label{lub-of-a-monotonic}
Let $A, B$ be basic models and let $\alpha < \kappa$.
If $f_j : A \rightarrow B$ is an $\alpha$-monotonic function for each $j \in J$,
then so is $f = \bigvee_{j \in J}f_j$ defined by $f(x) = \bigvee_{j \in J}{f_j(x)}$.
\end{proposition}
\begin{lemma}\label{function-model}
Let $Z$ be an arbitrary set and $L$ be a basic model. Then, $Z \rightarrow L$ is a basic model
with the pointwise definition of the order of relations $\leq$ and $\sqsubseteq_\alpha$ for all $\alpha < \kappa$.
\end{lemma}

Suppose that $A,B$ are basic models. By Lemma~\ref{function-model}
the set $A \rightarrow B$ is also a model, where the relations
$\leq$ and $\sqsubseteq_\alpha$, $\alpha < \kappa$, are
defined in a pointwise way (see~\cite[Subsection 5.3]{tocl14} for details).
It follows that for any set $F$
of functions $A \to B$, $\bigvee F $ can be computed pointwisely.
Also, when $F\subseteq (f]_\alpha$ for some $f: A \to B$,
$\bigsqcup_\alpha F$
for $\alpha < \kappa$ can be computed
pointwisely.

We want to show that whenever $f : A \to B$, $\beta < \kappa$
and $F \subseteq (f]_\beta$ is a set of functions such that
$F\subseteq [A  \stackrel{m}{\rightarrow} B]$,
then $\bigsqcup_\beta F \in [A  \stackrel{m}{\rightarrow} B]$.
We will make use of a lemma.

\begin{lemma}
\label{lem-appA}
Let $L$ be a basic model.
For all $x,y\in L $ and
$\alpha ,\beta < \kappa$ with $\alpha \neq\beta$,
$x|_\beta \sqsubseteq_\alpha y|_\beta$ iff
either $\beta < \alpha$ and $x|_\beta = y|_\beta$ (or
equivalently, $x =_\beta y$),
or $\beta > \alpha$ and $x|_\alpha \sqsubseteq_\alpha y|_\alpha$.
\end{lemma}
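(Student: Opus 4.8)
The plan is to treat the two cases $\beta<\alpha$ and $\beta>\alpha$ separately, in each case reducing the claim to the collection of identities about the restriction operators $(-)|_\gamma$ recalled at the beginning of this appendix, together with Axiom~\ref{axiom1} of Definition~\ref{def-basic model}. The key syntactic fact to exploit is the absorption law $(x|_\gamma)|_\delta = x|_{\min\{\gamma,\delta\}}$, which lets nested restrictions collapse in the right direction.

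For the case $\beta<\alpha$, I would argue as follows. Suppose $x|_\beta \sqsubseteq_\alpha y|_\beta$. Since $\beta<\alpha$, Axiom~\ref{axiom1} yields $x|_\beta =_\beta y|_\beta$; applying the recalled equivalence ``$u =_\gamma v$ iff $u|_\gamma = v|_\gamma$'' with $\gamma=\beta$ and $u=x|_\beta$, $v=y|_\beta$, and using $(x|_\beta)|_\beta = x|_{\min\{\beta,\beta\}} = x|_\beta$, this gives $x|_\beta = y|_\beta$. The converse is immediate from reflexivity of $\sqsubseteq_\alpha$. The parenthetical ``equivalently, $x =_\beta y$'' is then just the recalled fact ``$x =_\beta y$ iff $x|_\beta = y|_\beta$''.

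For the case $\beta>\alpha$, I would apply the recalled equivalence ``$u \sqsubseteq_\alpha v$ iff $u|_\alpha \sqsubseteq_\alpha v|_\alpha$'' to $u=x|_\beta$ and $v=y|_\beta$, and then simplify $(x|_\beta)|_\alpha = x|_{\min\{\beta,\alpha\}} = x|_\alpha$ (and likewise for $y$), obtaining $x|_\beta \sqsubseteq_\alpha y|_\beta$ iff $x|_\alpha \sqsubseteq_\alpha y|_\alpha$ directly.

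I do not expect a genuine obstacle here: the statement is a bookkeeping consequence of the absorption law together with Axiom~\ref{axiom1}. The only point requiring care is to invoke the recalled equivalences with the correct ordinal and with the already-restricted arguments, so that the nested restrictions collapse exactly as intended; getting this matching right is the whole content of the proof.
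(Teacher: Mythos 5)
Your proposal is correct and follows essentially the same route as the paper's proof: Axiom~\ref{axiom1} plus the recalled equivalences ($u =_\gamma v$ iff $u|_\gamma = v|_\gamma$, and $u \sqsubseteq_\alpha v$ iff $u|_\alpha \sqsubseteq_\alpha v|_\alpha$) combined with the absorption law $(x|_\gamma)|_\delta = x|_{\min\{\gamma,\delta\}}$. The only (harmless) difference is that for the converse in the case $\beta<\alpha$ you invoke reflexivity of $\sqsubseteq_\alpha$ directly, whereas the paper derives the slightly stronger fact $x|_\beta =_\alpha y|_\beta$; both settle that direction.
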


\begin{proof}
Let $x|_\beta \sqsubseteq_\alpha y|_\beta$.
If $\beta < \alpha$ then $x|_\beta = (x|_\beta)|_\beta
= (y|_\beta)|_\beta = y|_\beta$. If $\beta > \alpha$
then $x|_\alpha = (x|_\beta)|_\alpha \sqsubseteq_\alpha
(y|_\beta)|_\alpha =y|_\alpha$.

Suppose now that $\beta < \alpha$ and $x|_\beta = y|_\beta$.
Then $(x|_\beta)|_\alpha = x|_\beta = y|_\beta
= (y|_\beta)|_\alpha$ and thus $x|_\beta =_\alpha y|_\beta$.
Finally, let $\beta > \alpha$ and $x|_\alpha \sqsubseteq_\alpha y|_\alpha$.
Then $(x|_\beta)|_\alpha =x|_\alpha \sqsubseteq_\alpha  y|_\alpha
=  (y|_\beta)|_\alpha$ and thus $x|_\beta \sqsubseteq_\alpha y|_\beta$.
\end{proof}

\begin{remark}
Under the above assumptions, if $\beta < \alpha$,
then $x|_\beta \sqsubseteq_\alpha y|_\beta$ iff $x|_\beta =_\alpha y|_\beta$ iff
$x|_\beta = y|_\beta$.
\end{remark}

\begin{corollary}
\label{cor-appA}
For all $X,Y \subseteq L$ and $\alpha \neq \beta$,
$\bigsqcup_\beta X \sqsubseteq_\alpha \bigsqcup_\beta Y$ iff
$\beta < \alpha$ and  $\bigsqcup_\beta X = \bigsqcup_\beta Y$,
or $\beta> \alpha$ and $\bigsqcup_\alpha X \sqsubseteq_\alpha \bigsqcup_\alpha Y$.
\end{corollary}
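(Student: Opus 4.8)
The plan is to obtain Corollary~\ref{cor-appA} as the ``set-valued shadow'' of Lemma~\ref{lem-appA}: I would write each of $\bigsqcup_\beta X$ and $\bigsqcup_\beta Y$ as a $\beta$-truncation of an element of $L$, and each of $\bigsqcup_\alpha X$ and $\bigsqcup_\alpha Y$ as the corresponding $\alpha$-truncation of that same element, and then simply invoke Lemma~\ref{lem-appA}. Throughout I assume, as implicit in the notation, that $X$ and $Y$ lie in sets of the form $(z]_\beta$, so that all four least upper bounds occurring in the statement are defined; note in passing that $(z]_\beta \subseteq (z]_\alpha$ whenever $\alpha \le \beta$, which is what guarantees that $\bigsqcup_\alpha X$ and $\bigsqcup_\alpha Y$ make sense when $\alpha < \beta$.

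The two identities I would record first are both instances of the truncation facts collected at the start of Appendix~\ref{appendix-A}. Applying $(\bigsqcup_\gamma X)|_\delta = \bigsqcup_\delta X$ (valid for $\delta \le \gamma$ when $X \subseteq (z]_\gamma$) with $\gamma = \delta = \beta$ gives $(\bigsqcup_\beta X)|_\beta = \bigsqcup_\beta X$, and likewise for $Y$; applying the same fact with $\gamma = \beta$ and $\delta = \alpha$ (so $\alpha \le \beta$) gives $(\bigsqcup_\beta X)|_\alpha = \bigsqcup_\alpha X$, and likewise for $Y$. Now set $x := \bigsqcup_\beta X$ and $y := \bigsqcup_\beta Y$. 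By the first identity $x|_\beta = x$ and $y|_\beta = y$, so Lemma~\ref{lem-appA}, applied to $x,y$ with the given $\alpha \neq \beta$, states that $\bigsqcup_\beta X \sqsubseteq_\alpha \bigsqcup_\beta Y$ holds iff either $\beta < \alpha$ and $\bigsqcup_\beta X = \bigsqcup_\beta Y$, or $\beta > \alpha$ and $x|_\alpha \sqsubseteq_\alpha y|_\alpha$. In the second case $\alpha < \beta$, so the second identity rewrites $x|_\alpha = \bigsqcup_\alpha X$ and $y|_\alpha = \bigsqcup_\alpha Y$, and we recover exactly the claimed equivalence. Since Lemma~\ref{lem-appA} is itself an ``iff'', both directions come out at once.

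There is no genuinely hard step here; the only point requiring care is bookkeeping of the index conventions — making sure the ``$\alpha$'' and ``$\beta$'' of the cited truncation identity $(\bigsqcup_\gamma X)|_\delta = \bigsqcup_\delta X$ are instantiated with $\delta \le \gamma$ in each use, and checking that all four suprema in the statement are well defined via $(z]_\beta \subseteq (z]_\alpha$ for $\alpha \le \beta$. Once these are pinned down, the corollary is a one-line substitution into Lemma~\ref{lem-appA}.
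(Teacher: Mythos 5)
Your proposal is correct and follows essentially the same route as the paper's own proof: both set $x=\bigsqcup_\beta X$, $y=\bigsqcup_\beta Y$, note $x=x|_\beta$ and $y=y|_\beta$, apply Lemma~\ref{lem-appA}, and rewrite $x|_\alpha=\bigsqcup_\alpha X$ (and similarly for $Y$) in the case $\beta>\alpha$. Your extra remarks on well-definedness of the suprema are careful bookkeeping the paper leaves implicit, but the argument is the same.
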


\begin{proof}
Let $x = \bigsqcup_\beta X$ and $y = \bigsqcup_\beta Y$. Then $x = \bigsqcup_\beta X =
\bigsqcup_\beta \{  \bigsqcup_\beta X \} =x|_\beta$ and $y = y|_\beta$.
Let $\beta < \alpha$. Then $x\sqsubseteq_\alpha y$ iff $x = y$.
Let $\beta >\alpha$. Then $x\sqsubseteq_\alpha y$ iff
$x|_\alpha \sqsubseteq_\alpha y|_\alpha$.
But $x|_\alpha = \bigsqcup_\alpha\{\bigsqcup_\beta X\}
= \bigsqcup_\alpha X$ and similarly for $Y$.
\end{proof}

\begin{lemma}\label{lub-beta-lemma}
Let $A$ and $B$ be basic models.
Suppose that $f: A \to B$ and $F \subseteq (f]_\beta
$ (where $\beta < \kappa$) is a set of functions in $[A  \stackrel{m}{\rightarrow} B]$.
Then $\bigsqcup_\beta F$ is also $\alpha$-monotonic for all $\alpha < \kappa$.
\end{lemma}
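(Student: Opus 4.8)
The plan is to prove that $g := \bigsqcup_\beta F$ is $\alpha$-monotonic for every $\alpha < \kappa$, splitting into the cases $\alpha = \beta$, $\alpha > \beta$, and $\alpha < \beta$. Since $F \subseteq (f]_\beta$, the supremum $\bigsqcup_\beta F$ can be computed pointwise (as noted in the paragraph preceding Lemma~\ref{function-model}), so for each $a \in A$ we have $g(a) = \bigsqcup_\beta \{h(a) : h \in F\}$, and moreover $g = g|_\beta$ (applying $\bigsqcup_\beta$ is idempotent). First I would fix $x \sqsubseteq_\alpha y$ in $A$ and aim to show $g(x) \sqsubseteq_\alpha g(y)$, working pointwise in $B$.

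The case $\alpha = \beta$ is the core of the argument: here I use axiom~\ref{axiom3} directly. Since each $h \in F$ is $\beta$-monotonic, $h(x) \sqsubseteq_\beta h(y)$ for all $h$; I want to conclude $\bigsqcup_\beta\{h(x)\} \sqsubseteq_\beta \bigsqcup_\beta\{h(y)\}$. This should follow from the defining (least-upper-bound) property of $\bigsqcup_\beta$ on the slice $(f(x)]_\beta$ resp.\ $(f(y)]_\beta$, together with the fact that $F \subseteq (f]_\beta$ forces $\{h(x) : h \in F\} \subseteq (f(x)]_\beta$ — one checks that $g(y)$ (or rather a suitable translate) serves as an upper $\sqsubseteq_\beta$-bound of the set $\{h(x)\}$, using transitivity of $\sqsubseteq_\beta$ on the relevant slice. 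For $\alpha > \beta$: by Corollary~\ref{cor-appA} applied pointwise, $g(x) \sqsubseteq_\alpha g(y)$ holds iff $\bigsqcup_\beta\{h(x)\} = \bigsqcup_\beta\{h(y)\}$, i.e.\ iff $g(x) =_\beta g(y)$; but $x \sqsubseteq_\alpha y$ with $\alpha > \beta$ gives $x =_\beta y$ by axiom~\ref{axiom1}, hence $h(x) =_\beta h(y)$ for each $h$ (each $h$ is $\beta$-monotonic, in fact $\beta$-monotone in both directions on a $=_\beta$-pair), so the pointwise suprema agree. For $\alpha < \beta$: again by Corollary~\ref{cor-appA} pointwise, $g(x) \sqsubseteq_\alpha g(y)$ iff $\bigsqcup_\alpha\{h(x)\} \sqsubseteq_\alpha \bigsqcup_\alpha\{h(y)\}$; and $\bigsqcup_\alpha\{h(x)\}$ equals $\bigsqcup_\alpha$ of the set $\{h(x) : h \in F\}$ after re-slicing, which reduces this case to the $\alpha$-monotonicity of each $h$ combined with the $\alpha = \alpha$ argument already handled (with $\beta$ replaced by $\alpha$), noting $F \subseteq (f]_\beta \subseteq (f]_\alpha$ since $\alpha < \beta$.

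The main obstacle I anticipate is the bookkeeping around which slice $(z]_\alpha$ each set lives in, and verifying that $\bigsqcup_\alpha$ applied to $\{h(x) : h \in F\}$ really does coincide with what Corollary~\ref{cor-appA} produces from $\bigsqcup_\beta X$ — i.e.\ correctly invoking the identities $(\bigsqcup_\alpha X)|_\beta = \bigsqcup_\beta X$ and $(\bigsqcup_\beta X)|_\alpha = \bigsqcup_\alpha X$ for $\alpha \le \beta$ recalled at the start of this appendix, and Lemma~\ref{lem-appA}. Once the pointwise reductions are set up cleanly via Corollary~\ref{cor-appA}, the cases $\alpha \neq \beta$ become almost immediate, and the only genuine content is the $\alpha = \beta$ case, which is a direct application of the universal property in axiom~\ref{axiom3}. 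I would therefore organize the proof as: (1) reduce to pointwise statements in $B$; (2) dispose of $\alpha = \beta$ using axiom~\ref{axiom3} and $\beta$-monotonicity of the $h$'s; (3) dispose of $\alpha \neq \beta$ using Corollary~\ref{cor-appA} and axiom~\ref{axiom1}.
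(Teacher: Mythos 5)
Your proposal is correct and follows essentially the same route as the paper's proof: reduce to the pointwise computation of $\bigsqcup_\beta F$, handle $\alpha=\beta$ via the least-upper-bound property of Axiom~\ref{axiom3} together with $\alpha$-monotonicity of the members of $F$, and dispose of the cases $\alpha>\beta$ and $\alpha<\beta$ through Corollary~\ref{cor-appA} and Axiom~\ref{axiom1}. The only difference is that you spell out the slice bookkeeping (e.g.\ $\{h(x):h\in F\}\subseteq (f(x)]_\beta$ and $(f]_\beta\subseteq (f]_\alpha$ for $\alpha<\beta$) that the paper leaves implicit.
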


\begin{proof}
Suppose that $\alpha,\beta < \kappa$ and  $x \sqsubseteq_\alpha y$ in $A$.
Then $(\bigsqcup_\beta F)(x) =
\bigsqcup_\beta \{f(x) : f \in F\}$ and
 $(\bigsqcup_\beta F)(y) =
\bigsqcup_\beta \{f(y) : f \in F\}$.
 We have that
$f(x)\sqsubseteq_\alpha f(y)$ for all $f \in F$.
Thus, if  $\alpha = \beta$, then clearly
$(\bigsqcup_\beta F)(x) \sqsubseteq_\alpha (\bigsqcup_\beta F)(y)$.

Suppose that $\beta < \alpha$. Then
$\bigsqcup_\beta \{f(x) : f \in F\} = \bigsqcup_\beta \{f(y) : f \in F\}$
since $f(x) =_\beta f(y)$ for all $f \in F$. Thus, by Corollary~\ref{cor-appA},
$(\bigsqcup_\beta F)(x) \sqsubseteq_\alpha (\bigsqcup_\beta F)(y)$.

Suppose that $\beta > \alpha$. Then $(\bigsqcup_\beta F)(x) \sqsubseteq_\alpha
(\bigsqcup_\beta F)(y)$ follows by Corollary~\ref{cor-appA} from
$\bigsqcup_\alpha \{f(x) : f \in F\}
\sqsubseteq_\alpha \bigsqcup_\alpha \{f(y) : f \in F\}$.
\end{proof}

We equip $[A  \stackrel{m}{\rightarrow} B]$ with the order relations $\leq$ and $\sqsubseteq_\alpha$
inherited from $A \rightarrow B$. We have the following lemma:

\begin{lemma}\label{lem-exp2}
If $A$ and $B$ are basic models, then so is $[A  \stackrel{m}{\rightarrow} B]$
with the pointwise definition of the order of relations $\leq$ and $\sqsubseteq_\alpha$ for all $\alpha < \kappa$.
\end{lemma}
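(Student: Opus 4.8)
The plan is to verify each of the four axioms of Definition~\ref{def-basic model} for the structure $[A \stackrel{m}{\rightarrow} B]$ with the pointwise relations inherited from $A \rightarrow B$. First I would observe that by Lemma~\ref{function-model} we already know that $A \rightarrow B$ is a basic model, so axioms~\ref{axiom1} and~\ref{axiom2} are immediate for $[A \stackrel{m}{\rightarrow} B]$: they are universally quantified statements about the relations $\sqsubseteq_\alpha$, $=_\alpha$, $=$, which hold for all elements of $A \rightarrow B$ and hence a fortiori for the $\alpha$-monotonic ones. Similarly, axiom~\ref{axiom6} concerns $\bigvee$: if $f_j \sqsubseteq_\alpha g_j$ for all $j \in J$ with all $f_j,g_j \in [A \stackrel{m}{\rightarrow} B]$, then $\bigvee_j f_j \sqsubseteq_\alpha \bigvee_j g_j$ holds already in $A \rightarrow B$, and by Proposition~\ref{lub-of-a-monotonic} both $\bigvee_j f_j$ and $\bigvee_j g_j$ are $\alpha$-monotonic for every $\alpha$, hence lie in $[A \stackrel{m}{\rightarrow} B]$. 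So the content of the lemma is really axiom~\ref{axiom3}: the existence and correct behaviour of $\bigsqcup_\beta$.

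For axiom~\ref{axiom3}, let $f \in [A \stackrel{m}{\rightarrow} B]$, let $\beta < \kappa$, and let $F \subseteq (f]_\beta$ with $F \subseteq [A \stackrel{m}{\rightarrow} B]$. Taking the least upper bound $\bigsqcup_\beta F$ computed in $A \rightarrow B$ (which exists and is computed pointwise, as recalled just before Lemma~\ref{lem-appA}), I would invoke Lemma~\ref{lub-beta-lemma} to conclude that $\bigsqcup_\beta F$ is $\alpha$-monotonic for all $\alpha < \kappa$, i.e.\ $\bigsqcup_\beta F \in [A \stackrel{m}{\rightarrow} B]$. It then remains to check that $\bigsqcup_\beta F$ is still the required least upper bound when we restrict the universe of candidate upper bounds from $A \rightarrow B$ to $[A \stackrel{m}{\rightarrow} B]$. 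But this is automatic: the defining property of $\bigsqcup_\beta F$ quantifies over all $z \in (f]_\beta$ in $A \rightarrow B$ with $F \sqsubseteq_\beta z$, and the conclusions $\bigsqcup_\beta F \sqsubseteq_\beta z$ and $\bigsqcup_\beta F \leq z$ hold for all such $z$; restricting to those $z$ that additionally lie in $[A \stackrel{m}{\rightarrow} B]$ only weakens the hypothesis, so the property is inherited. One should also note that $(f]_\beta$ computed in $[A \stackrel{m}{\rightarrow} B]$ is simply $(f]_\beta \cap [A \stackrel{m}{\rightarrow} B]$, which causes no difficulty.

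The only genuine obstacle is the closure of $[A \stackrel{m}{\rightarrow} B]$ under the operations $\bigvee$ and $\bigsqcup_\beta$, i.e.\ showing that these least upper bounds of $\alpha$-monotonic functions remain $\alpha$-monotonic; and this has already been dispatched by Proposition~\ref{lub-of-a-monotonic} and Lemma~\ref{lub-beta-lemma} respectively. Everything else is a routine transfer of properties from the ambient basic model $A \rightarrow B$ to the sub-poset $[A \stackrel{m}{\rightarrow} B]$, using the fact that all four axioms, once the relevant suprema are known to exist inside $[A \stackrel{m}{\rightarrow} B]$, are preserved under passing to a sub-structure that is closed under those suprema.
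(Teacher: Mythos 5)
Your proposal is correct and follows essentially the same route as the paper's proof: both rely on Lemma~\ref{function-model} for the ambient basic model $A \rightarrow B$ with pointwise relations, and on Proposition~\ref{lub-of-a-monotonic} and Lemma~\ref{lub-beta-lemma} to show that $[A \stackrel{m}{\rightarrow} B]$ is closed under $\bigvee$ and $\bigsqcup_\beta$, after which the axioms transfer by restriction. Your write-up is merely more explicit about which axioms are inherited trivially and why the least-upper-bound property survives the restriction of the candidate set.
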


\begin{proof}
It is proved in \cite{tocl14} that the set of functions
$A \to B$  is a basic model with the pointwise definition of the
relations $\leq$ and $\sqsubseteq_\alpha$,
so that for all $f,g: A \to B$ and $\alpha < \kappa$,
$f \leq g$ iff $ f(x) \leq g(x)$ for all $x \in A$
and $f\sqsubseteq_\alpha g$ iff $f(x) \sqsubseteq_\alpha g(x)$
for all $x \in A$. It follows that for any $F \subseteq B^A$
and $\alpha <\kappa$, $\bigvee F$ and $\bigsqcup_\alpha F$
can also be computed pointwise: $(\bigvee F)(x) =
\bigvee \{f(x) : x \in A\}$ and $(\bigsqcup_\alpha F)(x)
= \bigsqcup_\alpha\{f(x) : f \in F\}$.
By Proposition~\ref{lub-of-a-monotonic} and Lemma~\ref{lub-beta-lemma}, for all $F\subseteq B^A$, if $F$
is a set of functions $\alpha$-monotonic for all $\alpha$,
then $\bigvee F$ and $\bigsqcup_\beta F$ are also $\alpha$-monotonic
for all $\alpha$. Since the relations $\leq$ and $\sqsubseteq_\alpha$, $\alpha <
\kappa$  on
$[A  \stackrel{m}{\rightarrow} B]$ are the restrictions of the corresponding relations on $B^A$,
in view of Proposition~\ref{lub-of-a-monotonic} and Lemma~\ref{lub-beta-lemma},
$[A  \stackrel{m}{\rightarrow} B]$ also satisfies the
axioms in Definition~\ref{def-basic model},
so that $[A  \stackrel{m}{\rightarrow} B]$ is a basic model.
\end{proof}

The following lemma is shown in~\cite[Subsection 5.2]{tocl14} and will be used
in the proof of the basis case of the next lemma:
\begin{lemma}
$(V, \leq)$ is a complete lattice and a basic model.
\end{lemma}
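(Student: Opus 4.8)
The plan is to prove the two assertions in turn: first that $(V,\leq)$ is a complete lattice, then that it satisfies the four basic-model axioms.

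For the lattice part, the key point is that $\leq$ is a \emph{total} order, so it suffices to produce a least upper bound for every $X\subseteq V$. I would branch on the shape of $X$. If $X$ meets the $T$-region, then since the indices of the $T_\gamma$ are \emph{well-ordered} there is a least such $\gamma$, and $T_\gamma=\max X=\bigvee X$; if $X$ avoids the $T$'s but contains $0$, then $\bigvee X=0$. The only delicate case is $X\subseteq\{F_\beta:\beta<\kappa\}$: setting $\gamma=\sup\{\beta:F_\beta\in X\}$, I would show $\bigvee X=F_\gamma$ when $\gamma<\kappa$ and $\bigvee X=0$ when the index set is cofinal in $\kappa$, with $\bigvee\emptyset=F_0$. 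Since every subset has a least upper bound, $(V,\leq)$ is a complete lattice.

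For the axioms I would rely throughout on the characterization $x=_\alpha y$ iff $x=y$ or $order(x),order(y)>\alpha$ noted after the definition of $\sqsubseteq_\alpha$, together with the explicit list of configurations in which $x\sqsubseteq_\alpha y$ holds. Axioms~\ref{axiom1} and~\ref{axiom2} are then pure order arithmetic: in every generating case of $x\sqsubseteq_\beta y$ with $\alpha<\beta$ both orders exceed $\alpha$, giving $x=_\alpha y$; and if $x\neq y$ the choice $\alpha=\min\{order(x),order(y)\}<\kappa$ breaks $x=_\alpha y$. For axiom~\ref{axiom3} I would first identify $(x]_\alpha$, which is $\{x\}$ when $order(x)<\alpha$ and the whole band $\{v:order(v)\geq\alpha\}=[F_\alpha,T_\alpha]$ when $order(x)\geq\alpha$. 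On that band $\sqsubseteq_\alpha$ degenerates to the three-level chain $F_\alpha\sqsubset_\alpha(\text{order}>\alpha)\sqsubset_\alpha T_\alpha$, so $\bigsqcup_\alpha X$ can be read off: it is $T_\alpha$ if $T_\alpha\in X$, the $\leq$-least element of order $>\alpha$ (namely $F_{\alpha+1}$, or $0$ when $\alpha+1=\kappa$) if $X$ meets the middle level but not $T_\alpha$, and $F_\alpha$ otherwise; each case is closed by checking this $y$ is at once the $\sqsubseteq_\alpha$-least and the $\leq$-least upper bound of $X$ within $(x]_\alpha$.

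The hard part is axiom~\ref{axiom6}, the compatibility $x_j\sqsubseteq_\alpha y_j$ for all $j$ implies $\bigvee_j x_j\sqsubseteq_\alpha\bigvee_j y_j$, precisely because $\sqsubseteq_\alpha$ is \emph{not} the restriction of $\leq$ on the middle band. My plan is a case analysis on $order(\bigvee_j y_j)$, using two enabling facts: (i) for each $j$ either $x_j=y_j$ with $order<\alpha$, or $order(x_j),order(y_j)\geq\alpha$; and (ii) a $\leq$-lub that is a $T_\gamma$ or an $F_\gamma$ is forced, via the well-ordering used above, to be attained or bounded by the summands. With these I would show that if $order(\bigvee_j y_j)<\alpha$ then in fact $\bigvee_j x_j=\bigvee_j y_j$, while if $order(\bigvee_j y_j)\geq\alpha$ then also $order(\bigvee_j x_j)\geq\alpha$, whence the relation follows from a generating case of $\sqsubseteq_\alpha$. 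The seemingly dangerous configurations, such as $\bigvee_j x_j=T_\alpha$ with $\bigvee_j y_j\neq T_\alpha$, are ruled out exactly because a $T$-valued lub forces a matching $T_\alpha$ among the $y_j$. Reconciling the $\leq$-lub with $\sqsubseteq_\alpha$ in this way is where all the care lies; the other axioms are routine.
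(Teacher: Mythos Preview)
Your proposal is correct. The paper itself does not prove this lemma in-line: it simply records that the result ``is shown in~\cite[Subsection 5.2]{tocl14}'' and uses it as the base case for Lemma~\ref{pi-is-model}. What you have written is essentially a self-contained reconstruction of that cited argument: the complete-lattice property via the total order on $V$ and a case split on whether $X$ meets the $T$-region, contains $0$, or lies in the $F$-region (with the cofinal-in-$\kappa$ subcase yielding $0$); and the four axioms by order arithmetic on the band $[F_\alpha,T_\alpha]$.

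One small imprecision worth noting: in your Axiom~\ref{axiom1} sketch you say ``in every generating case of $x\sqsubseteq_\beta y$ both orders exceed $\alpha$'', but the reflexive clause $x\sqsubseteq_\beta x$ with $order(x)<\beta$ allows $order(x)<\alpha$; the conclusion $x=_\alpha y$ still holds there because $x=y$, so nothing breaks. Your handling of Axiom~\ref{axiom6} is the substantive part and is sound: the key observations that a $\leq$-supremum equal to some $T_\gamma$ must be attained by a summand, and that $x_j\sqsubseteq_\alpha F_\alpha$ forces $x_j=F_\alpha$, are exactly what close the two ``dangerous'' configurations you flag.
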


\begingroup
\def\thelemma{\ref{pi-is-model}}
\begin{lemma}
Let $D$ be a nonempty set and $\pi$ be a predicate type. Then,
$(\lsem \pi \rsem_D, \leq_\pi)$ is a complete lattice and a basic model.
\end{lemma}
\addtocounter{lemma}{-1}
\endgroup
\begin{proof}
Let $\pi$ be a predicate type. We prove that $\lsem \pi \rsem_D$
is a basic model by induction on the structure of $\pi$.
When $\pi = o $, $\lsem \pi \rsem_D = V$,
a basic model. Suppose that $\pi$ is of the sort $\iota \to \pi'$. Then
$\lsem \pi \rsem_D = D \rightarrow \lsem \pi'\rsem_D$, which is a basic model,
since $\lsem \pi' \rsem_D$ is a model by the induction hypothesis.
 Finally,
let $\pi$ be of the sort $\pi_1 \to \pi_2$. By the induction
hypothesis, $\lsem \pi_i \rsem_D$ is a model for $i = 1,2$.
Thus, by Lemma~\ref{lem-exp2}, $\lsem \pi \rsem_D = [\lsem \pi_1\rsem_D \stackrel{m}{\rightarrow}
\lsem \pi_2 \rsem_D ]$ is also a basic model.
\end{proof}

\begin{remark}
Let $\C$ denote the category of all basic models
and $\alpha$-monotonic functions. The above results
show that $\C$ is cartesian closed, since
for all basic models $A,B$,
the evaluation function $\eval: (A \times B) \times A \to B$
is $\alpha$-monotonic (in both arguments) for all $\alpha < \kappa$.

Indeed, suppose that $f,g\in [A \stackrel{m}{\to} B]$ and $x,y \in A$ with
$f\sqsubseteq_\alpha g$ and $x\sqsubseteq_\alpha y$.
Then $\eval(f,x) = f(x) \sqsubseteq_\alpha g(x) = \eval(g,x)$
by the pointwise definition of $f \sqsubseteq_\alpha g$.
Also, $\eval(f,x) = f(x) \sqsubseteq_\alpha f(y) = \eval(f,y)$
since $f$ is $\alpha$-monotonic.

Since $\C$ is cartesian closed, for all $f\in [B \times A \stackrel{m}{\to} C]$
there is a unique $\Lambda f\in\![B \stackrel{m}{\to}\![A \stackrel{m}{\to}\!C]]$ in with
$f(y,x) = \eval(\Lambda f (y), x)$ for all $x \in A$ and $y \in B$.
\end{remark}

\def\lemmaexp{\ref{expressions-well-defined}}
\def\monotonicitylemma{\ref{monotonicity-of-semantics}}
\def\interlatticelemma{\ref{interpretations-lattice}}
\section{Proofs of Lemmas~\lemmaexp{}, \interlatticelemma{} and \monotonicitylemma{}}

\begingroup
\def\thelemma{\ref{expressions-well-defined}}
\begin{lemma}
Let $\mathsf{E} : \rho$ be an expression and let $D$ be a nonempty set.
Moreover, let $s$ be a state over $D$ and let $I$ be an interpretation over $D$.
Then, $\lsem \mathsf{E} \rsem_s(I) \in \lsem \rho \rsem_D$.
\end{lemma}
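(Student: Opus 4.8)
The plan is to proceed by structural induction on the expression $\mathsf{E}$. For the base cases there is nothing to do beyond unfolding Definition~\ref{definition-semantics-of-expressions}: $\lsem\cfalse\rsem_s(I)=F_0$ and $\lsem\ctrue\rsem_s(I)=T_0$ belong to $V=\lsem o\rsem_D$; $\lsem\mathsf{c}\rsem_s(I)=I(\mathsf{c})\in D=\lsem\iota\rsem_D$ and $\lsem\mathsf{p}\rsem_s(I)=I(\mathsf{p})\in\lsem\pi\rsem_D$ hold by the definition of an interpretation; and $\lsem\mathsf{V}\rsem_s(I)=s(\mathsf{V})\in\lsem\rho\rsem_D$ by the definition of a state. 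For the inductive cases I would read the value of $\lsem\mathsf{E}\rsem_s(I)$ off Definition~\ref{definition-semantics-of-expressions} and check membership from the induction hypothesis: a function application $(\mathsf{f}\,\mathsf{E}_1\cdots\mathsf{E}_n)$ lands in $D$ because $I(\mathsf{f})\in D^n\rightarrow D$ and each $\lsem\mathsf{E}_i\rsem_s(I)\in D$; a predicate application $(\mathsf{E}_1\,\mathsf{E}_2)$ with $\mathsf{E}_1:\rho\rightarrow\pi$ and $\mathsf{E}_2:\rho$ lands in $\lsem\pi\rsem_D$ since, whether $\rho=\iota$ or $\rho$ is a predicate type, an element of $\lsem\rho\rightarrow\pi\rsem_D$ applied to an element of $\lsem\rho\rsem_D$ produces an element of $\lsem\pi\rsem_D$; the cases for $\bigvee_\pi$, $\bigwedge_\pi$ and $\exists$ use that $\lsem\pi\rsem_D$ and $V$ are complete lattices (Lemma~\ref{pi-is-model}), so that the prescribed least upper bounds and greatest lower bound exist and stay inside the domain; and the cases for $\mnot$ and $\approx$ follow from the displayed finite case split, using that $V$ is closed under $\mnot$ and contains $F_0$ and $T_0$.

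The only case that is not a routine typing argument is $\lambda$-abstraction. For $\mathsf{E}=(\lambda\mathsf{V}.\mathsf{E}')$ with $\mathsf{V}:\rho$ and $\mathsf{E}':\pi$ we have $\lsem\mathsf{E}\rsem_s(I)=\lambda d.\lsem\mathsf{E}'\rsem_{s[\mathsf{V}/d]}(I)$ with $d$ ranging over $\lsem\rho\rsem_D$. When $\rho=\iota$ nothing more is needed, since $\lsem\iota\rightarrow\pi\rsem_D=D\rightarrow\lsem\pi\rsem_D$ and, for each $d\in D$, the induction hypothesis applied to the state $s[\mathsf{V}/d]$ gives $\lsem\mathsf{E}'\rsem_{s[\mathsf{V}/d]}(I)\in\lsem\pi\rsem_D$. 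When $\rho$ is a predicate type $\pi_1$, however, $\lsem\pi_1\rightarrow\pi\rsem_D=[\lsem\pi_1\rsem_D\stackrel{m}{\rightarrow}\lsem\pi\rsem_D]$, so I must also show that the map $d\mapsto\lsem\mathsf{E}'\rsem_{s[\mathsf{V}/d]}(I)$ is $\alpha$-monotonic for every $\alpha<\kappa$, i.e.\ that $d\sqsubseteq_\alpha d'$ implies $\lsem\mathsf{E}'\rsem_{s[\mathsf{V}/d]}(I)\sqsubseteq_\alpha\lsem\mathsf{E}'\rsem_{s[\mathsf{V}/d']}(I)$. This is a monotonicity statement about varying the \emph{state}, and it is not delivered by a plain well-definedness induction hypothesis; I expect this to be the main obstacle.

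The way I would get around it --- and, I suspect, the reason Lemmas~\ref{expressions-well-defined} and~\ref{monotonicity-of-semantics} are proved together in the same appendix --- is to strengthen the statement being proved by induction so that it simultaneously asserts (i) $\lsem\mathsf{E}\rsem_s(I)\in\lsem\rho\rsem_D$ and (ii) for every $\alpha<\kappa$: if $I\sqsubseteq_\alpha J$ and $s(\mathsf{W})\sqsubseteq_\alpha s'(\mathsf{W})$ for every argument variable $\mathsf{W}$ (so $s$ and $s'$ agree on individual variables, where $\sqsubseteq_\alpha$ is just equality), then $\lsem\mathsf{E}\rsem_s(I)\sqsubseteq_\alpha\lsem\mathsf{E}\rsem_{s'}(J)$ --- essentially the state-parametrized form of Lemma~\ref{monotonicity-of-semantics}. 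Claim (ii) propagates through the same induction without difficulty, because each clause of Definition~\ref{definition-semantics-of-expressions} is assembled from operations preserving $\sqsubseteq_\alpha$: application of basic-model functions is $\alpha$-monotonic (cf.\ the cartesian-closure remark above), $\bigvee$ and $\bigsqcup_\beta$ preserve $\alpha$-monotonicity by Proposition~\ref{lub-of-a-monotonic} and Lemma~\ref{lub-beta-lemma}, $\bigwedge_\pi$ is handled dually, and $\mnot$ is checked directly on $V$. Then in the $\lambda$-abstraction case I instantiate (ii) with $J=I$ and with $s'$ equal to $s$ except that $\mathsf{V}$ is reassigned from $d$ to $d'$ where $d\sqsubseteq_\alpha d'$, which yields precisely the $\alpha$-monotonicity of $d\mapsto\lsem\mathsf{E}'\rsem_{s[\mathsf{V}/d]}(I)$ required to place $\lsem\mathsf{E}\rsem_s(I)$ in $[\lsem\pi_1\rsem_D\stackrel{m}{\rightarrow}\lsem\pi\rsem_D]$. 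So the real content is the recognition that well-definedness of the abstraction case is entangled with monotonicity and has to be handled by a joint induction; everything else is bookkeeping against the definitions of the semantic domains.
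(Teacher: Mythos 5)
Your proposal is correct and follows essentially the same route as the paper: the paper also proves the membership claim by structural induction on $\mathsf{E}$ jointly with an auxiliary $\alpha$-monotonicity statement about varying the state (there, changing the value of a single predicate variable $\mathsf{V}$ from $x$ to $y$ with $x\sqsubseteq_\alpha y$, with $I$ fixed), and uses it exactly where you do, to place $\lambda$-abstractions over predicate types in $[\lsem\pi_1\rsem_D\stackrel{m}{\rightarrow}\lsem\pi\rsem_D]$. Your version of the auxiliary statement is a slight generalization (varying the interpretation and all variables at once), but since you only invoke the instance $J=I$ with one variable changed, the argument is the same in substance.
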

\addtocounter{lemma}{-1}
\endgroup
\begin{proof}
If $\rho = \iota$ then the claim is clear. Let $\mathsf{E}$ be of
a predicate type $\pi$.
We prove simultaneously the following auxiliary statement.
Let $\alpha < \kappa$, $\mathsf{V} : \pi$, $x, y \in \lsem \pi' \rsem_D$.
If $x\sqsubseteq_\alpha y$ then $\lsem \mathsf{E} \rsem_{s[\mathsf{V}/x]}(I) \sqsubseteq_\alpha \lsem \mathsf{E} \rsem_{s[\mathsf{V}/y]}(I)$.
The proof is by structural induction on $\mathsf{E}$.
We will cover only the nontrivial cases.

\vspace{0.1cm}
\noindent{\em Case $(\mathsf{E}_1\ \mathsf{E}_2)$:}
The main statement follows directly by the induction hypothesis of $\mathsf{E}_1$
and $\mathsf{E}_2$. There are two cases. Suppose that $E_1: \pi_1 \to \pi$ and
$E_2: \pi_1$. Then $\lsem \mathsf{E}_1 \rsem_s(I) \in \lsem \pi_1\to\pi\rsem_D
= [\lsem \pi_1 \rsem_D \stackrel{m}{\to} \lsem \pi \rsem_D ]$
and $\lsem \mathsf{E}_2 \rsem_s(I) \in \lsem \pi_1 \rsem_D$ by the induction
hypothesis. Thus,
$\lsem \mathsf{E}_1 \rsem_s(I)\ (\lsem \mathsf{E} \rsem_s(I))  \in \lsem \pi \rsem_D$.
Suppose now that $E_1: \iota \to \pi$ and $E_2: \iota$. Then
$ \lsem \mathsf{E}_1 \rsem_s(I)\in \lsem \iota \to \pi\rsem_D = D \to \lsem \pi \rsem_D$ by the induction
hypothesis and $\lsem \mathsf{E}_2 \rsem_s(I) \in \lsem \iota \rsem_D = D$.
It follows again that
$ \lsem \mathsf{E}_1 \rsem_s(I)\ (\lsem \mathsf{E} \rsem_s(I))  \in \lsem \pi \rsem_D$.

\vspace{0.1cm}
\noindent{\em Auxiliary statement:}
Let $x,y \in \lsem \pi' \rsem_D$ and assume $x \sqsubseteq_\alpha y$.
We have by definition
$\lsem (\mathsf{E}_1\ \mathsf{E}_2) \rsem_{s[\mathsf{V}/x]}(I) =
\lsem \mathsf{E}_1 \rsem_{s[\mathsf{V}/x]}(I)\
(\lsem \mathsf{E}_2 \rsem_{s[\mathsf{V}/x]}(I))$, and similarly for
$\lsem (\mathsf{E}_1\ \mathsf{E}_2) \rsem_{s[\mathsf{V}/y]}(I)$.
We have $E_1: \pi_1 \to \pi$ and $E_2: \pi_1$ or
$E_1: \iota \to \pi$ and $E_2: \iota$. In the first case,
by induction hypothesis $\lsem \mathsf{E}_1 \rsem_{s[\mathsf{V}/x]}(I) \in \lsem \pi_1 \rightarrow \pi \rsem_D$,
and thus is $\alpha$-monotonic. Also,
$\lsem \mathsf{E}_1 \rsem_{s[\mathsf{V}/x]}(I) \sqsubseteq_\alpha \lsem \mathsf{E}_1 \rsem_{s[\mathsf{V}/y]}(I)$
and
$\lsem \mathsf{E}_2 \rsem_{s[\mathsf{V}/x]}(I) \sqsubseteq_\alpha \lsem \mathsf{E}_2 \rsem_{s[\mathsf{V}/y]}(I)$ by the induction hypothesis.
It follows that
$$\lsem \mathsf{E}_1 \rsem_{s[\mathsf{V}/x]}(I)\
(\lsem \mathsf{E}_2 \rsem_{s[\mathsf{V}/x]}(I))
\sqsubseteq_\alpha
\lsem \mathsf{E}_1 \rsem_{s[\mathsf{V}/x]}(I)\
(\lsem \mathsf{E}_2 \rsem_{s[\mathsf{V}/y]}(I))
\sqsubseteq_\alpha
\lsem \mathsf{E}_1 \rsem_{s[\mathsf{V}/y]}(I)\
(\lsem \mathsf{E}_2 \rsem_{s[\mathsf{V}/y]}(I)).$$
The second case is similar. We have
$\lsem \mathsf{E}_1 \rsem_{s[\mathsf{V}/x]}(I) \sqsubseteq_\alpha \lsem \mathsf{E}_1 \rsem_{s[\mathsf{V}/y]}(I)$ by the induction hypothesis, moreover,
$\lsem \mathsf{E}_2 \rsem_{s[\mathsf{V}/x]}(I) = \lsem \mathsf{E}_2 \rsem_{s[\mathsf{V}/y]}(I)$.
Therefore,
$\lsem \mathsf{E}_1 \rsem_{s[\mathsf{V}/x]}(I)\
(\lsem \mathsf{E}_2 \rsem_{s[\mathsf{V}/x]}(I))
\sqsubseteq_\alpha
\lsem \mathsf{E}_1 \rsem_{s[\mathsf{V}/y]}(I)\
(\lsem \mathsf{E}_2 \rsem_{s[\mathsf{V}/y]}(I))$.

\vspace{0.1cm}
\noindent{\em Case $(\lambda\mathsf{V}.\mathsf{E})$}: Assume $V : \rho_1$
and $\mathsf{E}: \pi_2$. We will show that
$\lsem \lambda\mathsf{V}.\mathsf{E} \rsem_s(I) \in \lsem \rho_1 \rightarrow \pi_2 \rsem_D$.
If $\rho_1 = \iota$ then the result follows easily from the induction hypothesis of the first statement.
Assume $\rho_1 = \pi_1$. We show that
$\lsem \lambda\mathsf{V}.\mathsf{E} \rsem_s(I) \in \lsem \pi_1 \rightarrow \pi_2 \rsem_D$, that is,
$\lambda d. \lsem \mathsf{E} \rsem_{s[\mathsf{V}/d]}(I)$ is $\alpha$-monotonic for all $\alpha < \kappa$.
That follows directly by the induction hypothesis of the auxiliary statement.

\vspace{0.1cm}
\noindent{\em Auxiliary statement:}
It suffices to show that $\lsem (\lambda\mathsf{U}.\mathsf{E}) \rsem_{s[\mathsf{V}/x]}(I) \sqsubseteq_\alpha
\lsem (\lambda\mathsf{U}.\mathsf{E}) \rsem_{s[\mathsf{V}/y]}(I)$ and equivalently for every $d$,
$\lsem \mathsf{E} \rsem_{s[\mathsf{V}/x][\mathsf{U}/d]}(I) \sqsubseteq_\alpha
 \lsem \mathsf{E} \rsem_{s[\mathsf{V}/y][\mathsf{U}/d]}(I)$ which follows from induction hypothesis.
\end{proof}

\begingroup
\def\thelemma{\ref{interpretations-lattice}}
\begin{lemma}
Let $\mathsf{P}$ be a program.
Then, ${\cal I}_\mathsf{P}$ is a complete lattice and a basic model.
\end{lemma}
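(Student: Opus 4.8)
The plan is to exhibit ${\cal I}_\mathsf{P}$ as a product of basic models, one factor for each predicate constant occurring in $\mathsf{P}$, and then to invoke the fact that a product of basic models is again a basic model — the dependent-product analogue of Lemma~\ref{function-model}. First I would observe that a Herbrand interpretation of $\mathsf{P}$ is completely determined by its restriction to the predicate constants: its domain is fixed to be $U_\mathsf{P}$, each individual constant $\mathsf{c}$ is sent to $\mathsf{c}$ itself, and each function symbol $\mathsf{f}$ receives its canonical Herbrand interpretation. Hence $I \mapsto (I(\mathsf{p}))_{\mathsf{p}}$ is a bijection between ${\cal I}_\mathsf{P}$ and the product $\prod\{\lsem \pi \rsem_{U_\mathsf{P}} : \mathsf{p}:\pi \text{ a predicate constant of } \mathsf{P}\}$, with $\mathsf{p}$ ranging over the finitely many predicate constants occurring in $\mathsf{P}$. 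By the definitions of $\leq_{{\cal I}_\mathsf{P}}$ and of $\sqsubseteq_\alpha$ on ${\cal I}_\mathsf{P}$, this bijection carries the relations on ${\cal I}_\mathsf{P}$ exactly to the componentwise relations on the product, so it suffices to prove the claim for that product.

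By Lemma~\ref{pi-is-model}, each factor $(\lsem \pi \rsem_{U_\mathsf{P}}, \leq_\pi)$ is a complete lattice and a basic model. It then remains to carry out the (entirely routine) verification that a product $\prod_{i} L_i$ of basic models $L_i$, equipped with the componentwise order $\leq$ and the componentwise preorders $\sqsubseteq_\alpha$, is a complete lattice and a basic model; this is precisely the argument behind Lemma~\ref{function-model} (the special case $L_i = L$ for all $i$), run verbatim in the dependent setting. Being a complete lattice under componentwise $\leq$ is standard. For Definition~\ref{def-basic model}, axioms \ref{axiom1}, \ref{axiom2} and \ref{axiom6} are immediate, since each holds in every factor and all the relations involved are componentwise. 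For axiom \ref{axiom3}, given $x$ and $X \subseteq (x]_\alpha$, one checks that each projection $X_i = \{y(i) : y \in X\}$ satisfies $X_i \subseteq (x(i)]_\alpha$ in $L_i$, defines $(\bigsqcup_\alpha X)(i) := \bigsqcup_\alpha X_i$ via axiom \ref{axiom3} in $L_i$, and then verifies the required universal property — being simultaneously $\sqsubseteq_\alpha$-least and $\leq$-least among the $\sqsubseteq_\alpha$-upper bounds of $X$ lying in $(x]_\alpha$ — coordinate by coordinate.

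The only mildly delicate step, and the one I would take care over, is exactly this verification of axiom \ref{axiom3} in the product: one must confirm that the element assembled componentwise indeed lies in $(x]_\alpha$ and meets both the $\sqsubseteq_\alpha$-minimality and the $\leq$-minimality conditions demanded of $\bigsqcup_\alpha X$. Since every clause of that requirement reduces, in each coordinate, to the corresponding property of $\bigsqcup_\alpha$ in the factor $\lsem \pi \rsem_{U_\mathsf{P}}$ supplied by Lemma~\ref{pi-is-model}, no new idea is needed beyond this reduction and the bookkeeping already present in the proof of Lemma~\ref{function-model}. Assembling the pieces, ${\cal I}_\mathsf{P}$ is a complete lattice and a basic model.
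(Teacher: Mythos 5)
Your proposal is correct and takes essentially the same approach as the paper: identify ${\cal I}_\mathsf{P}$ with a product of the domains $\lsem \pi \rsem_{U_\mathsf{P}}$ determined by the values on the predicate constants, and combine Lemma~\ref{pi-is-model} with the fact that componentwise products of basic models are basic models. The only difference is presentational: the paper groups the constants of each type into a function space ${\cal P}_\pi \to \lsem \pi \rsem_{U_\mathsf{P}}$ via Lemma~\ref{function-model} and cites the product closure result from the literature, whereas you verify the product axioms (in particular Axiom~\ref{axiom3}) coordinatewise yourself, which is a sound substitute.
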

\addtocounter{lemma}{-1}
\endgroup
\begin{proof}
From Lemma~\ref{pi-is-model} we have that for all predicate types $\pi$, $\lsem \pi \rsem_{U_\mathsf{P}}$ is a complete lattice and a basic model. It follows, by Lemma~\ref{function-model}, that for all predicate types $\pi$, ${\cal P}_\pi \to \lsem \pi \rsem_{U_\mathsf{P}}$ is also a complete lattice
and a model, where ${\cal P}_\pi$ is the set of predicate constants of type $\pi$. Then, ${\cal I}_\mathsf{P}$ is $\prod_{\pi}{{\cal P}_\pi \to \lsem \pi \rsem_{U_\mathsf{P}}}$ which is also a basic model (proved in \cite{tocl14}).
\end{proof}

\begingroup
\def\thelemma{\ref{monotonicity-of-semantics}}
\begin{lemma}[$\alpha$-Monotonicity of Semantics]
Let $\mathsf{P}$ be a program and let $\mathsf{E} : \pi$ be an expression.
Let $I, J$ be Herbrand interpretations and $s$ be a Herbrand
state of $\mathsf{P}$. For all $\alpha < \kappa$, if $I \sqsubseteq_\alpha J$ then
$\lsem \mathsf{E} \rsem_s(I) \sqsubseteq_\alpha \lsem \mathsf{E} \rsem_s(J)$.
\end{lemma}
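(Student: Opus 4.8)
The plan is to prove the statement by structural induction on the expression $\mathsf{E}$, exactly mirroring the pattern already used in the proof of Lemma~\ref{expressions-well-defined} in Appendix~B. That is, I would prove the main claim ($I \sqsubseteq_\alpha J$ implies $\lsem \mathsf{E} \rsem_s(I) \sqsubseteq_\alpha \lsem \mathsf{E} \rsem_s(J)$) simultaneously with the auxiliary statement that variation of the state in a single argument variable is also $\alpha$-monotonic (if $x \sqsubseteq_\alpha y$ then $\lsem \mathsf{E} \rsem_{s[\mathsf{V}/x]}(I) \sqsubseteq_\alpha \lsem \mathsf{E} \rsem_{s[\mathsf{V}/y]}(I)$), since the two feed into each other in the $\lambda$-abstraction and application cases. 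For the present lemma only the interpretation $I$ varies, but the state-variation part is still needed as an auxiliary invariant for the $\lambda$ case.

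The base cases are immediate: constants $\cfalse$, $\ctrue$ and individual constants have denotations independent of $I$ (so $=_\alpha$-related to themselves, using the remark that $x =_\alpha x$ always holds for equal elements), argument variables $\mathsf{V}$ give $s(\mathsf{V})$ independent of $I$, and a predicate constant $\mathsf{p}$ gives $I(\mathsf{p}) \sqsubseteq_\alpha J(\mathsf{p})$ directly from the definition of $I \sqsubseteq_\alpha J$ on ${\cal I}_\mathsf{P}$. For the inductive cases: function application $(\mathsf{f}\,\mathsf{E}_1\cdots\mathsf{E}_n)$ only involves $\iota$-typed subexpressions whose denotations are $I$-independent. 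The predicate-application case $(\mathsf{E}_1\ \mathsf{E}_2)$ is handled as in the earlier lemma: by induction $\lsem \mathsf{E}_1 \rsem_s(I) \sqsubseteq_\alpha \lsem \mathsf{E}_1 \rsem_s(J)$, and $\lsem \mathsf{E}_1 \rsem_s(I)$ is $\alpha$-monotonic (it lives in a $[\cdot \stackrel{m}{\to} \cdot]$ space by Lemma~\ref{expressions-well-defined}), and $\lsem \mathsf{E}_2 \rsem_s(I) \sqsubseteq_\alpha \lsem \mathsf{E}_2 \rsem_s(J)$, so composing the two monotonicity facts gives $\lsem \mathsf{E}_1 \rsem_s(I)(\lsem \mathsf{E}_2 \rsem_s(I)) \sqsubseteq_\alpha \lsem \mathsf{E}_1 \rsem_s(J)(\lsem \mathsf{E}_2 \rsem_s(J))$ via an intermediate term. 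The $\lambda$-case reduces pointwise to the induction hypothesis on the body. For $\bigvee_\pi$ and $\exists$ I would invoke Axiom~\ref{axiom6} of a basic model (least upper bounds preserve $\sqsubseteq_\alpha$); the $\bigwedge_\pi$ (greatest lower bound) case is the dual, which should follow because ${\cal I}_\mathsf{P}$ and all $\lsem \pi \rsem_D$ are basic models and hence one can derive the analogous property for $\bigsqcap$/$\bigwedge$ from the axioms. The equality case $(\mathsf{E}_1 \approx \mathsf{E}_2)$ is trivial since both sides are $\iota$-typed and $I$-independent, so the Boolean result is literally equal.

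The case I expect to be the main obstacle is negation $(\mnot \mathsf{E})$. Here $\lsem \mathsf{E} \rsem_s(I) \sqsubseteq_\alpha \lsem \mathsf{E} \rsem_s(J)$ in $V$, and I must show $\lsem \mnot\mathsf{E} \rsem_s(I) \sqsubseteq_\alpha \lsem \mnot\mathsf{E} \rsem_s(J)$. The point is that negation shifts orders up by one: $\mnot$ sends a value of order $\gamma$ to a value of order $\gamma+1$ (and $0$ to $0$). So I would argue by cases on whether $\mathrm{order}(\lsem \mathsf{E} \rsem_s(I))$ and $\mathrm{order}(\lsem \mathsf{E} \rsem_s(J))$ are $<\alpha$, $=\alpha$, or $>\alpha$, using the characterization ``$x =_\alpha y$ iff $x=y$ or both orders exceed $\alpha$'' and the three clauses defining $\sqsubseteq_\alpha$ on $V$. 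If $x =_\alpha y$ then either $x=y$, whence $\mnot x = \mnot y$; or both orders exceed $\alpha$, whence both $\mnot x$ and $\mnot y$ have order exceeding $\alpha$ (in fact $\geq \alpha+1 > \alpha$, or are $0$), so $\mnot x =_\alpha \mnot y$ by clause~3. If $x \sqsubset_\alpha y$, the only way this happens in $V$ with the order of one of them $\le \alpha$ forces essentially $x = F_\alpha$ or $y = T_\alpha$ (or a level below $\alpha$ where $x=y$ is forced); one checks that $\mnot F_\alpha = T_{\alpha+1}$ and $\mnot T_\alpha = F_{\alpha+1}$ both have order $\alpha+1 > \alpha$, so again $\mnot x =_\alpha \mnot y$. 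Hence in every case $\lsem \mnot\mathsf{E} \rsem_s(I) \sqsubseteq_\alpha \lsem \mnot\mathsf{E} \rsem_s(J)$. This truth-value bookkeeping is the crux; everything else is a routine propagation of the induction hypothesis through the compositional clauses of Definition~\ref{definition-semantics-of-expressions}.
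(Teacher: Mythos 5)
Your overall strategy is the paper's: structural induction on $\mathsf{E}$, with the application case handled by the induction hypothesis together with the $\alpha$-monotonicity of $\lsem \mathsf{E}_1 \rsem_s(J)$ guaranteed by Lemma~\ref{expressions-well-defined}, the $\lambda$-case reduced pointwise to the induction hypothesis, and $\bigvee_\pi$ and $\exists$ handled via Axiom~\ref{axiom6}. Your truth-value bookkeeping for $(\mnot\mathsf{E})$ is also essentially the paper's argument: if $order(\lsem\mathsf{E}\rsem_s(I))\geq\alpha$ then both negated values have order $>\alpha$, and otherwise the two values coincide. One remark: the auxiliary state-variation statement you propose to carry along is unnecessary for this lemma; since only the interpretation varies, the $\lambda$-case needs nothing beyond the induction hypothesis at the state $s[\mathsf{V}/d]$, and the $\alpha$-monotonicity of the function $\lsem\mathsf{E}_1\rsem_s(J)$ used in the application case is already supplied by Lemma~\ref{expressions-well-defined}, which is where that auxiliary statement genuinely belonged.

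The genuine gap is the $\bigwedge_\pi$ case. You dismiss it as ``the dual'' of the $\bigvee_\pi$ case, derivable from the basic-model axioms, but Definition~\ref{def-basic model} is not self-dual: Axiom~\ref{axiom6} speaks only about least upper bounds with respect to $\leq$, and no axiom asserts that greatest lower bounds preserve $\sqsubseteq_\alpha$. (Note also that $\bigwedge_\pi$ in Definition~\ref{definition-semantics-of-expressions} is the glb with respect to $\leq_\pi$, not with respect to $\sqsubseteq$, so the completeness of $(L,\sqsubseteq)$ from Lemma~\ref{basic-model-lattice} does not help either.) The property is true, but it must be verified concretely: the paper writes $\pi = \rho_1\to\cdots\to\rho_n\to o$, applies both sides to arbitrary arguments $d_1,\ldots,d_n$ so that everything lives in $V$, and then performs a case analysis on $v=\bigwedge\{x_1,x_2\}$ (the cases $v<F_\alpha$ or $v>T_\alpha$, $v=F_\alpha$, $v=T_\alpha$, and $F_\alpha<v<T_\alpha$) to conclude $\bigwedge\{x_1,x_2\}\sqsubseteq_\alpha\bigwedge\{y_1,y_2\}$. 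So the real crux of the induction step is not the negation case, which is short, but this meet case; as written, your appeal to duality is a step that would not go through without such an additional argument.
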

\addtocounter{lemma}{-1}
\endgroup

\begin{proof}
The proof is by structural induction on $\mathsf{E}$.

\vspace{0.1cm}
\noindent{\em Induction Base:}
The cases $\mathsf{V}, \mathsf{false}, \mathsf{true}$ are straightforward since
their meanings do not depend on $I$. Let $I \sqsubseteq_\alpha J$.
If $\mathsf{E}$ is a predicate constant $\mathsf{p}$ then we have $I(\mathsf{p}) \sqsubseteq_\alpha J(\mathsf{p})$.

\vspace{0.1cm}
\noindent{\em Induction Step:}
Assume that the statement holds for expressions $\mathsf{E}_1$ and $\mathsf{E}_2$ and let
$I \sqsubseteq_\alpha J$.

\vspace{0.1cm}
\noindent{Case $(\mathsf{E}_1\,\,\mathsf{E}_2)$:}
It holds $\lsem (\mathsf{E}_1\,\,\mathsf{E}_2) \rsem_s(I) = \lsem \mathsf{E}_1 \rsem_s(I)
(\lsem \mathsf{E}_2 \rsem_s(I))$. By induction hypothesis we have $\lsem \mathsf{E}_1 \rsem_s(I) \sqsubseteq_\alpha \lsem \mathsf{E}_1 \rsem_s(J)$ and therefore
$\lsem \mathsf{E}_1 \rsem_s(I)(\lsem \mathsf{E}_2 \rsem_s(I)) \sqsubseteq_\alpha
\lsem \mathsf{E}_1 \rsem_s(J)(\lsem \mathsf{E}_2 \rsem_s(I))$. We perform a case analysis on the type of
$\mathsf{E}_2$. If $\mathsf{E}_2$ is of type $\iota$ and since $I,J$ are Herbrand interpretations, it is clear that
$\lsem \mathsf{E}_2 \rsem_s(I) = \lsem \mathsf{E}_2 \rsem_s(J)$ and therefore
$\lsem \mathsf{E}_1 \rsem_s(I)(\lsem \mathsf{E}_2 \rsem_s(I)) \sqsubseteq_\alpha
\lsem \mathsf{E}_1 \rsem_s(J)(\lsem \mathsf{E}_2 \rsem_s(J))$. By definition of application we get
$\lsem (\mathsf{E}_1\,\,\mathsf{E}_2) \rsem_s(I) \sqsubseteq_\alpha \lsem (\mathsf{E}_1\,\,\mathsf{E}_2) \rsem_s(J)$.
If $\mathsf{E}_2$ is of type $\pi$ then by induction hypothesis we have
$\lsem \mathsf{E}_2 \rsem_s(I) \sqsubseteq_\alpha \lsem \mathsf{E}_2 \rsem_s(J)$ and since
$\lsem \mathsf{E}_1 \rsem_s(J)$ is $\alpha$-monotonic we get that
$\lsem \mathsf{E}_1 \rsem_s(J)(\lsem \mathsf{E}_2 \rsem_s(I))
\sqsubseteq_\alpha \lsem \mathsf{E}_1 \rsem_s(J)(\lsem \mathsf{E}_2 \rsem_s(J))$.
By transitivity of $\sqsubseteq_\alpha$ and by the definition of application we conclude that
$\lsem (\mathsf{E}_1\,\,\mathsf{E}_2) \rsem_s(I) \sqsubseteq_\alpha \lsem (\mathsf{E}_1\,\,\mathsf{E}_2) \rsem_s(J)$.

\vspace{0.1cm}
\noindent{Case $(\lambda\mathsf{V}.\mathsf{E}_1)$:}
It holds by definition that $\lsem (\lambda\mathsf{V}.\mathsf{E}_1) \rsem_s(I) = \lambda d.\lsem \mathsf{E}_1 \rsem_{s[\mathsf{V}/d]}(I)$.
It suffices to show that $\lambda d.\lsem \mathsf{E}_1 \rsem_{s[\mathsf{V}/d]}(I) \sqsubseteq_\alpha \lambda d. \lsem \mathsf{E}_1 \rsem_{s[\mathsf{V}/d]}(J)$
and equivalently that for every $d$, $\lsem \mathsf{E}_1 \rsem_{s[\mathsf{V}/d]}(I) \sqsubseteq_\alpha \lsem \mathsf{E}_1 \rsem_{s[\mathsf{V}/d]}(J)$
which holds by induction hypothesis.

\vspace{0.1cm}
\noindent{Case $(\mathsf{E}_1 \bigvee_\pi \mathsf{E}_2)$:}
It holds $\lsem (\mathsf{E}_1 \bigvee_\pi \mathsf{E}_2) \rsem_s(I) = \bigvee\{\lsem \mathsf{E}_1 \rsem_s(I), \lsem \mathsf{E}_2 \rsem_s(I) \}$. It suffices
to show that $\bigvee\{\lsem \mathsf{E}_1 \rsem_s(I), \lsem \mathsf{E}_2 \rsem_s(I) \} \sqsubseteq_\alpha \bigvee\{\lsem \mathsf{E}_1 \rsem_s(J), \lsem \mathsf{E}_2 \rsem_s(J) \}$ which holds by induction hypothesis and Axiom~\ref{axiom6}.

\vspace{0.1cm}
\noindent{Case $(\mathsf{E}_1 \bigwedge_\pi \mathsf{E}_2)$:}
It holds $\lsem (\mathsf{E}_1 \bigwedge_\pi \mathsf{E}_2) \rsem_s(I) = \bigwedge\{\lsem \mathsf{E}_1 \rsem_s(I), \lsem \mathsf{E}_2 \rsem_s(I) \}$.
Let $\pi = \rho_1 \to \cdots \to \rho_n \to o$, it suffices to show for all $d_i \in \lsem \rho_i \rsem_{U_{\mathsf{P}}}$,
$\bigwedge\{\lsem \mathsf{E}_1 \rsem_s(I)\ d_1\cdots d_n, \lsem \mathsf{E}_2 \rsem_s(I)\ d_ 1\cdots d_n \}\sqsubseteq_\alpha
\bigwedge\{\lsem \mathsf{E}_1 \rsem_s(J)\ d_1\cdots d_n, \lsem \mathsf{E}_2 \rsem_s(J)\ d_1\cdots d_n \}$.
We define $x_i = \lsem \mathsf{E}_i \rsem_s(I)\ d_1\cdots d_n$ and $y_i = \lsem \mathsf{E}_i \rsem_s(J)\ d_1\cdots d_n$ for $i \in \{1,2\}$.
We perform a case analysis on $v = \bigwedge\{x_1,x_2\}$. If $v < F_\alpha$ or $v > T_\alpha$ then $\bigwedge\{x_1,x_2\} = \bigwedge\{y_1,y_2\}$
and thus $\bigwedge\{x_1,x_2\} \sqsubseteq_\alpha \bigwedge\{y_1,y_2\}$. If $v = F_\alpha$ then $F_\alpha \leq \bigwedge\{y_1,y_2\} \leq T_\alpha$
and therefore $\bigwedge\{x_1,x_2\} \sqsubseteq_\alpha \bigwedge\{y_1,y_2\}$. If $v = T_\alpha$ then $\bigwedge\{y_1,y_2\} = T_\alpha$ and thus
$\bigwedge\{x_1,x_2\} \sqsubseteq_\alpha \bigwedge\{y_1,y_2\}$. If $F_\alpha < v < T_\alpha$ then $F_\alpha < \bigwedge\{y_1,y_2\} \leq T_\alpha$
and therefore $\bigwedge\{x_1,x_2\} \sqsubseteq_\alpha \bigwedge\{y_1,y_2\}$.

\vspace{0.1cm}
\noindent{Case $(\mnot\,\mathsf{E}_1)$:}
Assume $order(\lsem \mathsf{E}_1 \rsem_s(I)) = \alpha$. Then, by induction hypothesis
$\lsem \mathsf{E}_1 \rsem_s(I) \sqsubseteq_\alpha \lsem \mathsf{E}_1 \rsem_s(J)$ and thus
$order(\lsem \mathsf{E}_1 \rsem_s(J)) \geq \alpha$. It follows that $order(\lsem (\mnot\,\mathsf{E}_1)\rsem_s(I)) > \alpha$
and $order(\lsem (\mnot\,\mathsf{E}_1) \rsem_s(J)) > \alpha$ and therefore $\lsem (\mnot\,\mathsf{E}_1 )\rsem_s(I)
\sqsubseteq_\alpha \lsem (\mnot\,\mathsf{E}_1) \rsem_s(J)$.

\vspace{0.1cm}
\noindent{Case $(\exists\mathsf{V}. \mathsf{E}_1)$:}
Assume $\mathsf{V}$ is of type $\rho$.
It holds $\lsem (\exists\mathsf{V}. \mathsf{E}_1) \rsem_s(I) = \bigvee_{d \in \lsem \rho \rsem_{U_{\mathsf{P}}}}{\lsem \mathsf{E}_1 \rsem_{s[\mathsf{V}/d]}(I)}$.
It suffices to show $\bigvee_{d \in \lsem \rho \rsem_{U_{\mathsf{P}}}}{\lsem \mathsf{E}_1 \rsem_{s[\mathsf{V}/d]}(I)} \sqsubseteq_\alpha \bigvee_{d \in \lsem \rho \rsem_{U_{\mathsf{P}}}}{\lsem \mathsf{E}_1 \rsem_{s[\mathsf{V}/d]}(J)}$ which holds by induction hypothesis and Axiom~\ref{axiom6}.
\end{proof}

\def\modelintersect{\ref{model-intersection}}
\section{Proof of Theorem~\modelintersect{}}
We start by providing some necessary background material from~\cite{tocl14} on
how the $\bigsqcap$ operation on a set of interpretations is actually defined.


Let $x \in V$.
For every $X \subseteq (x]_\alpha$ we define $\bigsqcap_\alpha X$ as follows:
if $X = \emptyset$ then $\bigsqcap_\alpha X = T_\alpha$, otherwise
\[
\bigsqcap\nolimits_\alpha X =
                      \begin{cases}
                        \bigwedge X & order(\bigwedge X) \leq \alpha \\
                        T_{\alpha + 1} & \mbox{otherwise}
                      \end{cases}
\]
Let $\mathsf{P}$ be a program, $I \in {\cal I}_{\mathsf{P}}$ be a Herbrand interpretation of
$\mathsf{P}$ and $X \subseteq (I]_\alpha$.
For all predicate constants $\mathsf{p}$ in $\mathsf{P}$ of type $\rho_1 \to \cdots \to \rho_n \to o$ and
$d_i \in \lsem \rho_i \rsem_{U_{\mathsf{P}}}$ and for all $i = \{ 1, \ldots, n \}$,
it holds $\bigsqcap_\alpha X$ as $(\bigsqcap_\alpha X)(\mathsf{p})\ d_1\cdots\ d_n = \bigsqcap_\alpha\{ I(\mathsf{p})\ d_1\cdots\ d_n : I \in X\}$.

Let $X$ be a nonempty set of Herbrand interpretations. By Lemma~\ref{interpretations-lattice} we have
that ${\cal I}_\mathsf{P}$ is a complete lattice with respect to $\leq$ and a basic model.
Moreover, by Lemma~\ref{basic-model-lattice} it follows that ${\cal I}_\mathsf{P}$ is also a complete
lattice with respect to $\sqsubseteq$. Thus, there exist the least upper bound and greatest lower
bound of $X$ for both $\leq$ and $\sqsubseteq$. We denote the greatest lower bound of $X$ as $\bigwedge X$ and $\bigsqcap X$
with respect to relations $\leq$ and $\sqsubseteq$ respectively.
Then,  $\bigsqcap X$ can be constructed in an symmetric way to the least upper bound construction
described in \cite{tocl14}. More specifically, for each ordinal $\alpha < \kappa$ we define
the sets $X_\alpha, Y_\alpha \subseteq X$ and $x_\alpha \in {\cal I}_\mathsf{P}$, which are then
used in order to obtain $\bigsqcap X$.

Let $Y_0 = X$ and $x_0=\bigsqcap_0 Y_0$. For every $\alpha$, with $0< \alpha < \kappa$ we define
$X_\alpha = \{ x \in X : \forall\beta \leq \alpha\ x =_\alpha x_\alpha \}$,
$Y_\alpha = \bigcap_{\beta < \alpha}{X_\beta}$; moreover,
$x_\alpha = \bigsqcap_\alpha Y_\alpha$ if $Y_\alpha$ is nonempty and
$x_\alpha = \bigwedge_{\beta < \alpha} x_\beta$ if $Y_\alpha$ is empty.

Finally, we define $x_\infty = \bigwedge_{\alpha < \kappa}{x_\alpha}$.
In analogy to the proof of \cite{tocl14} for the least upper bound it can be shown
that $x_\infty = \bigsqcap X$ with respect to the relation $\sqsubseteq$.
Moreover, it is easy to prove that by construction it holds
$x_\alpha =_\alpha x_\beta$ and $x_\beta \geq x_\alpha$ for all $\beta < \alpha$.

\begin{lemma}\label{sqcap-is-model}
Let $\mathsf{P}$ be a program, $\alpha < \kappa$ and $M_\alpha$ be a Herbrand model of $\mathsf{P}$. Let ${\cal M} \subseteq (M_\alpha]_\alpha$ be a nonempty set of Herbrand models of $\mathsf{P}$.
Then, $\bigsqcap_\alpha{\cal M}$ is also a Herbrand model of $\mathsf{P}$.
\end{lemma}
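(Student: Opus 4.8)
The plan is to reduce the statement, pointwise, to facts about truth values in $V$, and then combine the $\alpha$-monotonicity of the semantics (Lemma~\ref{monotonicity-of-semantics}) with the defining properties of the band-restricted greatest lower bound $\bigsqcap_\alpha$. Write $N=\bigsqcap_\alpha{\cal M}$. First I would record the book-keeping facts about $N$: since ${\cal M}\subseteq(M_\alpha]_\alpha$, the pointwise definition of $\bigsqcap_\alpha$ makes $N$ a well-defined Herbrand interpretation that again lies in $(M_\alpha]_\alpha$, that satisfies $N\sqsubseteq_\alpha M$ for every $M\in{\cal M}$, and that is $\leq$-above every element of $(M_\alpha]_\alpha$ which is a $\sqsubseteq_\alpha$-lower bound of ${\cal M}$ (the property of $\bigsqcap_\alpha$ dual to Axiom~\ref{axiom3}).

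Now fix a clause $\mathsf{p}\leftarrow_\pi\mathsf{E}$ of $\mathsf{P}$ with $\pi=\rho_1\rightarrow\cdots\rightarrow\rho_n\rightarrow o$, fix $d_i\in\lsem\rho_i\rsem_{U_\mathsf{P}}$, and abbreviate $e_I=\lsem\mathsf{E}\rsem(I)\,d_1\cdots d_n$ and $p_I=I(\mathsf{p})\,d_1\cdots d_n$; the goal is $e_N\leq p_N$, where $p_N=\bigsqcap_\alpha\{p_M:M\in{\cal M}\}$ by the pointwise definition. The argument has two halves. In the first, $N\sqsubseteq_\alpha M$ and Lemma~\ref{monotonicity-of-semantics} give $e_N\sqsubseteq_\alpha e_M$ for all $M\in{\cal M}$; moreover, applying $\beta$-monotonicity in both directions, for each $\beta<\alpha$, to $N=_\beta M_\alpha$ shows that $e_N$ lies in the band $(\lsem\mathsf{E}\rsem(M_\alpha)\,d_1\cdots d_n]_\alpha$ that contains all the $e_M$; hence $e_N$ is a $\sqsubseteq_\alpha$-lower bound of $\{e_M:M\in{\cal M}\}$ inside that band, and therefore $e_N\leq\bigsqcap_\alpha\{e_M:M\in{\cal M}\}$. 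In the second, each $M$ is a model, so $e_M\leq p_M$ for all $M$; a short case analysis comparing $order(\bigwedge\{e_M:M\in{\cal M}\})$ and $order(\bigwedge\{p_M:M\in{\cal M}\})$ with $\alpha$ --- in the style of the $\bigwedge$-case of the proof of Lemma~\ref{monotonicity-of-semantics} --- shows $\bigsqcap_\alpha\{e_M:M\in{\cal M}\}\leq\bigsqcap_\alpha\{p_M:M\in{\cal M}\}=p_N$. (The two points needing care: when $\bigwedge\{p_M:M\in{\cal M}\}$ has order $>\alpha$, so $p_N=T_{\alpha+1}$, the smaller value $\bigwedge\{e_M:M\in{\cal M}\}$ is automatically $\leq T_{\alpha+1}$; and when $\bigwedge\{e_M:M\in{\cal M}\}$ has order $>\alpha$, so $\bigsqcap_\alpha$ resets it to $T_{\alpha+1}$, one still has $T_{\alpha+1}\leq p_N$ because $\bigwedge\{p_M:M\in{\cal M}\}\geq F_{\alpha+1}$.) Chaining the two halves gives $e_N\leq p_N$; as the $d_i$ and the clause were arbitrary, $\lsem\mathsf{E}\rsem(N)\leq_\pi N(\mathsf{p})$ for every clause of $\mathsf{P}$, so $N$ is a Herbrand model.

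I expect the main obstacle to be the truth-value book-keeping: verifying that $e_N$, the $e_M$ and the $p_M$ all sit in the appropriate $(\,\cdot\,]_\alpha$-bands, so that the $\bigsqcap_\alpha$'s are defined and so that the dual of Axiom~\ref{axiom3} may be invoked in the first half --- this is precisely what forces the common reference interpretation $M_\alpha$ into the statement --- together with the small but fiddly case analysis in the second half, where one must carefully separate the case in which a greatest lower bound already has order $\leq\alpha$ from the one in which $\bigsqcap_\alpha$ collapses it to $T_{\alpha+1}$.
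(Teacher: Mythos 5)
Your proof is correct, but it is organized quite differently from the paper's. The paper argues by contradiction: assuming $\bigsqcap_\alpha{\cal M}$ is not a model, it picks a violating clause and tuple $d_1,\ldots,d_n$, sets $x=(\bigsqcap_\alpha{\cal M})(\mathsf{p})\,d_1\cdots d_n$, and performs a case analysis on $order(x)$ relative to $\alpha$ (the subcases $T_\alpha$, $F_\alpha$, order $<\alpha$, order $>\alpha$), in each case using Lemma~\ref{monotonicity-of-semantics} to transfer the violation to some $N\in{\cal M}$ or to the reference model $M_\alpha$ and thus contradict that it is a model. You instead give a direct proof factored through two general facts about the band-restricted greatest lower bound: (i) the dual of Axiom~\ref{axiom3}, namely that $\bigsqcap_\alpha X$ is $\leq$-above every $\sqsubseteq_\alpha$-lower bound of $X$ lying in the band, applied to $e_N$, which you place in the band $(\lsem\mathsf{E}\rsem(M_\alpha)\,d_1\cdots d_n]_\alpha$ via $\beta$-monotonicity for all $\beta<\alpha$; and (ii) $\leq$-monotonicity of $\bigsqcap_\alpha$ applied pointwise to $e_M\leq p_M$. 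Neither (i) nor (ii) is stated in the paper (Axiom~\ref{axiom3} only concerns $\bigsqcup_\alpha$, and $\bigsqcap_\alpha$ is introduced only through the explicit formula on $V$), so in a complete write-up you would have to prove both by a case analysis on $\bigwedge X$ versus $\alpha$ --- this is true and routine (including the corner cases where one of the two infima is collapsed to $T_{\alpha+1}$, and where one family is constant of order $<\alpha$), but it is roughly the same amount of truth-value bookkeeping that the paper spends on its single case analysis, just relocated into two reusable lemmas. What your route buys is modularity and a positive (non-contradiction) statement of why the construction works: the role of $M_\alpha$ becomes transparent (it only guarantees that all the values $e_M$, $e_N$, $p_M$ live in common bands so the $\bigsqcap_\alpha$'s exist and the universal property applies), and facts (i)--(ii) could serve elsewhere; what the paper's route buys is that it needs no universal property of $\bigsqcap_\alpha$ at all, working directly from the defining formula.
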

\begin{proof}
Assume $\bigsqcap_\alpha{\cal M}$ is not a model.
Then, there exists a clause $\mathsf{p} \leftarrow \mathsf{E}$ in $\mathsf{P}$ and
$d_i \in \lsem \rho_i \rsem_D$ such that
$\lsem \mathsf{E} \rsem(\bigsqcap_\alpha{\cal M})\,d_1\cdots d_n > (\bigsqcap_\alpha{\cal M})(\mathsf{p})\,d_1\cdots d_n$. Since for every $N \in {\cal M}$ we have
$\bigsqcap_\alpha{\cal M} \sqsubseteq_\alpha N$, using Lemma~\ref{monotonicity-of-semantics} we conclude
 $\lsem \mathsf{E} \rsem(\bigsqcap_\alpha{\cal M}) \sqsubseteq_\alpha \lsem \mathsf{E} \rsem(N)$.
Let $x = \bigsqcap_\alpha\{ N(\mathsf{p})\,d_1\cdots d_n : N \in {\cal M}\}$. By definition,
 $x = (\bigsqcap_\alpha{\cal M})(\mathsf{p})\,d_1\cdots d_n$.

If $order(x) = \alpha$ then $x = \bigwedge\{ N(\mathsf{p})\,d_1\cdots d_n : N \in {\cal M}\}$.
If $x = T_\alpha$ then for all $N \in {\cal M}$ we have $N(\mathsf{p})\,d_1\cdots d_n = T_\alpha$.
Moreover, $\lsem \mathsf{E} \rsem(\bigsqcap_{\alpha}{\cal M})\,d_1\cdots d_n > T_\alpha$ and
by $\alpha$-monotonicity we have $\lsem \mathsf{E} \rsem(N)\,d_1\cdots d_n > T_\alpha$ for all $N \in {\cal M}$.
Then, $N(\mathsf{p})\,d_1\cdots d_n < \lsem \mathsf{E} \rsem(N)\,d_1\cdots d_n$ and
therefore $N$ is not a model (contradiction). If $x = F_\alpha$ then there exists $N \in {\cal M}$
such that $N(\mathsf{p})\,d_1\cdots d_n = F_\alpha$ and since $N$ is a model we have
$\lsem \mathsf{E} \rsem(N)\,d_1\cdots d_n \leq F_\alpha$. But then, it follows
$\lsem \mathsf{E} \rsem(\bigsqcap_\alpha{\cal M})\,d_1\cdots d_n \leq F_\alpha$ and
$\lsem \mathsf{E} \rsem(\bigsqcap_\alpha{\cal M})\,d_1\cdots d_n \leq x$ (contradiction).

If $order(x) < \alpha$ then $x = M_\alpha(\mathsf{p})\,d_1\cdots d_n$. If $x = T_\beta$
then $\lsem \mathsf{E} \rsem(\bigsqcap_\alpha {\cal M})\,d_1\cdots d_n > T_\beta$ and
$\lsem \mathsf{E} \rsem(M_\alpha)\,d_1\cdots d_n > T_\beta$. Then, we have $M_\alpha(\mathsf{p})\,d_1\cdots d_n <
\lsem \mathsf{E} \rsem(M_\alpha)$ and thus $M_\alpha$ is not a model of $\mathsf{P}$ (contradiction). If $x = F_\beta$ then
$\lsem \mathsf{E} \rsem(M_\alpha)\,d_1\cdots d_n \leq F_\beta$
and by $\alpha$-monotonicity
$\lsem \mathsf{E} \rsem(\bigsqcap_\alpha{\cal M})\,d_1\cdots d_n \leq F_\beta$.
Therefore, $\lsem \mathsf{E} \rsem(\bigsqcap_\alpha{\cal M})\,d_1\cdots d_n \leq x$ (contradiction).

If $order(x) > \alpha$ then $x = T_{\alpha+1}$ and
there exists model $N \in {\cal M}$ such that $N(\mathsf{p})\,d_1\cdots d_n < T_\alpha$.
Moreover, we have $\lsem \mathsf{E} \rsem(\bigsqcap_\alpha{\cal M})\,d_1\cdots d_n \geq T_\alpha$
and by $\alpha$-monotonicity we conclude $\lsem \mathsf{E} \rsem(N)\,d_1\cdots d_n \geq T_\alpha$.
But then, $\lsem \mathsf{E} \rsem(N)\,d_1\cdots d_n > N(\mathsf{p})\,d_1\cdots d_n$
and therefore $N$ is not a model of $\mathsf{P}$ (contradiction).
\end{proof}

%

In the following, we will make use of the following lemma that has been shown in~\cite[Lemma 3.18]{tocl14}:
\begin{lemma}\label{lub-is-equal-alpha}
If $\alpha \leq \kappa$ is an ordinal and $(x_\beta)_{\beta < \alpha}$ is a sequence
of elements of $L$ such that $x_\beta =_\beta x_\gamma$ and $x_\beta \leq x_\gamma$
($x_\beta \geq x_\gamma$) whenever $\beta < \gamma < \alpha$, and if $x = \bigvee_{\beta < \alpha}{x_\beta}$ ($x = \bigwedge_{\beta < \alpha}{x_\beta}$), then $x_\beta =_\beta x$ holds for all
$\beta < \alpha$.
\end{lemma}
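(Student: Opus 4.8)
The plan is to fix an ordinal $\beta$ with $\beta < \alpha$ and to show $x_\beta =_\beta x$ directly and uniformly in $\beta$ (no transfinite induction is needed). I describe the join case, where $x = \bigvee_{\gamma < \alpha} x_\gamma$ and the sequence is $\leq$-increasing; the meet case is formally dual and is addressed at the end.

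The first step is to restrict attention to the tail $\{x_\gamma : \beta \leq \gamma < \alpha\}$. Because the sequence is $\leq$-increasing, $x_\gamma \leq x_\beta$ for every $\gamma < \beta$, while $x_\beta$ itself lies in the tail (which is nonempty since $\beta < \alpha$); hence $\bigvee_{\beta \leq \gamma < \alpha} x_\gamma = \bigvee_{\gamma < \alpha} x_\gamma = x$. This step is the key maneuver, not a cosmetic one: Axiom~\ref{axiom6} cannot be applied to the full sequence, because $x_\beta \sqsubseteq_\beta x_\gamma$ may well fail for $\gamma < \beta$ (knowing $x_\gamma =_\gamma x_\beta$ gives no control at level $\beta$).

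The second step records that $x_\beta =_\beta x_\gamma$ for every $\gamma$ in the tail: this is trivial when $\gamma = \beta$, and for $\beta < \gamma < \alpha$ it is exactly the hypothesis with smaller index $\beta$. Hence both $x_\beta \sqsubseteq_\beta x_\gamma$ and $x_\gamma \sqsubseteq_\beta x_\beta$ for all such $\gamma$. The third step invokes Axiom~\ref{axiom6} twice, each time with a constant family on one side and the tail on the other: from $x_\beta \sqsubseteq_\beta x_\gamma$ we get $x_\beta = \bigvee\{x_\beta\} \sqsubseteq_\beta \bigvee\{x_\gamma : \beta \leq \gamma < \alpha\} = x$, and from $x_\gamma \sqsubseteq_\beta x_\beta$ we get $x = \bigvee\{x_\gamma : \beta \leq \gamma < \alpha\} \sqsubseteq_\beta \bigvee\{x_\beta\} = x_\beta$. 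Therefore $x_\beta =_\beta x$, and since $\beta < \alpha$ was arbitrary the join case is finished; note that only Axiom~\ref{axiom6} and elementary lattice arithmetic were used.

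The meet case runs along the same three steps with $\leq$ replaced by $\geq$ and $\bigvee$ by $\bigwedge$ (the tail again carries the full meet, now because $x_\gamma \geq x_\beta \geq \bigwedge(\mathrm{tail})$ for $\gamma < \beta$). The one point that needs care --- and the main obstacle to a fully general proof --- is that this version appeals to the dual of Axiom~\ref{axiom6}, namely that $\bigwedge$ is $\sqsubseteq_\alpha$-monotone, which is not one of the four axioms of Definition~\ref{def-basic model}. I would supply it either by citing the background development of~\cite{tocl14}, or, for the basic models actually occurring in this paper, by reducing it to the corresponding statement for $(V,\leq)$ --- a short case analysis on $order(\bigwedge X)$ of the same kind performed for the $\bigwedge_\pi$ clause in the proof of Lemma~\ref{monotonicity-of-semantics} --- and then lifting it pointwise through the function-space constructions of Definition~\ref{semantics-of-types}.
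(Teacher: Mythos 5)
You should first know that the paper does not prove this statement at all: it is imported verbatim as \cite[Lemma~3.18]{tocl14} (``the following lemma that has been shown in...''), so there is no in-paper proof to compare against, and your proposal has to stand on its own. On its own, the join half of your argument is correct and pleasantly self-contained: restricting to the tail $\{x_\gamma:\beta\le\gamma<\alpha\}$ (which carries the full supremum because the sequence is $\leq$-increasing), observing $x_\beta=_\beta x_\gamma$ on the tail, and then applying Axiom~\ref{axiom6} twice with a constant family on one side gives $x_\beta\sqsubseteq_\beta x$ and $x\sqsubseteq_\beta x_\beta$, hence $x_\beta=_\beta x$, with no transfinite induction and using only the axioms of Definition~\ref{def-basic model}. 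Your remark that Axiom~\ref{axiom6} cannot be applied to the whole family is also right: for $\gamma<\beta$ the hypothesis gives no information at level $\beta$.

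The genuine gap is the meet half, and unfortunately that is precisely the half the paper needs: Lemma~\ref{lub-is-equal-alpha} is invoked only for $\bigwedge$ of descending chains of models, in Lemma~\ref{bigwedge-is-model} and again inside the proof of Theorem~\ref{model-intersection}. Your argument there rests on the $\sqsubseteq_\beta$-monotonicity of arbitrary meets, which, as you yourself note, is not one of the four axioms of Definition~\ref{def-basic model} (they constrain only $\bigvee$ and $\bigsqcup_\alpha$), and neither of your proposed repairs is carried out. Citing \cite{tocl14} is of course legitimate --- it is exactly what the paper does, for the whole lemma --- but then your text is a citation rather than a proof. The other repair (verify the dual of Axiom~\ref{axiom6} in $(V,\leq)$ by a case analysis on $order(\bigwedge X)$, then ``lift pointwise'') hides a real obstacle: in $\lsem\pi_1\rightarrow\pi_2\rsem_D=[\lsem\pi_1\rsem_D\stackrel{m}{\rightarrow}\lsem\pi_2\rsem_D]$, and likewise in ${\cal I}_\mathsf{P}$, you must first know that meets are computed pointwise, i.e.\ that the pointwise meet of a family of $\alpha$-monotonic functions is again $\alpha$-monotonic --- the dual of Proposition~\ref{lub-of-a-monotonic} and Lemma~\ref{lub-beta-lemma}, which the paper establishes only for joins. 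So the meet case needs either that additional development (a simultaneous induction on the type structure proving dual meet-monotonicity and pointwiseness of meets), or an explicit appeal to the richer axiomatization of \cite{tocl14}; as written, the step ``the same three steps with $\bigvee$ replaced by $\bigwedge$'' does not go through from the axioms available in this paper.
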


\begin{lemma}\label{bigwedge-is-model}
Let $(M_\alpha)_{\alpha < \kappa}$ be a sequence of Herbrand models of $\mathsf{P}$ such that
$M_\alpha =_\alpha  M_\beta$ and $M_\beta \leq M_\alpha$ for all $\alpha < \beta < \kappa$.
Then, $\bigwedge_{\alpha < \kappa}{M_\alpha}$ is also a Herbrand model of $\mathsf{P}$.
\end{lemma}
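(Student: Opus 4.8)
The plan is to set $M := \bigwedge_{\alpha < \kappa} M_\alpha$, the $\leq$-greatest lower bound of the chain, which exists because ${\cal I}_\mathsf{P}$ is a complete lattice (Lemma~\ref{interpretations-lattice}) and, since the orderings on ${\cal I}_\mathsf{P}$ and on each $\lsem \pi \rsem_{U_\mathsf{P}}$ are componentwise and pointwise, is computed componentwise in the predicate constants and pointwise in the arguments. The first step is to show that $M$ agrees with every stage at its own level, i.e.\ $M_\beta =_\beta M$ for all $\beta < \kappa$. This is exactly Lemma~\ref{lub-is-equal-alpha}, applied with the ordinal taken to be $\kappa$ to the $\leq$-decreasing chain $(M_\alpha)_{\alpha < \kappa}$: its hypotheses $M_\beta =_\beta M_\gamma$ and $M_\beta \geq M_\gamma$ for $\beta < \gamma < \kappa$ are precisely what we are given. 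Pushing this down to the pointwise level, we get $M_\beta(\mathsf{p})\,d_1 \cdots d_n =_\beta M(\mathsf{p})\,d_1 \cdots d_n$ for every predicate constant $\mathsf{p} : \rho_1 \to \cdots \to \rho_n \to o$ of $\mathsf{P}$ and all $d_i \in \lsem \rho_i \rsem_{U_\mathsf{P}}$.

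To verify that $M$ is a model, fix a clause $\mathsf{p} \leftarrow_\pi \mathsf{E}$ of $\mathsf{P}$ and arguments $d_1, \ldots, d_n$, and abbreviate $v = M(\mathsf{p})\,d_1 \cdots d_n$, $a_\beta = M_\beta(\mathsf{p})\,d_1 \cdots d_n$, $w = \lsem \mathsf{E} \rsem(M)\,d_1 \cdots d_n$ and $w_\beta = \lsem \mathsf{E} \rsem(M_\beta)\,d_1 \cdots d_n$. We must show $w \leq v$, so assume $w > v$ for contradiction (legitimate since $\leq$ is a total order on $V$). From $M_\beta =_\beta M$ and the $\alpha$-monotonicity of the semantics (Lemma~\ref{monotonicity-of-semantics}, used in both directions) we obtain $w =_\beta w_\beta$ for all $\beta < \kappa$; from the previous paragraph $a_\beta =_\beta v$ for all $\beta$; and since each $M_\beta$ is a model, $w_\beta \leq a_\beta$ for all $\beta$. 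Recall that $x =_\beta y$ holds iff $x = y$, or both $order(x) > \beta$ and $order(y) > \beta$; hence choosing $\beta$ to be the order of an honest $T$- or $F$-value collapses the corresponding $=_\beta$ into genuine equality.

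The contradiction now comes from a short case split on $w$. If $w = T_\sigma$ (so $\sigma < \kappa$ and $order(w) = \sigma$), then $w =_\sigma w_\sigma$ forces $w_\sigma = w = T_\sigma$; hence $a_\sigma \geq w_\sigma = T_\sigma$, which pins $a_\sigma$ to be $T_\delta$ for some $\delta \leq \sigma$, so $order(a_\sigma) \leq \sigma$, and therefore $a_\sigma =_\sigma v$ forces $a_\sigma = v$, giving $v = a_\sigma \geq T_\sigma = w$ and contradicting $w > v$. Otherwise $w \leq 0$; then $w > v$ gives $v < 0$, i.e.\ $v = F_\delta$ for some $\delta < \kappa$ with $\delta < order(w)$. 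Taking the stage $\beta = \delta$, the relation $a_\delta =_\delta v$ forces $a_\delta = v = F_\delta$ (since $order(v) = \delta$), so $w_\delta \leq a_\delta = F_\delta$, whence $order(w_\delta) \leq \delta$ and $w_\delta < w$; but then $w \neq w_\delta$ together with $w =_\delta w_\delta$ would force $order(w_\delta) > \delta$, a contradiction. As $w > v$ is impossible for every choice of $d_1, \ldots, d_n$, we get $\lsem \mathsf{E} \rsem(M) \leq_\pi M(\mathsf{p})$ for every clause, i.e.\ $M$ is a Herbrand model of $\mathsf{P}$.

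Everything outside the last paragraph is bookkeeping: passing to the componentwise and pointwise view and invoking Lemmas~\ref{monotonicity-of-semantics} and~\ref{lub-is-equal-alpha}. The only real work is the case analysis, and the main obstacle is the discipline of picking, in each branch, the stage index $\beta$ at which the relevant $=_\beta$-relation degenerates to equality --- $\beta = order(w)$ when $w$ is a $T$-value and $\beta = order(v)$ when $v$ is an $F$-value --- and then playing the inequality $w_\beta \leq a_\beta$ coming from ``$M_\beta$ is a model'' against it.
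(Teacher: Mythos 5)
Your proof is correct and follows essentially the same route as the paper's: pass to the pointwise values, use Lemma~\ref{lub-is-equal-alpha} to get $M =_\beta M_\beta$, apply the $\alpha$-monotonicity of the semantics (Lemma~\ref{monotonicity-of-semantics}) in both directions, and derive a contradiction with the model property of the appropriate $M_\beta$. The only difference is cosmetic: you split cases on the body value $w$ (two cases) where the paper splits on the head value $x_\infty$ (three cases, $T_\delta$, $F_\delta$, $0$), but the stage-picking and the contradictions are the same.
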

\begin{proof}
Let $M_\infty = \bigwedge_{\alpha < \kappa}{M_\alpha}$ and assume $M_\infty$ is not a model of $\mathsf{P}$. Then, there is a clause
$\mathsf{p} \leftarrow \mathsf{E}$ and $d_i \in \lsem \rho_i \rsem_D$ such that
$\lsem \mathsf{E} \rsem(M_\infty)\,d_1\cdots d_n > M_\infty(\mathsf{p})\,d_1\cdots d_n$.
We define  $x_\alpha = M_\alpha(\mathsf{p})\,d_1\cdots d_n$, $x_\infty = M_\infty(\mathsf{p})\,d_1\cdots d_n$,
$y_\alpha = \lsem \mathsf{E} \rsem(M_\alpha)\,d_1\cdots d_n$ and
$y_\infty = \lsem \mathsf{E} \rsem(M_\infty)\,d_1\cdots d_n$ for all $\alpha < \kappa$.
It follows from Lemma~\ref{lub-is-equal-alpha} that  $M_\infty =_\alpha M_\alpha$
and thus $x_\infty =_\alpha x_\alpha$ for all $\alpha < \kappa$. Moreover, using $\alpha$-monotonicity
we also have $\lsem \mathsf{E} \rsem(M_\infty) =_\alpha \lsem \mathsf{E} \rsem(M_\alpha)$
and thus $y_\infty =_\alpha y_\alpha$ for all $\alpha < \kappa$.
We distinguish cases based on the value of $x_\infty$.

Assume $x_\infty = T_\delta$ for some $\delta < \kappa$. It follows by assumption
that $y_\infty > T_\delta$.
Then, since $x_\infty =_\delta x_\delta$ it follows $x_\delta = T_\delta$. Moreover,
since $y_\infty =_\delta y_\delta$ and $order(y_\infty) < \delta$ it follows $y_\delta = y_\infty > T_\delta$.
But then, $y_\delta > x_\delta$ (contradiction since $M_\delta$ is a model by assumption).

Assume $x_\infty = F_\delta$ for some $\delta < \kappa$. Then, since $x_\infty =_\delta x_\delta$
it follows $x_\delta = F_\delta$. Then, since $M_\delta$ is a model it follows $y_\delta \leq x_\delta$
and thus $y_\delta \leq F_\delta$. But then, since $y_\infty =_\delta y_\delta$ it follows
$y_\delta = y_\infty \leq F_\delta$. Therefore, $y_\infty \leq x_\infty$ that is a contradiction
to our assumption that $y_\infty > x_\infty$.

Assume $x_\infty = 0$. Then, $y_\infty > x_\infty = 0$. Let $y_\infty = T_\beta$ for some $\beta < \kappa$.
Then, since $y_\beta =_\beta y_\infty$ it follows $y_\beta = T_\beta$.  Since $M_\beta$ is a model of $\mathsf{P}$
it holds $T_\beta = y_\beta \leq x_\beta$, that is $x_\beta = T_\gamma$ for some $\gamma \leq \beta$. Moreover,
since $x_\infty =_\beta x_\beta$ it follows that $x_\infty = T_\gamma$ that is a contradiction to our assumption
that $x_\infty = 0$.
\end{proof}

\begingroup
\def\thetheorem{\ref{model-intersection}}
\begin{theorem}[Model Intersection Theorem]
Let $\mathsf{P}$ be a program and ${\cal M}$ be a nonempty set of
Herbrand models of $\mathsf{P}$. Then, $\bigsqcap{\cal M}$ is also
a Herbrand model of $\mathsf{P}$.
\end{theorem}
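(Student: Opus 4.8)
The plan is to exploit the explicit stage-by-stage construction of $\bigsqcap{\cal M}$ recalled above. Write $\bigsqcap{\cal M} = x_\infty = \bigwedge_{\alpha<\kappa}x_\alpha$, where $Y_0 = {\cal M}$, $x_0 = \bigsqcap_0 Y_0$, and for $0<\alpha<\kappa$ one sets $x_\alpha = \bigsqcap_\alpha Y_\alpha$ when $Y_\alpha = \bigcap_{\beta<\alpha}X_\beta$ is nonempty and $x_\alpha = \bigwedge_{\beta<\alpha}x_\beta$ otherwise; recall also that $x_\beta =_\beta x_\alpha$ and $x_\alpha\leq x_\beta$ whenever $\beta<\alpha<\kappa$. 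The argument then has two parts: first, a transfinite induction showing that each $x_\alpha$ ($\alpha<\kappa$) is a Herbrand model of $\mathsf{P}$; second, a single application of Lemma~\ref{bigwedge-is-model} to the chain $(x_\alpha)_{\alpha<\kappa}$ to conclude that $x_\infty = \bigsqcap{\cal M}$ is a Herbrand model.

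For the induction, fix $\alpha<\kappa$ and suppose $x_\beta$ is a Herbrand model for every $\beta<\alpha$. Consider first the case $Y_\alpha\neq\emptyset$. Every member of $Y_\alpha$ is a Herbrand model, because $Y_\alpha\subseteq{\cal M}$; choose any $M_\alpha\in Y_\alpha$. By the definition of $Y_\alpha$, each $N\in Y_\alpha$ satisfies $N =_\beta x_\beta$ for all $\beta<\alpha$, and in particular so does $M_\alpha$; hence $N =_\beta M_\alpha$ for all $\beta<\alpha$, i.e.\ $Y_\alpha\subseteq(M_\alpha]_\alpha$. Applying Lemma~\ref{sqcap-is-model} to $M_\alpha$ and to the nonempty family of Herbrand models $Y_\alpha\subseteq(M_\alpha]_\alpha$ gives that $x_\alpha = \bigsqcap_\alpha Y_\alpha$ is a Herbrand model. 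This also settles the base case $\alpha=0$: there $(M_0]_0$ is all of ${\cal I}_\mathsf{P}$, so the hypothesis of Lemma~\ref{sqcap-is-model} reduces to ${\cal M}\neq\emptyset$, which is given.

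Now suppose $Y_\alpha=\emptyset$; then $x_\alpha = \bigwedge_{\beta<\alpha}x_\beta$, each $x_\beta$ ($\beta<\alpha$) is a Herbrand model by the induction hypothesis, and by the construction $x_\beta =_\beta x_\gamma$ with $x_\gamma\leq x_\beta$ for all $\beta<\gamma<\alpha$. Here I would invoke the routine generalisation of Lemma~\ref{bigwedge-is-model} in which the index ordinal $\kappa$ is replaced by an arbitrary ordinal $\alpha\leq\kappa$: its proof is identical, since the only auxiliary result it uses, Lemma~\ref{lub-is-equal-alpha}, is already stated for arbitrary ordinals $\leq\kappa$. This shows $x_\alpha$ is a Herbrand model and completes the induction. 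Finally, the sequence $(x_\alpha)_{\alpha<\kappa}$ is a $\kappa$-indexed chain of Herbrand models with $x_\alpha =_\alpha x_\beta$ and $x_\beta\leq x_\alpha$ for $\alpha<\beta<\kappa$, so Lemma~\ref{bigwedge-is-model} applies directly and $\bigsqcap{\cal M} = \bigwedge_{\alpha<\kappa}x_\alpha$ is a Herbrand model of $\mathsf{P}$.

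The two points that require genuine care are: verifying that the representative $M_\alpha\in Y_\alpha$ picked at each successful stage actually satisfies the hypothesis $Y_\alpha\subseteq(M_\alpha]_\alpha$ of Lemma~\ref{sqcap-is-model} --- this hinges on the fact that membership in $Y_\alpha$ forces $=_\beta$-agreement with $x_\beta$ at every level $\beta<\alpha$ --- and stating and checking the generalisation of Lemma~\ref{bigwedge-is-model} to truncated index ordinals, needed at limit (and degenerate successor) stages where $Y_\alpha$ is empty. Everything else is bookkeeping with the relations $=_\alpha$, $\sqsubseteq_\alpha$ and $\leq$ already established for ${\cal I}_\mathsf{P}$.
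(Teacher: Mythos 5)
Your treatment of the nonempty stages is sound and matches the paper: picking any $M_\alpha\in Y_\alpha$ as the reference model, noting $Y_\alpha\subseteq(M_\alpha]_\alpha$, and invoking Lemma~\ref{sqcap-is-model}; likewise the final application of Lemma~\ref{bigwedge-is-model} to a full $\kappa$-indexed chain is legitimate. The gap is your claim that Lemma~\ref{bigwedge-is-model} generalises ``with an identical proof'' to chains indexed by an arbitrary ordinal $\delta\leq\kappa$. It does not: in the proof of that lemma, when $x_\infty$ (or $y_\infty$) is a value of order $\beta$, one descends to the chain member $M_\beta$ and uses that it is a model, which requires the chain to have a member at \emph{every} index $\beta<\kappa$, i.e.\ at every possible order of a truth value. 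For a chain of length $\delta<\kappa$ the needed index may not exist, and in fact the truncated statement is simply false for arbitrary chains of models. Take $\kappa>\omega+1$, the program consisting of the single clause $\mathsf{q}\leftarrow_o\mnot\mathsf{p}$, and for $n<\omega$ set $M_n(\mathsf{p})=T_{n+1}$, $M_n(\mathsf{q})=F_\omega$. Each $M_n$ is a Herbrand model, since $\lsem\mnot\mathsf{p}\rsem(M_n)=F_{n+2}\leq F_\omega$, and the family satisfies exactly your hypotheses: $M_n=_n M_m$ and $M_m\leq M_n$ for $n<m<\omega$. Yet $M_\infty=\bigwedge_{n<\omega}M_n$ has $M_\infty(\mathsf{p})=T_\omega$ and $M_\infty(\mathsf{q})=F_\omega$, while $\lsem\mnot\mathsf{p}\rsem(M_\infty)=F_{\omega+1}>F_\omega$, so $M_\infty$ is not a model.

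Consequently, at the first (limit) stage $\delta$ with $Y_\delta=\emptyset$ you cannot appeal to a generic truncated lemma; you must exploit the special shape of the earlier approximants, namely that each $M_\beta=\bigsqcap_\beta Y_\beta$ assigns, to every predicate and argument tuple, a value that is either $\leq F_\beta$ or $\geq T_{\beta+1}$. This is precisely what the paper's proof does: it observes that $\bigsqcap{\cal M}=\bigwedge_{\alpha<\delta}M_\alpha$ and runs a separate case analysis in which the problematic values $x_\infty=F_\delta$ and $x_\infty=0$ (exactly the configurations my counterexample exploits) are shown to be impossible for such chains, while $x_\infty=T_\delta$ is handled by forcing $y_\infty=T_\gamma$ with $\gamma<\delta$ and descending to the model $M_\gamma=\bigsqcap_\gamma Y_\gamma$. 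Your argument becomes correct if you replace the ``routine generalisation'' by a proof of the truncated statement for chains of this particular form --- but that is essentially the paper's second case, and it is where the real work of the theorem lies; as written, the step is unjustified.
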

\addtocounter{theorem}{-1}
\endgroup
\begin{proof}
We use the construction for $\bigsqcap {\cal M}$ described in the beginning of this
appendix. More specifically, we define sets ${\cal M}_\alpha, Y_\alpha \subseteq {\cal M}$ and
$M_\alpha \in {\cal I}_{\mathsf{P}}$. Let $Y_0 = {\cal M}$ and
$M_0 = \bigsqcap_0 Y_0$. For every $\alpha >0$,
let ${\cal M}_\alpha = \{ M \in {\cal M} : \forall\beta \leq \alpha\ M =_\alpha M_\alpha \}$ and
$Y_\alpha = \bigcap_{\beta < \alpha}{{\cal M}_\beta}$; moreover,
$M_\alpha = \bigsqcap_\alpha Y_\alpha$ if $Y_\alpha$ is nonempty and
$M_\alpha = \bigwedge_{\beta < \alpha} M_\beta$ if $Y_\alpha$ is empty.
Then, $\bigsqcap{\cal M} = \bigwedge_{\alpha < \kappa}{M_\alpha}$.
It is easy to see that $M_\alpha =_\alpha M_\beta$ and ${\cal M}_\beta \supseteq {\cal M}_\alpha$
for all $\beta < \alpha$.

We distinguish two cases.
First, consider the case when $Y_\alpha$ is nonempty for all $\alpha < \kappa$. Then,
$M_\alpha = \bigsqcap_\alpha Y_\alpha$ and by Lemma~\ref{sqcap-is-model} it follows
that $M_\alpha$ is a model of $\mathsf{P}$. Moreover, by Lemma~\ref{bigwedge-is-model}
we get that $M_\infty = \bigwedge_{\alpha < \kappa}{M_\alpha}$ is also a model of $\mathsf{P}$.

Consider now the case that there exists a least ordinal $\delta < \kappa$ such that $Y_\delta$ is empty.
It holds (see~\cite{tocl14}) that $M_\infty = \bigwedge_{\alpha < \delta}{M_\delta}$. Suppose $M_\infty$ is not
a model of $\mathsf{P}$. Then, there is a clause
$\mathsf{p} \leftarrow \mathsf{E}$, a Herbrand state $s$ and $d_i \in \lsem \rho_i \rsem_D$ such that
$\lsem \mathsf{E} \rsem(M_\infty)\,d_1\cdots d_n > M_\infty(\mathsf{p})\,d_1\cdots d_n$.
We define  $x_\alpha = M_\alpha(\mathsf{p})\,d_1\cdots d_n$, $x_\infty = M_\infty(\mathsf{p})\,d_1\cdots d_n$,
$y_\alpha = \lsem \mathsf{E} \rsem(M_\alpha)\,d_1\cdots d_n$, and
$y_\infty = \lsem \mathsf{E} \rsem(M_\infty)\,d_1\cdots d_n$ for all $\beta \leq \alpha$.
We distinguish cases based on the value of $x_\infty$.

Assume $x_\infty = T_\beta$ for some $\beta < \delta$. It follows by assumption that $y_\infty > x_\infty = T_\beta$.
Then, by Lemma~\ref{lub-is-equal-alpha} it holds that $M_\infty =_\beta M_\beta$ and we
get $x_\infty =_\beta x_\beta$ and therefore $x_\beta = T_\beta$.
Moreover, by $\alpha$-monotonicity we
get $\lsem \mathsf{E} \rsem(M_\infty)\,d_1\cdots d_n =_\beta \lsem \mathsf{E} \rsem(M_\beta)\,d_1\cdots d_n$
and it follows that $y_\infty =_\beta y_\beta$. Moreover, since $y_\infty > T_\beta$ it follows $y_\beta = y_\infty > T_\beta$
and $y_\beta > x_\beta$. Since $Y_\beta$ is not empty by assumption we have that $M_\beta = \bigsqcap_\beta Y_\beta$
and by Lemma~\ref{sqcap-is-model} we get that $M_\beta$ is a model of $\mathsf{P}$ (contradiction since $y_\beta > x_\beta$).

Assume $x_\infty = F_\beta$ for some $\beta < \delta$. Then, by Lemma~\ref{lub-is-equal-alpha} it
holds $M_\infty =_\beta M_\beta$ and therefore $x_\infty =_\beta x_\beta$. It follows $x_\beta = F_\beta$.
Moreover, since $Y_\beta$ is nonempty by assumption and by Lemma~\ref{sqcap-is-model} it follows that
$M_\beta = \bigsqcap_\beta Y_\beta$ is a model of $\mathsf{P}$ and thus $y_\beta \leq x_\beta = F_\beta$.
By $\alpha$-monotonicity we get $\lsem \mathsf{E} \rsem(M_\infty) =_\beta \lsem \mathsf{E} \rsem(M_\beta)$
and therefore $y_\infty =_\beta y_\beta \leq F_\beta$. It follows $y_\infty \leq F_\beta = x_\infty$ (contradiction
to the initial assumption $y_\infty  > x_\infty$).

Assume $x_\infty = T_\delta$. By assumption we have $y_\infty > x_\infty = T_\delta$. Then, let $y_\infty = T_\gamma$
for some $\gamma < \delta$. By Lemma~\ref{lub-is-equal-alpha} it holds $M_\infty =_\gamma M_\gamma$ and by
$\alpha$-monotonicity it follows $\lsem \mathsf{E} \rsem(M_\infty) =_\gamma \lsem \mathsf{E} \rsem(M_\gamma)$
and thus $y_\infty =_\gamma y_\gamma$. It follows that $y_\gamma = T_\gamma$. Moreover, since $\gamma < \delta$
we know by assumption that $Y_\gamma$ is nonempty and therefore $M_\gamma = \bigsqcap Y_\gamma$ and by Lemma~\ref{sqcap-is-model}
$M_\gamma$ is a model of $\mathsf{P}$. It follows $T_\gamma = y_\gamma \leq x_\gamma$, that is, $x_\gamma = T_\beta$ for some
$\beta \leq \gamma < \delta$. Moreover, since $x_\infty =_\gamma x_\gamma$ it follows $x_\infty = T_\beta$ that is a contradiction
(since by assumption $x_\infty = T_\delta$).

Assume $x_\infty = F_\delta$. This case is not possible. Recall that $Y_\alpha$ is not empty for all $\alpha < \delta$
and thus $M_\alpha = \bigsqcap Y_\alpha$. By the definition of $\bigsqcap_\alpha$ we observe that either
$\bigsqcap_\alpha Y_\alpha \leq F_\alpha$ or $\bigsqcap_\alpha Y_\alpha \geq T_{\alpha + 1}$. Then, since $M_\infty = \bigwedge_{\alpha < \delta}{M_\alpha}$ it is not possible to have $x_\infty = F_\delta$.

Assume $x_\infty = 0$. This case does not arise. Again, $Y_\alpha$ is not empty for all $\alpha < \delta$ and thus
$M_\alpha = \bigsqcap_\alpha Y_\alpha$. Moreover, by definition of $\bigsqcap_\alpha$, $x_\alpha \neq 0$ for all $\alpha < \delta$.
Moreover, since $M_\infty = \bigwedge_{\alpha < \delta}{M_\alpha}$ and since $\delta < \kappa$ it follows that the
limit can be at most $T_\delta$.
\end{proof}

\def\leastmodel{\ref{least-model}}
\def\denotlemma{\ref{denotation-lemma}}
\def\tpmonotonic{\ref{tp-a-monotonic}}
\section{Proofs of Lemmas~\denotlemma{},~\tpmonotonic{} and Theorem~\leastmodel{}}

\begingroup
\def\thelemma{\ref{denotation-lemma}}
\begin{lemma}
Let $\mathsf{P}$ be a program.
For every predicate constant $\mathsf{p}:\pi$ in $\mathsf{P}$ and $I \in {\cal I}_\mathsf{P}$, $T_P(I)(\mathsf{p}) \in \lsem \pi \rsem_{U_\mathsf{P}}$.
\end{lemma}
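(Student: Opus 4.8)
The plan is to read off the statement from two facts that are already in place: the well-definedness of the denotational semantics (Lemma~\ref{expressions-well-defined}) and the fact that every predicate type denotes a complete lattice (Lemma~\ref{pi-is-model}). Unfolding the definition of the immediate consequence operator, $T_\mathsf{P}(I)(\mathsf{p}) = \bigvee\{\lsem \mathsf{E} \rsem(I) : (\mathsf{p} \leftarrow_\pi \mathsf{E}) \in \mathsf{P}\}$. First I would note that, by the definition of a program clause, the body $\mathsf{E}$ of any clause $\mathsf{p} \leftarrow_\pi \mathsf{E}$ of $\mathsf{P}$ is a closed expression of type $\pi$; since a predicate type is in particular an argument type, Lemma~\ref{expressions-well-defined} (with $D = U_\mathsf{P}$, and using that the meaning of a closed expression is independent of the state) gives $\lsem \mathsf{E} \rsem(I) \in \lsem \pi \rsem_{U_\mathsf{P}}$. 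Thus the family $\{\lsem \mathsf{E} \rsem(I) : (\mathsf{p} \leftarrow_\pi \mathsf{E}) \in \mathsf{P}\}$ is a subset of $\lsem \pi \rsem_{U_\mathsf{P}}$.

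Next I would invoke Lemma~\ref{pi-is-model}: $(\lsem \pi \rsem_{U_\mathsf{P}}, \leq_\pi)$ is a complete lattice, hence the least upper bound of the above subset exists and belongs to $\lsem \pi \rsem_{U_\mathsf{P}}$ --- and in the degenerate case in which $\mathsf{p}$ has no defining clause the join is that of the empty set, namely the least element $\perp$ of $\lsem \pi \rsem_{U_\mathsf{P}}$, which again lies in $\lsem \pi \rsem_{U_\mathsf{P}}$. This would finish the proof, except that one must make sure the symbol $\bigvee$ occurring in the definition of $T_\mathsf{P}$ really denotes this lattice join. For $\pi = o$ and for $\pi = \iota \rightarrow \pi'$ this is immediate, but for a predicate type $\pi = \pi_1 \rightarrow \pi_2$ one has $\lsem \pi \rsem_{U_\mathsf{P}} = [\lsem \pi_1 \rsem_{U_\mathsf{P}} \stackrel{m}{\rightarrow} \lsem \pi_2 \rsem_{U_\mathsf{P}}]$, which contains only the functions that are $\alpha$-monotonic for every $\alpha < \kappa$, while $\bigvee$ is most naturally taken pointwise in the ambient space $\lsem \pi_1 \rsem_{U_\mathsf{P}} \rightarrow \lsem \pi_2 \rsem_{U_\mathsf{P}}$.

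The main (and essentially only) obstacle is thus reconciling these two joins, and the tool for it is Proposition~\ref{lub-of-a-monotonic}: for each fixed $\alpha < \kappa$ the pointwise join of a family of $\alpha$-monotonic functions is again $\alpha$-monotonic, so the pointwise join of functions that are $\alpha$-monotonic for all $\alpha < \kappa$ is again $\alpha$-monotonic for all $\alpha < \kappa$ and therefore lies in $[\lsem \pi_1 \rsem_{U_\mathsf{P}} \stackrel{m}{\rightarrow} \lsem \pi_2 \rsem_{U_\mathsf{P}}]$; consequently the complete-lattice join of $\lsem \pi \rsem_{U_\mathsf{P}}$ agrees with the pointwise join, exactly as is already exploited in the proof of Lemma~\ref{pi-is-model}. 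If one wishes to be completely explicit, these remarks can be packaged as a short structural induction on $\pi$, but no new idea is needed beyond Lemmas~\ref{expressions-well-defined} and~\ref{pi-is-model} and Proposition~\ref{lub-of-a-monotonic}.
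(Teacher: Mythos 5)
Your proposal is correct and follows essentially the same route as the paper, whose proof is just the one-line appeal to Lemma~\ref{pi-is-model} (completeness of $\lsem \pi \rsem_{U_\mathsf{P}}$); you merely make explicit the supporting steps the paper leaves implicit, namely Lemma~\ref{expressions-well-defined} to place each $\lsem \mathsf{E} \rsem(I)$ in $\lsem \pi \rsem_{U_\mathsf{P}}$ and Proposition~\ref{lub-of-a-monotonic} to reconcile the pointwise join with the join in $[\lsem \pi_1 \rsem_{U_\mathsf{P}} \stackrel{m}{\rightarrow} \lsem \pi_2 \rsem_{U_\mathsf{P}}]$. Those added details (including the empty-clause case) are accurate and harmless.
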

\addtocounter{lemma}{-1}
\endgroup
\begin{proof}
It follows from the fact that $\lsem \pi \rsem_{U_\mathsf{P}}$ is a complete lattice (Lemma~\ref{pi-is-model}).
\end{proof}

\begingroup
\def\thelemma{\ref{tp-a-monotonic}}
\begin{lemma}
Let $\mathsf{P}$ be a program. Then, $T_\mathsf{P}$ is $\alpha$-monotonic for all $\alpha < \kappa$.
\end{lemma}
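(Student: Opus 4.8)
The plan is to reduce the statement, through the pointwise definitions involved, to the $\alpha$-monotonicity of the semantics (Lemma~\ref{monotonicity-of-semantics}) together with Axiom~\ref{axiom6} of basic models. First I would fix an ordinal $\alpha < \kappa$ and Herbrand interpretations $I, J \in {\cal I}_\mathsf{P}$ with $I \sqsubseteq_\alpha J$. By the definition of $\sqsubseteq_\alpha$ on ${\cal I}_\mathsf{P}$, it suffices to show $T_\mathsf{P}(I)(\mathsf{p}) \sqsubseteq_\alpha T_\mathsf{P}(J)(\mathsf{p})$ for every predicate constant $\mathsf{p}:\pi$ of $\mathsf{P}$; here both sides lie in $\lsem \pi \rsem_{U_\mathsf{P}}$ by Lemma~\ref{denotation-lemma}, and $\lsem \pi \rsem_{U_\mathsf{P}}$ is a basic model by Lemma~\ref{pi-is-model}, so Axiom~\ref{axiom6} is available there.

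Next I would fix such a $\mathsf{p}:\pi$ and let $\{\mathsf{E}_j : j \in C_\mathsf{p}\}$ enumerate the bodies of the clauses of $\mathsf{P}$ with head $\mathsf{p}$, so that $T_\mathsf{P}(I)(\mathsf{p}) = \bigvee\{\lsem \mathsf{E}_j \rsem(I) : j \in C_\mathsf{p}\}$ and likewise for $J$. Since each $\mathsf{E}_j$ is a closed expression of type $\pi$, Lemma~\ref{monotonicity-of-semantics} gives $\lsem \mathsf{E}_j \rsem(I) \sqsubseteq_\alpha \lsem \mathsf{E}_j \rsem(J)$ for every $j \in C_\mathsf{p}$. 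Applying Axiom~\ref{axiom6} in $\lsem \pi \rsem_{U_\mathsf{P}}$ with $x_j = \lsem \mathsf{E}_j \rsem(I)$ and $y_j = \lsem \mathsf{E}_j \rsem(J)$ then yields $\bigvee\{x_j : j \in C_\mathsf{p}\} \sqsubseteq_\alpha \bigvee\{y_j : j \in C_\mathsf{p}\}$, that is, $T_\mathsf{P}(I)(\mathsf{p}) \sqsubseteq_\alpha T_\mathsf{P}(J)(\mathsf{p})$. The case in which $\mathsf{p}$ has no defining clause is immediate, since both joins then equal the least element $\perp$ and $\perp \sqsubseteq_\alpha \perp$.

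The argument is essentially bookkeeping: the genuine work has already been discharged in Lemma~\ref{monotonicity-of-semantics}, so I do not expect a serious obstacle here. The only points that warrant care are that the two joins defining $T_\mathsf{P}(I)(\mathsf{p})$ and $T_\mathsf{P}(J)(\mathsf{p})$ are indexed by the \emph{same} set $C_\mathsf{p}$ of clauses, so Axiom~\ref{axiom6} applies directly with a common index set; and that this axiom is indeed usable because, by Lemma~\ref{pi-is-model}, every $\lsem \pi \rsem_{U_\mathsf{P}}$ is a basic model, and the relations $\sqsubseteq_\alpha$ on ${\cal I}_\mathsf{P}$ and on the domains $\lsem \pi \rsem_{U_\mathsf{P}}$ are all defined pointwise and therefore cohere.
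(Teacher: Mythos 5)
Your proof is correct and takes essentially the same route as the paper: the paper's one-line proof cites Lemma~\ref{monotonicity-of-semantics} together with Proposition~\ref{lub-of-a-monotonic} (a join of $\alpha$-monotonic functions is $\alpha$-monotonic), and your pointwise application of Axiom~\ref{axiom6} in $\lsem \pi \rsem_{U_\mathsf{P}}$ is precisely the content of that proposition unfolded. Nothing further is needed.
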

\addtocounter{lemma}{-1}
\endgroup
\begin{proof}
Follows directly from Lemma~\ref{monotonicity-of-semantics} and Proposition~\ref{lub-of-a-monotonic}.
\end{proof}

\begin{lemma}\label{TP-model-leq-model}
Let $\mathsf{P}$ be a program. Then, $M \in {\cal I}_\mathsf{P}$ is a model
of $\mathsf{P}$ if and only if $T_\mathsf{P}(M) \leq_{{\cal I}_\mathsf{P}} M$.
\end{lemma}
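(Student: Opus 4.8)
The plan is to unwind the definitions of \textit{model} and of the immediate consequence operator $T_\mathsf{P}$, and to exploit the fact (Lemma~\ref{pi-is-model}) that each $\lsem \pi \rsem_{U_\mathsf{P}}$ is a complete lattice with respect to $\leq_\pi$, so that the least upper bound appearing in the definition of $T_\mathsf{P}(M)(\mathsf{p})$ genuinely exists (this is also what Lemma~\ref{denotation-lemma} records). The argument splits into the two implications; neither is deep.

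For the forward direction, I would assume $M$ is a model of $\mathsf{P}$ and fix an arbitrary predicate constant $\mathsf{p}:\pi$ of $\mathsf{P}$. By the definition of model, for every clause $(\mathsf{p} \leftarrow_\pi \mathsf{E}) \in \mathsf{P}$ we have $\lsem \mathsf{E} \rsem(M) \leq_\pi M(\mathsf{p})$; hence $M(\mathsf{p})$ is an upper bound of the set $S_\mathsf{p} = \{\lsem \mathsf{E} \rsem(M) : (\mathsf{p} \leftarrow_\pi \mathsf{E}) \in \mathsf{P}\}$ in the complete lattice $(\lsem \pi \rsem_{U_\mathsf{P}}, \leq_\pi)$. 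Therefore its least upper bound $\bigvee S_\mathsf{p} = T_\mathsf{P}(M)(\mathsf{p})$ satisfies $T_\mathsf{P}(M)(\mathsf{p}) \leq_\pi M(\mathsf{p})$; when $S_\mathsf{p}$ is empty, $\bigvee S_\mathsf{p}$ is the least element of $\lsem \pi \rsem_{U_\mathsf{P}}$ and the inequality is immediate. Since $\mathsf{p}$ was arbitrary, $T_\mathsf{P}(M) \leq_{{\cal I}_\mathsf{P}} M$ by the definition of $\leq_{{\cal I}_\mathsf{P}}$.

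For the converse, I would assume $T_\mathsf{P}(M) \leq_{{\cal I}_\mathsf{P}} M$ and take an arbitrary clause $(\mathsf{p} \leftarrow_\pi \mathsf{E})$ of $\mathsf{P}$. Since $\lsem \mathsf{E} \rsem(M)$ is one of the elements of $S_\mathsf{p}$, it lies below the least upper bound: $\lsem \mathsf{E} \rsem(M) \leq_\pi \bigvee S_\mathsf{p} = T_\mathsf{P}(M)(\mathsf{p})$. Combining this with the hypothesis $T_\mathsf{P}(M)(\mathsf{p}) \leq_\pi M(\mathsf{p})$ and transitivity of $\leq_\pi$ yields $\lsem \mathsf{E} \rsem(M) \leq_\pi M(\mathsf{p})$, which is exactly the model condition for that clause. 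As the clause was arbitrary, $M$ is a model of $\mathsf{P}$.

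There is no genuine obstacle here: the statement is the higher-order analogue of the textbook equivalence between models and pre-fixed points of the immediate consequence operator, and the only points needing (minor) care are the well-definedness of the least upper bounds involved, handled by Lemmas~\ref{pi-is-model} and~\ref{denotation-lemma}, and the degenerate case of a predicate constant with no defining clause.
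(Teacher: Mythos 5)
Your proposal is correct and follows essentially the same route as the paper's proof, which establishes the equivalence by a chain of iffs: $M$ is a model iff $\lsem \mathsf{E} \rsem(M) \leq_\pi M(\mathsf{p})$ for every clause, iff the least upper bound of the clause bodies for each $\mathsf{p}$ lies below $M(\mathsf{p})$, iff $T_\mathsf{P}(M) \leq_{{\cal I}_\mathsf{P}} M$. You merely spell out the two directions separately and note the empty-clause and well-definedness points explicitly, which the paper leaves implicit.
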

\begin{proof}
An interpretation $I \in {\cal I}_\mathsf{P}$ is a model of $\mathsf{P}$
iff $\lsem \mathsf{E} \rsem(I) \leq_\pi I(\mathsf{p})$ for all
clauses $\mathsf{p} \leftarrow_\pi \mathsf{E}$ in $\mathsf{P}$ iff
$\bigvee_{(\mathsf{p}\leftarrow\mathsf{E}) \in \mathsf{P}}{\lsem \mathsf{E} \rsem(I)} \leq_{{\cal I}_\mathsf{P}} I(\mathsf{p})$ iff $T_\mathsf{P}(I) \leq_{{\cal I}_\mathsf{P}} I$.
\end{proof}

\begin{proposition}\label{axiom5-for-pi}
Let $D$ be a nonempty set, $\pi$ be a predicate type and $x,y \in \lsem \pi \rsem_D$.
If $x \leq_\pi y$ and $x =_\beta y$ for all $\beta < \alpha$ then $x \sqsubseteq_\alpha y$.
\end{proposition}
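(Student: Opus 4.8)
The plan is to proceed by structural induction on the predicate type $\pi$, so that the whole statement reduces to the base case $\pi = o$, where it becomes a short case analysis on the truth-value structure $V$.

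For the base case, let $x,y \in V$ with $x \leq_o y$ and $x =_\beta y$ for all $\beta < \alpha$. If $x = y$ we are done, since $\sqsubseteq_\alpha$ is reflexive on $V$ (immediate from its definition: either $order(v) < \alpha$, or $order(v) = \alpha$ so $v \in \{F_\alpha,T_\alpha\}$, or $order(v) > \alpha$). So assume $x \neq y$. By the characterization recorded right after the definition of $\sqsubseteq_\alpha$ on $V$, $u =_\beta w$ holds iff $u = w$ or $order(u), order(w) > \beta$; hence $x =_\beta y$ for all $\beta < \alpha$ together with $x \neq y$ forces $order(x) \geq \alpha$ and $order(y) \geq \alpha$. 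If both of these orders are strictly greater than $\alpha$, then $x \sqsubseteq_\alpha y$ by the third clause of the definition. If $order(x) = \alpha$, then $x$ is $F_\alpha$ or $T_\alpha$; when $x = F_\alpha$ the second clause gives $F_\alpha \sqsubseteq_\alpha y$ since $order(y) \geq \alpha$, and $x = T_\alpha$ cannot happen because $T_\alpha$ is $\leq_o$-greatest among all truth values of order at least $\alpha$, so $x \leq_o y$ and $order(y) \geq \alpha$ would give $y = T_\alpha = x$. Symmetrically, if $order(y) = \alpha$ then $y = T_\alpha$ gives $x \sqsubseteq_\alpha T_\alpha$ by the second clause, and $y = F_\alpha$ is impossible since $F_\alpha$ is $\leq_o$-least among truth values of order at least $\alpha$, which would force $x = F_\alpha = y$.

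For the inductive step, write $\pi = \rho \to \pi'$ with $\rho$ either $\iota$ or a predicate type. In both situations the orderings $\leq_\pi$ and $\sqsubseteq_\alpha$ on $\lsem \pi \rsem_D$ are defined pointwise over $\lsem \rho \rsem_D$, and consequently so is $=_\beta$ (it is the conjunction of $\sqsubseteq_\beta$ and its converse, each pointwise). Thus $x \leq_\pi y$ and $x =_\beta y$ for all $\beta < \alpha$ give, for every $d \in \lsem \rho \rsem_D$, that $x(d) \leq_{\pi'} y(d)$ and $x(d) =_\beta y(d)$ for all $\beta < \alpha$; the induction hypothesis for $\pi'$ then yields $x(d) \sqsubseteq_\alpha y(d)$, and pointwiseness of $\sqsubseteq_\alpha$ gives $x \sqsubseteq_\alpha y$.

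The only delicate point is the base case, and within it the sole thing to check is that the boundary sub-cases $x = T_\alpha$ and $y = F_\alpha$ do not occur: both are excluded because $T_\alpha$ (respectively $F_\alpha$) is extremal in $(V,\leq_o)$ among truth values whose order is at least $\alpha$, so compatibility with $\leq_o$ would collapse $x$ and $y$, contradicting $x \neq y$. The rest is a direct unfolding of the pointwise definitions.
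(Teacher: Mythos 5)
Your proof is correct and follows essentially the same route as the paper's: structural induction on $\pi$, with the base case handled by a case analysis on the orders of $x$ and $y$ in $V$ (using the extremality of $T_\alpha$ and $F_\alpha$ among values of order at least $\alpha$, exactly as in the paper's ``if $x = T_\alpha$ then $T_\alpha \leq y$ and therefore $y = T_\alpha$'' step), and the inductive step reduced pointwise over $\lsem \rho \rsem_D$. Your write-up is merely a bit more explicit than the paper's about why the boundary sub-cases cannot occur.
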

\begin{proof}
The proof is by structural induction on $\pi$.

\vspace{0.1cm}
\noindent{\em Induction Basis:}
If $x =_\beta y$ for all $\beta < \alpha$ then either $x = y$ or
$order(x),order(y) \geq \alpha$. If $x = y$ then $x \sqsubseteq_\alpha y$.
Suppose $x \neq y$. If $order(x), order(y) > \alpha$ then $x =_\alpha y$.
If $x = F_\alpha$ then clearly $x \sqsubseteq_\alpha y$.
If $x = T_\alpha$ then $T_\alpha \leq y$ and therefore $y = T_\alpha$.
The case analysis for $y$ is similar.

\vspace{0.1cm}
\noindent{\em Induction Step:}
Assume that the statement holds for $\pi$.
Let $f, g \in \lsem \rho \rightarrow \pi \rsem_D$ and $\alpha < \kappa$.
For all $x \in \lsem \rho \rsem_D$ and $\beta < \alpha$, $f(x) \leq g(x)$ and $f(x) =_\beta g(x)$.
It follows that $f(x) \sqsubseteq_\alpha g(x)$. Therefore, $f \sqsubseteq_\alpha g$.
\end{proof}

\begin{proposition}\label{axiom5-for-interpretations}
Let $\mathsf{P}$ be a program and $I, J$ be Herbrand interpretations of $P$.
If $I \leq_{{\cal I}_\mathsf{P}} J$ and $I =_\beta J$ for all $\beta < \alpha$ then $I \sqsubseteq_\alpha J$.
\end{proposition}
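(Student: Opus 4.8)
The plan is to reduce the statement to the corresponding pointwise fact on predicate types, namely Proposition~\ref{axiom5-for-pi}, which has already been established. Recall that both $\leq_{{\cal I}_\mathsf{P}}$ and $\sqsubseteq_\alpha$ on ${\cal I}_\mathsf{P}$ are defined coordinatewise over the predicate constants of $\mathsf{P}$: $I \leq_{{\cal I}_\mathsf{P}} J$ iff $I(\mathsf{p}) \leq_\pi J(\mathsf{p})$ for every $\mathsf{p} : \pi$ in $\mathsf{P}$, and likewise $I \sqsubseteq_\beta J$ iff $I(\mathsf{p}) \sqsubseteq_\beta J(\mathsf{p})$ for every such $\mathsf{p}$.

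First I would fix an arbitrary predicate constant $\mathsf{p} : \pi$ of $\mathsf{P}$ and aim to show $I(\mathsf{p}) \sqsubseteq_\alpha J(\mathsf{p})$. From $I \leq_{{\cal I}_\mathsf{P}} J$ I obtain $I(\mathsf{p}) \leq_\pi J(\mathsf{p})$. From the hypothesis $I =_\beta J$ for all $\beta < \alpha$, together with the coordinatewise definition of $=_\beta$ on ${\cal I}_\mathsf{P}$ (which follows from that of $\sqsubseteq_\beta$), I obtain $I(\mathsf{p}) =_\beta J(\mathsf{p})$ for all $\beta < \alpha$. Since $\lsem \pi \rsem_{U_\mathsf{P}}$ is a basic model (Lemma~\ref{pi-is-model}), Proposition~\ref{axiom5-for-pi} applies with $x = I(\mathsf{p})$ and $y = J(\mathsf{p})$ and yields $I(\mathsf{p}) \sqsubseteq_\alpha J(\mathsf{p})$.

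Finally, since $\mathsf{p}$ was an arbitrary predicate constant of $\mathsf{P}$ (of arbitrary type $\pi$), the coordinatewise definition of $\sqsubseteq_\alpha$ on ${\cal I}_\mathsf{P}$ gives $I \sqsubseteq_\alpha J$, completing the argument. I do not expect any genuine obstacle: the entire content is the passage between the interpretation-level relations and their pointwise components, plus an invocation of the already-proved Proposition~\ref{axiom5-for-pi}. The only point needing a moment's care is making explicit that $I =_\beta J$ unfolds to $I(\mathsf{p}) =_\beta J(\mathsf{p})$ for every $\mathsf{p}$, which is immediate from the definition of $\sqsubseteq_\beta$ on ${\cal I}_\mathsf{P}$ and the definition of $=_\beta$ in terms of $\sqsubseteq_\beta$.
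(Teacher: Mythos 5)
Your proposal is correct and coincides with the paper's own proof: both unfold $\leq_{{\cal I}_\mathsf{P}}$ and $=_\beta$ pointwise over the predicate constants and then apply Proposition~\ref{axiom5-for-pi} to each $I(\mathsf{p}), J(\mathsf{p})$ before reassembling $I \sqsubseteq_\alpha J$ coordinatewise. No gaps.
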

\begin{proof}
Let $I,J \in {\cal I}_\mathsf{P}$ and $\alpha < \kappa$.
For all predicate constants $\mathsf{p}$ and $\beta < \alpha$, $I(\mathsf{p}) \leq J(\mathsf{p})$ and $I(\mathsf{p}) =_\beta J(\mathsf{p})$.
It follows by Proposition~\ref{axiom5-for-pi} that $I(\mathsf{p}) \sqsubseteq_\alpha J(\mathsf{p})$
and therefore, $I \sqsubseteq_\alpha J$.
\end{proof}

\begin{lemma}\label{if-model-then-prefixpoint}
Let $\mathsf{P}$ be a program. If $M$ is a model of $\mathsf{P}$ then $T_\mathsf{P}(M) \sqsubseteq M$.
\end{lemma}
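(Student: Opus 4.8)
The plan is to reduce the relation $\sqsubseteq$ to the combination of the underlying lattice order $\leq_{{\cal I}_\mathsf{P}}$ and the stagewise preorders $\sqsubseteq_\alpha$, using results already in hand. First I would apply Lemma~\ref{TP-model-leq-model}: since $M$ is a model of $\mathsf{P}$, it gives $T_\mathsf{P}(M) \leq_{{\cal I}_\mathsf{P}} M$. If it happens that $T_\mathsf{P}(M) = M$, then $T_\mathsf{P}(M) \sqsubseteq M$ holds trivially by the definition of $\sqsubseteq$, so from now on assume $T_\mathsf{P}(M) \neq M$.

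Next I would locate the first stage at which the two interpretations differ. By the contrapositive of Axiom~\ref{axiom2}, from $T_\mathsf{P}(M) \neq M$ it follows that $T_\mathsf{P}(M) =_\alpha M$ fails for at least one $\alpha < \kappa$; let $\alpha$ be the least such ordinal (this is where the only bit of care is needed, and Axiom~\ref{axiom2} is exactly what makes this set of ordinals nonempty). By minimality, $T_\mathsf{P}(M) =_\beta M$ holds for every $\beta < \alpha$. Combining this with $T_\mathsf{P}(M) \leq_{{\cal I}_\mathsf{P}} M$ and invoking Proposition~\ref{axiom5-for-interpretations}, I obtain $T_\mathsf{P}(M) \sqsubseteq_\alpha M$.

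Finally, since $T_\mathsf{P}(M) =_\alpha M$ does not hold, the relation $T_\mathsf{P}(M) \sqsubseteq_\alpha M$ is strict, i.e.\ $T_\mathsf{P}(M) \sqsubset_\alpha M$; hence $T_\mathsf{P}(M) \sqsubset M$, and therefore $T_\mathsf{P}(M) \sqsubseteq M$, as required. I do not anticipate a genuine obstacle here once Lemma~\ref{TP-model-leq-model} and Proposition~\ref{axiom5-for-interpretations} are available; the argument is essentially bookkeeping, the content being the intuition that $T_\mathsf{P}(M)$ and $M$ coincide at every stage below the first stage at which they disagree, and at that stage the plain inequality $\leq_{{\cal I}_\mathsf{P}}$ already orients them with respect to $\sqsubseteq_\alpha$.
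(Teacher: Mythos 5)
Your proposal is correct and follows essentially the same route as the paper's proof: invoke Lemma~\ref{TP-model-leq-model} to get $T_\mathsf{P}(M) \leq_{{\cal I}_\mathsf{P}} M$, dispose of the equality case, take the least ordinal $\alpha$ at which $=_\alpha$ fails (justified by Axiom~\ref{axiom2}, which the paper leaves implicit), and apply Proposition~\ref{axiom5-for-interpretations} to conclude $T_\mathsf{P}(M) \sqsubset_\alpha M$ and hence $T_\mathsf{P}(M) \sqsubseteq M$. There is no gap and no substantive difference from the paper's argument.
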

\begin{proof}
It follows from Lemma~\ref{TP-model-leq-model} that if $M$ is a Herbrand model of $\mathsf{P}$
then $T_\mathsf{P}(M) \leq_{{\cal I}_\mathsf{P}} M$. If $T_\mathsf{P}(M) = M$ then the statement is immediate.
Suppose $T_\mathsf{P}(M) <_{{\cal I}_\mathsf{P}} M$ and let $\alpha$ denote the least ordinal such
that $T_\mathsf{P}(M) =_\alpha M$ does not hold. Then, $T_\mathsf{P}(M) =_\beta M$ for all $\beta < \alpha$.
Since $T_\mathsf{P}(M) <_{{\cal I}_\mathsf{P}} M$, by Proposition~\ref{axiom5-for-interpretations}
it follows that $T_\mathsf{P}(M) \sqsubseteq_\alpha M$. Since $T_\mathsf{P}(M) =_\alpha M$ does not hold,
it follows that $T_\mathsf{P}(M) \sqsubset_\alpha M$. Therefore $T_\mathsf{P}(M) \sqsubseteq M$.
\end{proof}

\begingroup
\def\thetheorem{\ref{least-model}}
\begin{theorem}[Least Fixed Point Theorem]
Let $\mathsf{P}$ be a program and let ${\cal M}$ be the set of all its Herbrand models.
Then, $T_\mathsf{P}$ has a least fixed point $M_\mathsf{P}$. Moreover, $M_\mathsf{P} = \bigsqcap{\cal{M}}$.
\end{theorem}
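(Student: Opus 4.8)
The plan is to assemble the statement from the machinery already in place, letting the non-monotonic fixed-point theorem do the heavy lifting. First I would invoke Lemma~\ref{interpretations-lattice}, which says that ${\cal I}_\mathsf{P}$ is a basic model and a complete lattice with respect to $\leq_{{\cal I}_\mathsf{P}}$; by Lemma~\ref{basic-model-lattice} it is then also a complete lattice with respect to $\sqsubseteq$, so arbitrary $\sqsubseteq$-greatest lower bounds (in particular $\bigsqcap{\cal M}$) exist and $\sqsubseteq$ is a partial order. Next, Lemma~\ref{tp-a-monotonic} gives $T_\mathsf{P}\in[{\cal I}_\mathsf{P}\stackrel{m}{\rightarrow}{\cal I}_\mathsf{P}]$, so Theorem~\ref{a-monotonic-has-least-fixpoint} applies and produces an element $M_\mathsf{P}$ that is simultaneously the $\sqsubseteq$-least pre-fixed point and the $\sqsubseteq$-least fixed point of $T_\mathsf{P}$. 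This settles the first assertion of the theorem.

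For the identity $M_\mathsf{P}=\bigsqcap{\cal M}$ I would argue by antisymmetry of $\sqsubseteq$. Since $M_\mathsf{P}$ is a fixed point, $T_\mathsf{P}(M_\mathsf{P})=M_\mathsf{P}\leq_{{\cal I}_\mathsf{P}}M_\mathsf{P}$, so Lemma~\ref{TP-model-leq-model} shows $M_\mathsf{P}$ is a Herbrand model; hence $M_\mathsf{P}\in{\cal M}$, which in particular makes ${\cal M}$ nonempty, and $\bigsqcap{\cal M}\sqsubseteq M_\mathsf{P}$ by definition of the $\sqsubseteq$-greatest lower bound. Conversely, Theorem~\ref{model-intersection} (now applicable, as ${\cal M}\neq\emptyset$) gives that $\bigsqcap{\cal M}$ is itself a Herbrand model of $\mathsf{P}$, so Lemma~\ref{if-model-then-prefixpoint} yields $T_\mathsf{P}(\bigsqcap{\cal M})\sqsubseteq\bigsqcap{\cal M}$, i.e.\ $\bigsqcap{\cal M}$ is a pre-fixed point of $T_\mathsf{P}$; minimality of $M_\mathsf{P}$ among pre-fixed points then gives $M_\mathsf{P}\sqsubseteq\bigsqcap{\cal M}$. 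Combining the two inequalities, $M_\mathsf{P}=\bigsqcap{\cal M}$.

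There is no genuine obstacle remaining at this level, since every substantive step has been pushed into the cited lemmas. The one point that deserves care is the use of Lemma~\ref{if-model-then-prefixpoint}: it is precisely what converts a $\leq_{{\cal I}_\mathsf{P}}$-model into a $\sqsubseteq$-pre-fixed point, and its proof rests on Proposition~\ref{axiom5-for-interpretations} (that $\leq$-domination together with stage-wise equality below $\alpha$ forces $\sqsubseteq_\alpha$). This is exactly the place where a naive ``$T_\mathsf{P}$ is $\sqsubseteq$-monotone'' argument would break down, so in writing the proof I would be explicit that the least \emph{pre-fixed point} (not merely least fixed point) is what connects $M_\mathsf{P}$ to the model-intersection $\bigsqcap{\cal M}$. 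Everything else is routine bookkeeping with the orderings $\leq_{{\cal I}_\mathsf{P}}$ and $\sqsubseteq$.
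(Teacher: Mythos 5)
Your proposal is correct and follows essentially the same route as the paper's own proof: obtain $M_\mathsf{P}$ as the $\sqsubseteq$-least (pre-)fixed point via Lemma~\ref{tp-a-monotonic} and Theorem~\ref{a-monotonic-has-least-fixpoint}, use Lemma~\ref{TP-model-leq-model} to place $M_\mathsf{P}$ in ${\cal M}$, and combine Theorem~\ref{model-intersection} with Lemma~\ref{if-model-then-prefixpoint} to make $\bigsqcap{\cal M}$ a $\sqsubseteq$-pre-fixed point, concluding by minimality. Your explicit remark that $M_\mathsf{P}\in{\cal M}$ guarantees ${\cal M}\neq\emptyset$ before applying Theorem~\ref{model-intersection} is a small point the paper leaves implicit, but otherwise the arguments coincide.
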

\addtocounter{theorem}{-1}
\endgroup
\begin{proof}
It follows from Lemma~\ref{tp-a-monotonic} and Theorem~\ref{a-monotonic-has-least-fixpoint}
that $T_\mathsf{P}$ has a least pre-fixed point with respect to $\sqsubseteq$ that is also a least fixed point.
Let $M_\mathsf{P}$ be that least fixed point of $T_\mathsf{P}$, i.e., $T_\mathsf{P}(M_\mathsf{P}) = M_\mathsf{P}$.
It is clear from Lemma~\ref{TP-model-leq-model} that $M_\mathsf{P}$ is a model of $\mathsf{P}$, i.e.,
$M_\mathsf{P} \in {\cal M}$. Then, it follows $\bigsqcap{\cal{M}} \sqsubseteq M_\mathsf{P}$. Moreover,
from Theorem~\ref{model-intersection} it is implied that $\bigsqcap{\cal{M}}$ is a model and thus from
Lemma~\ref{if-model-then-prefixpoint}, $\bigsqcap{\cal{M}}$ is a pre-fixed point of
$T_\mathsf{P}$ with respect to $\sqsubseteq$. Since $M_\mathsf{P}$ is the least pre-fixed point of $\mathsf{P}$,
$M_\mathsf{P} \sqsubseteq \bigsqcap{\cal M}$
and thus $M_\mathsf{P} = \bigsqcap{\cal M}$.
\end{proof}

\end{document}